\newcommand{\Rmnum}[1]{\expandafter\@slowromancap\romannumeral#1@}
\newtheorem{theorem}{Theorem}
\newtheorem{definition}{Definition}
\newtheorem{assumption}{Assumption}
\newtheorem{lemma}{Lemma}
\newtheorem{claim}{Claim}
\newcommand{\qed}{\hfill \ensuremath{\Box}}
\DeclareMathOperator*{\argmin}{arg\,min}
\begin{document}

\title{Asynchronous Stochastic Approximation Based 
Learning Algorithms for As-You-Go Deployment of Wireless Relay Networks along a Line\thanks{The research 
reported in this paper was supported by a Department of Electronics and Information 
Technology (DeitY, India) and NSF (USA) funded project on Wireless 
Sensor Networks for Protecting Wildlife and Humans, by an Indo-French Centre for 
Promotion of Advance Research (IFCPAR) funded project, and by the Department of Science and Technology (DST, India), 
 via a  J.C. Bose Fellowship.}
\thanks{Arpan Chattopadhyay is with the Electrical Engineering department, University of Southern California, Los Angeles. Avishek Ghosh is with the EECS department, UC Berkeley. 
Anurag Kumar is with the Department of ECE, 
Indian Institute of Science (IISc), Bangalore. This work was 
done when Arpan Chattopadhyay and Avishek Ghosh were with the Department of ECE, 
IISc. Email: arpanc.ju@gmail.com, avishek.ghosh38@gmail.com, anurag@ece.iisc.ernet.in}
\thanks{{\bf All appendices are provided in the supplementary material.}}
}

\author{
Arpan~Chattopadhyay, Avishek~Ghosh, and Anurag~Kumar\\
}

% \author{K.~P.~Naveen,~\IEEEmembership{Student~Member,~IEEE,} 
% Eitan~Altman,~\IEEEmembership{Fellow,~IEEE,} and
% Anurag~Kumar,~\IEEEmembership{Fellow,~IEEE} 

\IEEEcompsoctitleabstractindextext{
\begin{abstract}
We are motivated by the need, in emergency situations, for impromptu (or ``as-you-go'') 
deployment of multihop wireless networks, by human agents or robots (e.g., unmanned aerial vehicles (UAVs)); the agent moves along a line, makes wireless link quality measurements at regular intervals, and  makes on-line placement decisions using  these measurements. 
As a first step we have formulated such deployment along a line as a sequential decision problem. In our earlier work, reported in 
\cite{chattopadhyay-etal15measurement-based-impromptu-deployment-arxiv-v1}, we proposed   
two possible deployment approaches: (i) the pure as-you-go approach 
where the deployment agent can only move forward, and (ii) the explore-forward approach where the deployment agent explores a few successive steps 
and then selects the best relay placement location among them. The latter was shown to provide better performance 
(in terms of network cost, network performance and power expenditure), 
but at the expense of more measurements and deployment time, which makes explore-forward 
impractical for quick deployment by an energy constrained agent such as a UAV.  
Further, since in emergency situations the terrain would be unknown, the deployment algorithm should not 
require a-priori knowledge of the parameters of the  wireless propagation model. In 
\cite{chattopadhyay-etal15measurement-based-impromptu-deployment-arxiv-v1} we, therefore, developed 
learning algorithms for the explore-forward approach.\\

The current paper fills in an important gap by providing deploy-and-learn algorithms for the pure as-you-go approach. 
We formulate the sequential relay deployment problem as an average cost Markov decision process (MDP), which 
trades off among  power consumption, link outage probabilities, and the number of relay nodes in the deployed network. 
While the pure as-you-go deployment problem was previously formulated as a discounted cost MDP 
(see \cite{chattopadhyay-etal15measurement-based-impromptu-deployment-arxiv-v1}), the discounted cost MDP 
formulation was not amenable for learning algorithms that are proposed in this paper. In this paper, first 
we show  structural results for the optimal policy 
corresponding to the average cost MDP, and provide new insights into the optimal policy. 
Next, by exploiting the special structure of the average cost optimality 
equation and by using the theory of {\em asynchronous} stochastic approximation (in single and two timescale), we develop two learning algorithms  
that asymptotically converge to the set of optimal policies as deployment progresses. Numerical results show reasonably fast 
speed of convergence, and hence the model-free algorithms  
can be useful for practical, fast deployment of emergency wireless networks.
\end{abstract}

\vspace{-2mm}

\begin{keywords}
Wireless  networks, impromptu network deployment, 
as-you-go relay placement, relay placement by UAV, Markov decision process, stochastic approximation.
\end{keywords}
}

 \maketitle

\section{Introduction}\label{section:introduction}
In emergency situations, such as fires in large buildings or forests, or houses in a flooded neighbourhood 
(without electric power and telecom infrastructure), there is a need to quickly deploy wireless networks 
for situation monitoring. Such networks could be deployed by first responders (e.g., fire-fighters moving 
through a burning building \cite{liu-etal10breadcrumb}), or by robots (e.g., unmanned aerial vehicles (UAVs) hopping over the 
rooftops of flooded homes or flying over a long road \cite{berkeley-project}, \cite{corke04autonomous-deployment-sensor-uav}, 
\cite{anthony-etal14controlled-sensor-installation-uav}), or by  forest guards along 
forest trails (\cite{chattopadhyay-etal15measurement-based-impromptu-deployment-arxiv-v1}).\footnote{See  \cite{dyo-etal10wildlife-wsn} and 
\cite[Section~$5$]{alkhatib14review-forest-fire} for application of multihop wireless 
sensor networks in wildlife monitoring and forest fire detection. \cite{nokia-project} 
illustrates a future possibility where drones deploy high speed, solar-powered 
access points on the roofs of city buildings in order to provide high speed internet connection. 
{\em The drone can land on the ground or on a rooftop for   link quality measurements, 
and can again take off}.} 
Typically, such networks would have one or more \emph{base-stations}, where the command and control would reside, and to which the measurements from the sensors in the field would need to be routed. For example, in the case of the fire-fighting example, the base-station would be in a control truck parked outside the building. 
Evidently, in such emergency situations,  there is a need for ``as-you-go'' deployment algorithms as 
there is no time for network planning. As they move through the affected area, the first-responders would need to deploy wireless relays, in order to provide routes for the wireless sensors for situation monitoring.

With the above motivation for quick deployment of multihop wireless networks, in our work, in the present and earlier papers (\cite{chattopadhyay-etal15measurement-based-impromptu-deployment-arxiv-v1}, \cite{chattopadhyay-etal13measurement-based-impromptu-placement_wiopt}, \cite{chattopadhyay-etal14deployment-experience}), we have considered  the particular situation of as-you-deployment of relays along a line, starting from a base-station, in order to connect a source of data (e.g., a sensor) whose location is revealed (or is itself placed) only during the deployment process. 
Figure~\ref{fig:why-impromptu}  depicts our model for as-you-go deployment along a line, and also illustrates the difference between 
planned deployment and as-you-go deployment. As-you-go deployment along a line is motivated by the need for quick deployment of relay networks along long forest trails by humans or mobile robots, and relay network deployment along a long straight road by human agents or UAVs. In practice, the location of the data source would be a-priori unknown,  as the deployment agent  would also need to select locations at which to place the sensors. 
Yet, as the deployment agent traverses the line, he or she (or it) has to judiciously deploy wireless relays so as to 
end up with a viable network connecting the data source (e.g., the sensor) to the sink. In a planned approach, all possible 
links could be evaluated; in an as-you-go approach, however, the agent needs to make decisions based on 
whatever links can be evaluated as deployment progresses.

\vspace{-0mm}

Motivated by the need for as-you-go deployment of wireless sensor networks (WSNs) over large terrains, such as forest trails, 
in our earlier work \cite{chattopadhyay-etal15measurement-based-impromptu-deployment-arxiv-v1} 
we had considered the problem of multihop wireless network deployment along a line, 
where a single deployment agent  starts  
from a sink node (e.g., a base-station), places relays as the agent  
walks along the line, and finally places a source node (e.g., a sensor) where required.  We formulated this 
problem as a measurement based sequential decision problem with an appropriate additive cost over hops. 
In order to explore the range of possibilities, we considered two alternatives for measurement and deployment: 
(i) the explore-forward approach: after placing a node, the deployment agent explores several potential placement 
locations along the next line segment, and then decides on where to place the next node, and (ii) the pure as-you-go approach: the deployment 
agent only moves forward, making measurements and committing to deploying nodes as he goes. 

\vspace{-0 mm}

As expected, in \cite{chattopadhyay-etal15measurement-based-impromptu-deployment-arxiv-v1} 
we found that the explore-forward approach yields better performance 
(in terms of the additive per hop cost (see 
\cite[Section~V]{chattopadhyay-etal15measurement-based-impromptu-deployment-arxiv-v1}); but, of course, this approach takes more 
time for the completion of deployment. Hence, explore-forward is  prohibitive  when soldiers or robots need to quickly deploy a relay network along a forest trail or along a long road. 
 In addition, a deployment agent such as a UAV would be limited by its fuel, 
and it would be desirable to complete the mission as quickly as possible, without many fuel consuming manoeuvres. 
Thus, pure as-you-go is the only option for network deployment by UAVs along long roads  (see \cite{berkeley-project} for practical network deployment along a road by a UAV). 
Further, in an emergency situation, the algorithm cannot expect to be given the parameters of the propagation environment; 
this gives rise to the need for deploy-and-learn algorithms. 

\vspace{-0 mm}

In \cite{chattopadhyay-etal15measurement-based-impromptu-deployment-arxiv-v1}, 
although we introduced explore-forward and pure as-you-go approaches, 
we developed  learning algorithms for explore-forward alone.
However, with the above motivation, 
our current paper fills in an important gap by proposing online learning algorithms for pure as-you-go deployment.   
We mathematically formulate the  problem  of
{\em pure as-you-go deployment}   
along a line as an {\em optimal sequential decision problem} so 
as to  minimize the expected average cost per step, where the cost 
of a deployment is   a linear combination of  the sum transmit 
power, the sum outage 
probability  and the number of relays deployed. 
We formulate the problem as a Markov decision process (MDP) and  obtain the optimal policy structure. 
Next, we propose two learning algorithms (based on asynchronous stochastic approximation) and 
prove their asymptotic  convergence to the optimal policy 
for the long-run average cost minimization 
problem. Finally, we demonstrate 
the convergence rate of the learning algorithms via numerical exploration.

{\em The new contributions of this paper, in relation to \cite{chattopadhyay-etal15measurement-based-impromptu-deployment-arxiv-v1}, 
are discussed in Section~\ref{subsection:contribution-wrt-prior-work}.}

\begin{figure}[!t]
\begin{center}
\includegraphics[height=2cm, width=8cm]{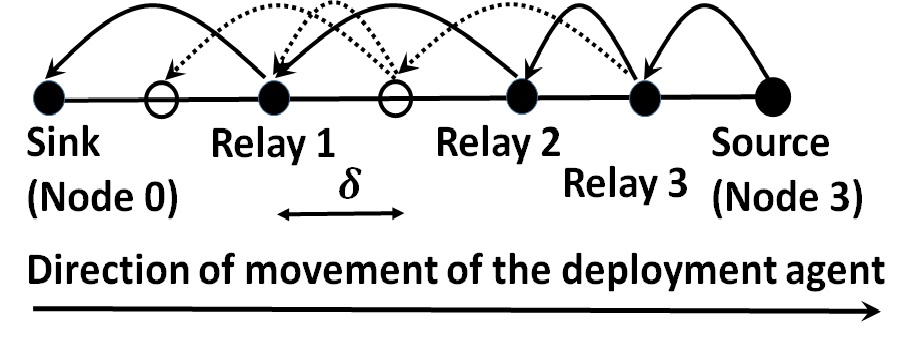}
\end{center}
\caption{A line network connecting a source (e.g., a sensor) 
to a sink (e.g., a control centre) via relay nodes. The dots  in between (filled and unfilled) 
denote potential relay locations, and are spaced $\delta$ meters apart. 
The deployed network consists of three relays (dots labeled Relay~$1$, $2$, and $3$) placed at three potential locations. The solid 
arrows show the multi-hop path from the source to the sink. The unfilled dots represent locations 
where no relay was placed. {\em The dotted arrows represent some other  
possible links between pairs of potential  locations.} In case of planned deployment, link qualities between all potential 
location pairs need to be measured. But, in as-you-go deployment, the agent only measures the qualities of link from his (or its) 
current location to the previously 
placed nodes.}
\label{fig:why-impromptu}
\vspace{-6mm}
\end{figure}

\vspace{-2mm}
\subsection{Related Work}\label{subsecion:related-work}

Prior work on the problem of impromptu deployment of WSN consists of mostly heuristic algorithms 
validated by experimentation. For example, the authors of \cite{souryal-etal07real-time-deployment-range-extension} 
address this problem by studying experimentally the variation in indoor link quality. The authors of 
\cite{aurisch-tlle09relay-placement-emergency-response} also took a similar approach. 
The authors of \cite{howard-etal02incremental-self-deployment-algorithm-mobile-sensor-network} provide 
heuristics for deploying (incrementally) sensors so that a certain area is covered (e.g., self-deployment of 
autonomous robot teams). Bao and Lee, in \cite{bao-lee07rapid-deployment-wireless}, 
address the problem of a group of first responders starting from  a base station (e.g.,  a command center) 
and placing relay nodes while walking 
through a region devoid of communication infrastructure, in order to stay connected among themselves as well as with the 
base station.  Liu et al., in \cite{liu-etal10breadcrumb}, describe a {\em breadcrumbs} system meant for 
firefighters operating inside a building; this paper is in similar spirit with ours, but their goal is just 
to maintain connection with  $k$ previously placed nodes. This work was later extended by them 
in \cite{liu-etal11multiuser-breadcrumb} which provides a reliable  multiuser breadcrumbs system. 
However, all the above works are based on heuristic algorithms, rather than 
on rigorous formulations; hence they do not provide any provable performance guarantee. 
A nice survey on rapid deployment of post-disaster networks is available in \cite{miranda16survey-rapidly-deployable-post-disaster}. 
Sensor network deployment by UAVs have also been studied in literature (see \cite{corke04autonomous-deployment-sensor-uav}, 
\cite{anthony-etal14controlled-sensor-installation-uav}).

In our current paper, we have formulated as-you-go deployment as an MDP, found structural results 
for the optimal policy, and proposed learning algorithms to solve the sequential decision problems 
without using any prior knowledge of the radio propagation parameters. 
The use of MDP to formulate as-you-go deployment was first proposed by Mondal et al. in 
\cite{mondal-etal12impromptu-deployment_NCC}. This work was later extended by 
Sinha et al. in \cite{sinha-etal12optimal-sequential-relay-placement-random-lattice-path}, 
where the authors have provided an algorithm derived from an MDP formulation, so as to create a multi-hop 
wireless relay network between a sink and a source located at an unknown location, 
by placing relay nodes along a random lattice path. However, these papers do not consider spatial variability of 
wireless link qualities due to shadowing, which allows them to develop deployment algorithms that place 
the next relay based on the distance from the previously placed relay. 

The spatial variation of link qualities due to shadowing requires measurement-based 
deployment; here the deployment agent makes placement decision at a given location based on the link quality 
to the previously placed node. Measurement-based as-you-go deployment was formulated  
first in \cite{chattopadhyay-etal13measurement-based-impromptu-placement_wiopt}, and was later 
extended in \cite{chattopadhyay-etal15measurement-based-impromptu-deployment-arxiv-v1}. 
The authors of \cite{chattopadhyay-etal15measurement-based-impromptu-deployment-arxiv-v1} have proposed 
two possible approaches for deployment along a line: (i) the {\em pure as-you-go approach} 
and (ii) the {explore-forward approach}. \cite{chattopadhyay-etal15measurement-based-impromptu-deployment-arxiv-v1} 
has provided MDP formulations and policy structures for both approaches; transition probabilities of the MDPs
depend on the radio propagation parameters in the environment,
and, in practice, these parameters are not known to the agent
prior to deployment. Hence, \cite{chattopadhyay-etal15measurement-based-impromptu-deployment-arxiv-v1} also 
provides {\em learning algorithms for the explore-forward approach}, that converge asymptotically to the set 
of optimal deployment policies as more and more measurements are made in course of deployment. One of these 
learning algorithms was used for actual network deployment (see \cite{chattopadhyay-etal15measurement-based-impromptu-deployment-arxiv-v1} and 
\cite{chattopadhyay-etal14deployment-experience}). Design of a two-connected network to guard against node and link failures was discussed in \cite{ghosh-etal14two-connected}, but it did not provide any learning algorithm.

We also developed, in \cite{chattopadhyay-etal16deploy-as-you-go}, as-you-go deployment algorithms for deploying a multi-relay 
line network, so that the end-to-end achievable rate is maximized; 
but it was done for an information-theoretic, full-duplex, multi-relay channel model where the nodes carry out decode-and-forward 
relaying. However, devices with such sophisticated relaying capability is not yet available for full commercial use. On the other hand, 
our current paper designs deployment algorithms for networks carrying packetized data, which is common in present day wireless standards.

\vspace{-3mm}
\subsection{Contributions of this paper, in relation to \cite{chattopadhyay-etal15measurement-based-impromptu-deployment-arxiv-v1}:}
\label{subsection:contribution-wrt-prior-work}
{\bf (i) New deploy-and-learn algorithms:} Our current paper provides learning algorithms 
for the pure as-you-go approach (Algorithm~\ref{algorithm:OptAsYouGoLearning} and Algorithm~\ref{algorithm:OptAsYouGoAdaptiveLearning}), whereas \cite{chattopadhyay-etal15measurement-based-impromptu-deployment-arxiv-v1} 
provides learning algorithms only for explore-forward. The learning algorithms are required to discover the optimal 
deployment policy as deployment progresses, for the situation where no prior accurate knowledge on the statistical nature of radio 
propagation environment is available. Learning algorithms for pure as-you-go 
deployment is an important requirement since the pure as-you-go deployment approach is more suitable for very fast 
deployment over a large region. In fact, the number of measurements in explore-forward deployment can be double or triple than that of 
pure as-you-go (\cite[Section~V]{chattopadhyay-etal15measurement-based-impromptu-deployment-arxiv-v1}) for practical deployment; this makes 
pure as-you-go a better choice for emergency  network deployment by soldiers or commandos or 
energy-constrained autonomous agents such as robots and UAVs.

Unlike \cite{chattopadhyay-etal15measurement-based-impromptu-deployment-arxiv-v1}, the learning algorithms 
presented in this paper make use of {\em asynchronous stochastic approximation}, where different iterates are updated at different time instants 
(in the learning algorithms proposed in \cite{chattopadhyay-etal15measurement-based-impromptu-deployment-arxiv-v1}, all iterates 
are updated when a new relay is placed). We provide formal proof for the convergence of our proposed 
learning algorithms to the optimal deployment policy for pure as-you-go deployment; 
these proofs require a significant 
and non-trivial novel mathematical analysis (compared to \cite{chattopadhyay-etal15measurement-based-impromptu-deployment-arxiv-v1}) 
in order to address many technical issues that arise in the proofs.

 In other words, the most important contributions of the current paper w.r.t. 
\cite{chattopadhyay-etal15measurement-based-impromptu-deployment-arxiv-v1}, are the newly proposed learning 
algorithms for pure as-you-go deployment and their convergence proofs, which are new to the literature and addresses the 
problem of very fast deployment.

Interestingly, one of the 
learning algorithms proposed in this paper exhibits a nice separation property between estimation and control, which is not 
present in the learning algorithms presented in \cite{chattopadhyay-etal15measurement-based-impromptu-deployment-arxiv-v1}.

{\bf (ii) Average cost MDP formulation:} \cite{chattopadhyay-etal15measurement-based-impromptu-deployment-arxiv-v1} 
formulates the pure-as-you deployment problem for a line having a random length $L \sim Geometric(\theta)$ 
(mean is $\frac{1}{\theta}$), i.e., $\mathbb{P}(L=l)=(1-\theta)^{l-1} \theta$ where $ l \in \{1,2,\cdots,\infty\}$; the average cost optimal policy is obtained by taking $\theta \rightarrow 0$. 
Clearly, this requires value iteration to compute the optimal policy prior to deployment. 
This also requires the knowledge of radio propagation parameters, since they  determine the transition 
probabilities of the MDP. On the other hand, our present paper establishes 
the structure of the optimal policy by using the average cost optimality equation (see \eqref{eqn:average_cost_optimality_equation_no_backtracking}) with 
necessary modification; it turns out that such a formulation along with the special structure of the 
problem enables us to propose very simple learning algorithms to find the optimal policy, irrespective of whether the radio propagation 
parameters are known apriori or not.  Thus, the average cost MDP formulation is a precursor to the learning algorithms (Algorithm~\ref{algorithm:OptAsYouGoLearning} and Algorithm~\ref{algorithm:OptAsYouGoAdaptiveLearning}) 
presented later in this paper.  Some new interesting properties of the value functions and the policy structure 
are also proved in the current paper, which were not present in \cite{chattopadhyay-etal15measurement-based-impromptu-deployment-arxiv-v1} 
because the problem was formulated as discounted cost MDP in \cite{chattopadhyay-etal15measurement-based-impromptu-deployment-arxiv-v1}.

{\bf (iii) Additional measurements to facilitate learning:} The pure-as-you go approach considered in our current paper is not exactly the same as that described in 
\cite{chattopadhyay-etal15measurement-based-impromptu-deployment-arxiv-v1}. In 
\cite{chattopadhyay-etal15measurement-based-impromptu-deployment-arxiv-v1}, the agent makes a link quality 
measurement from the current location to the immediate previous node that he had placed. On the contrary, in the pure 
as-you-go approach described in our present paper, the agent measures link qualities from the current location to all 
previously placed nodes that are located within a certain distance. This is done to facilitate learning the optimal policy. The exact reason behind using this variation of 
pure as-you-go deployment will be explained in Section~\ref{subsection:OptAsYouGoLearning_algorithm}.

{\bf (iv) Bidirectional traffic:} In Section~\ref{subsection:bidirectional-traffic}, we explain how the deployment algorithms presented in this paper can be adapted 
to the case where each link in the network has to carry data packets in both directions.

\vspace{-3mm}
\subsection{Organization}\label{subsection:organization}
The rest of the paper is organized as follows. The system model has been described in 
Section~\ref{section:system-model}. 
MDP formulation for pure-as-you deployment has been provided in Section~\ref{section:mdp-for-pure-as-you-go-deployment}. 
The learning algorithms have been proposed in 
Section~\ref{section:learning-for-pure-as-you-go-deployment-given-xi} 
and Section~\ref{section:learning-for-pure-as-you-go-deployment-constrained-problem}. Convergence speed of the learning 
algorithms are demonstrated numerically in Section~\ref{section:convergence_speed_learning_algorithms}, after which the conclusion follows. 
The proofs of all theorems are provided in the appendices available as supplementary material.

\vspace{-3mm}

\section{System Model}\label{section:system-model}
In this section, we describe the system model assumed in this paper.  It has to be noted that 
the system model and notation used in this paper are similar in many aspects to those of 
\cite{chattopadhyay-etal15measurement-based-impromptu-deployment-arxiv-v1}; a significant difference in the system model will be found 
in the deployment procedure as described in Section~\ref{subsection:deployment-process} (deployment process), and in Section~\ref{subsection:bidirectional-traffic} (bi-directional traffic).  The channel model (Section~\ref{subsection:channel-model}), traffic model (Section~\ref{subsection:traffic_model}) 
and deployment objective (Section~\ref{subsection:network_cost}) subsections are almost similar to the respective 
sections in \cite{chattopadhyay-etal15measurement-based-impromptu-deployment-arxiv-v1}; but we describe the system model here in detail 
to make this paper self-contained.

We assume that the  line (i.e., the road or the forest trail along which the network is deployed) is discretized into steps (starting from the sink), each having length 
$\delta$. The points located at distances $\{k \delta\}_{k \in \{1,2,3,\cdots\} }$ are called potential locations; 
the agent is allowed to place nodes only at these potential locations. 
As the {\em single} deployment agent walks along the line, at each potential location, the agent measures the link quality from the 
current location to the previously placed nodes that are within a certain distance from the current location; 
placement decisions are made based on these measurements. 

After deployment, as shown in Figure~\ref{fig:why-impromptu}, the sink is called Node $0$, and the relays are enumerated 
as nodes $\{1,2,3,\cdots\}$ as we move away from the sink. 
A link whose transmitter is Node $i$ and receiver is Node 
$j$ is called link $(i,j)$.

\subsection{Wireless Channel Model}\label{subsection:channel-model}

We consider a wireless channel model where, for a 
link (i.e., a transmitter-receiver pair) with length $r$  and transmit power $\gamma$, the 
received power of a packet (say the $k$-th packet) is given by:

\begin{equation}
 P_{rcv,k}=\gamma c \bigg(\frac{r}{r_0}\bigg)^{-\eta}H_kW \label{eqn:channel_model}
\end{equation}
Here $c$ is the path-loss at a reference distance $r_0$, 
and $\eta$ is the path-loss exponent. The fading random variable seen by the $k$-th packet 
is $H_k$ (e.g., $H_k$ is exponentially distributed for Rayleigh fading); it takes independent values over different coherent 
times. $W$ denotes the shadowing random variable that captures the (random) spatial variation in path-loss. 
In this paper, $W$ is assumed to take values from a set $\mathcal{W}$, and we denote by $g(w)$ 
the probability mass function or probability density function of $W$, depending on 
whether $\mathcal{W}$ is  countable   or   uncountable. 
We assume that the transmit power of each node comes from a discrete set, $\mathcal{S}:=\{P_1, P_2, \cdots, P_M \}$, where 
the power levels are arranged in ascending order. 

Shadowing becomes spatially uncorrelated if the transmitter or receiver is moved by a certain distance that depends on 
the sizes of the scatterers in the environment (see \cite{agarwal-patwari07correlated-shadow-fading-multihop}).  
 It was shown experimentally that, in a forest-like environment, shadowing 
has log-normal distribution (i.e., $\log_{10}W \sim \mathcal{N}(0,\sigma^2)$ where $\sigma$ is the standard 
deviation of log-normal shadowing) and the shadowing decorrelation distance can be as small as $6$~meters 
(see \cite{chattopadhyay-etal14deployment-experience}). In this paper, 
 we assume that the step size $\delta$ is chosen to be more than the shadowing decorrelation distance; 
this allows us to assume that the shadowing at any two different links in the network 
are independent.

The $k$-th packet is said to see an {\em outage} in the link if 
$P_{rcv,k} \leq P_{rcv-min}$, where $P_{rcv-min}$ is a threshold depending on the modulation scheme and the 
properties of the receiving node. For example, $P_{rcv-min}$ can be chosen to be $-88$~dBm for 
the TelosB ``motes'' (see \cite{bhattacharya-etal13smartconnect-comsnets}), and 
$-97$~dBm for iWiSe motes (see \cite{iwise}).  
For a link with length $r$, transmit power $\gamma$ and  shadowing realization $W=w$, 
the outage probability is denoted by $Q_{out}(r,\gamma,w)$; it is increasing in $r$ and decreasing 
in $\gamma$, $w$. 
$Q_{out}(r,\gamma,w)=\mathbb{P}(P_{rcv,k} \leq P_{rcv-min})$ depends on the 
fading statistics; if $H$ is exponentially distributed with mean $1$ 
(i.e., for Rayleigh fading), then  
$Q_{out}(r,\gamma,w)=\mathbb{P}( \gamma c (\frac{r}{r_0})^{-\eta}wH \leq P_{rcv-min} )=
1-e^{-\frac{P_{rcv-min}(\frac{r}{r_0})^{\eta}}{\gamma c w}}$. 
The outage probability of a randomly chosen link (with given $r$ and $\gamma$) is a 
random variable, with the randomness 
coming from shadowing $W$.   Outage probability can be measured by sending a 
large number of packets over a link and calculating the fraction of packets with received power less than $P_{rcv-min}$.

\subsection{Pure As-You-Go Deployment Process}\label{subsection:deployment-process}
After placing a relay, the agent starts measuring the link qualities from the next $B$ locations {\em one by one}  
(the value of $B$ is fixed prior to deployment). At any given location, the agent uses the measurements from the current location to make a placement decision; the agent does not make measurements from all of those $B$ locations in order to place a new relay.  

At any given location, the agent measures the link qualities 
from the given location to all 
previously placed nodes that are within $B \delta$ distance from the current location; see 
Figure~\ref{fig:pure-as-you-go-learning}. Let us assume that the 
agent is standing at a distance $k \delta$ from the sink. 
Let $\mathcal{I}_k:=\{ r \in \{1,2,\cdots,B \}: \text{a relay was placed at a distance $(k-r)\delta$ from the sink} \}\}$. 
Then, the agent at this location will measure the outage probabilities 
$\{Q_{out}(r,\gamma,w_r)\}_{\gamma \in \mathcal{S},  r \in \mathcal{I}_k}$ ($w_r$ is the 
realization of shadowing in a link of length $r$~steps). 

However, at each location, only the link quality to the {\em immediately} previous node 
is used to decide whether to place a relay there or to move on to the next step. If the decision is to place a relay, 
then the agent also decides which transmit power $\gamma \in \mathcal{S}$ to use at that particular node. If the decision is not to 
place a relay, the agent moves to the next step. In this process, 
if he reaches $B$ steps away from the previous relay, or if the source location is encountered, then he must place a node there.

It is important to note that, while the measurement to the {\em immediately} previous node is used to make a 
placement decision, other measurements made in this process  provide useful information about the statistical characteristics of the 
radio propagation environment (more precisely, the probability distribution of $Q_{out}(r,\gamma,\cdot)$ for 
$r \in \{1,2,\cdots,B\}, \gamma \in \mathcal{S}$), and 
those measurements are used to learn the optimal deployment policy as described in 
Section~\ref{section:learning-for-pure-as-you-go-deployment-given-xi} and 
Section~\ref{section:learning-for-pure-as-you-go-deployment-constrained-problem}. But if the radio propagation parameters 
(such as $\eta$ and $\sigma$) are exactly known, i.e., if the probability distribution of $Q_{out}(r,\gamma,\cdot)$ is known exactly, 
then these additional measurements will not be required (since shadowing is i.i.d. across links, these measurements will not provide any information 
about the link quality between the current location and the immediately previous node); 
this situation has been explored in Section~\ref{section:mdp-for-pure-as-you-go-deployment},  
where measurement is made only to the previously placed relay node.

{\em Choice of $B$:} In general, the choice of  $B$ depends on the constraints 
and requirements for the deployment. Large  $B$ results in   better performance at the expense of more measurements.  One can simply  choose $B$ to be the largest integer such that, the probability that a randomly chosen wireless link with length $B \delta$ respects a certain outage constraint, is larger than some pre-specified target. This will make sure that the probability of obtaining a workable link is small in case the agent reaches a location that is more than $B \delta$ distance away from the previously placed node.

\begin{figure}[!t]
\begin{center}
\includegraphics[height=2.5cm, width=7cm]{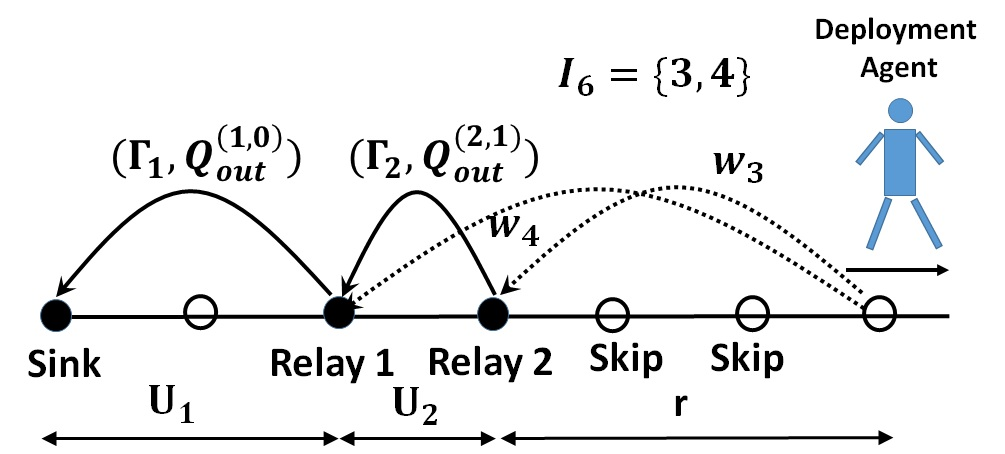}
\end{center}
\caption{Illustration of  pure as-you-go deployment with learning for $B=4$. Here 
the deployment agent has already placed Relay~$1$ and Relay~$2$,  and the corresponding inter-relay distances are 
$U_1$ and $U_2$. The placed relays use transmit powers   $\Gamma_1$ and $\Gamma_2$, 
thereby achieving outage probabilities $Q_{out}^{(1,0)}$ and $Q_{out}^{(2,1)}$ 
(in the links shown by solid arrows). After placing Relay~$2$, the agent measured the link qualities from the next 
location to the sink, Relay~$1$ and Relay~$2$ (since $B=4$) and the algorithm advised him not to place a node there. 
Then the deployment agent moved to the next location (which is at a distance of $2 \delta$ from Relay~$2$) 
and measured the link qualities to Relay~$1$ and Relay~$2$ (but not to the sink since $B=4$). 
In this {\em snap-shot} of the deployment process, the agent is 
evaluating the next location  at $r=3 \delta$ distance from Relay~$2$ (see the dotted arrows). Since $B=4$, the agent  measures the 
link qualities from the current location to both Relay~$1$ and Relay~$2$; this corresponds to $\mathcal{I}_6=\{3,4\}$ 
(see Section~\ref{subsection:deployment-process} for the definition of $\mathcal{I}_6$), since the distances to 
Relay~$2$ and Relay~$1$ from the current location are $3 \delta$ and $4 \delta$ respectively. 
Based on these measurements, the deployment agent will decide whether to place a relay at $r=3 \delta$ or not, and the transmit power of the node 
in case the decision is to place; if the 
decision is not to place a relay here, then a relay must be placed at the next location (since $B=4$),  
and the agent would be at a distance of $B \delta$ from the last placed relay (i.e., Relay~$2$).}
\label{fig:pure-as-you-go-learning}
\end{figure}

\vspace{-4mm}
\subsection{Network Cost Minimization Objective}\label{subsection:network_cost}
We first define the cost that we use to evaluate the performance of any deployment policy. 
A deployment policy $\pi$ takes as input the distance of the current location of the agent from 
the previous relay and the link quality to the previously 
placed node, and provides the placement decision for that location and transmit power (if the decision 
is to place a relay) as output.

We denote the number of relays placed up to $x$~steps from the sink by $N_x$, and let us define $N_0 = 0$. 
Since deployment decisions are based on measurements of (random) outage probabilities, $\{N_x\}_{x \geq 1}$ 
is a random process.

After the deployment is over, let us denote by $\Gamma_i$ the transmit power used by node~$i$, and by $Q_{out}^{(i, i-1)}$ 
the outage probability over the link $(i, i-1)$ (see Figure~\ref{fig:pure-as-you-go-learning}). 
Note that, $\Gamma_i$ and $Q_{out}^{(i, i-1)}$ are random variables since shadowing between various potential 
location pairs are random variables, whose exact realization is known only after measurement. Given the measurement values 
(i.e., the information available to the deployment agent) 
and  the deployment policy, one can find the exact realizations of $\Gamma_i$ and $Q_{out}^{(i, i-1)}$. 

The expected cost of the deployed network up to $x \delta$ distance is given by a sum of hop costs as follows: 

\begin{equation}
\mathbb{E}_{\pi} ( \sum_{i=1}^{N_x} \Gamma_i + \xi_{out} \sum_{i=1}^{N_x}Q_{out}^{(i,i-1)}+ \xi_{relay} N_x )
\label{eqn:cost_function_sum_power_sum_outage}
\end{equation}

which is the expectation (under policy $\pi$) of a linear combination of the sum power $\sum_{i=1}^{N_x} \Gamma_i$, 
the sum outage $\sum_{i=1}^{N_x} Q_{out}^{(i, i-1)}$, and the number of relays $N_x$. 
For small outage probabilities, 
the sum-outage $\sum_{i=1}^{N_x} Q_{out}^{(i, i-1)}$ is
approximately equal to the probability that a packet sent from the point $x$ to the sink 
encounters an outage along the path (see also Section~\ref{subsection:traffic_model} for a better understanding of the 
outage cost in light of 
the traffic model). The sum power 
$\sum_{i=1}^{N_x} \Gamma_i$ is proportional to the battery depletion rate in the network, in case 
wake-on radios are used (see \cite[Section~II]{chattopadhyay-etal15measurement-based-impromptu-deployment-arxiv-v1} for a detailed 
discussion).

The multipliers $\xi_{out} \geq 0$ and $\xi_{relay} \geq 0$ capture the emphasis we place on 
$\sum_{i=1}^{N_x} Q_{out}^{(i, i-1)}$ or $N_x$. A large value of $\xi_{out}$ will aim 
for   deployment with smaller end-to-end expected outage.  $\xi_{relay}$ can be viewed as the cost 
of placing a relay.

Since  the distance $L$ to the source from the sink 
is not known prior to deployment, we simply assume that $L=\infty$. 
 This assumption is practical when the distance of the source from the sink is large (e.g., deployment 
along a long forest trail). $L=\infty$ is also equivalent to the scenario where deployment is done serially  
along multiple trails in a forest, provided that the radio propagation environment in various trails are statistically identical;   
we deploy serially along multiple lines but use this formulation to minimize the per-step cost averaged over 
all the lines.

Next, we define the optimization problems that we seek to address in this paper.

\subsubsection{The Unconstrained Problem}\label{subsubsection:the-unconstrained-problem}
 
We seek to solve the following problem:

\footnotesize
\begin{equation}
 \inf_{\pi \in \Pi} \limsup_{x \rightarrow \infty} \frac{\mathbb{E}_{\pi}\sum_{i=1}^{N_x}(\Gamma_i+\xi_{out}Q_{out}^{(i,i-1)}+\xi_{relay})}{x}
\label{eqn:unconstrained_problem_average_cost_with_outage_cost}
\end{equation}
\normalsize
where $\Pi$ is the set of all possible placement policies. 
We formulate (\ref{eqn:unconstrained_problem_average_cost_with_outage_cost}) as an average cost MDP. 

\subsubsection{The Constrained Problem}\label{subsubsection:the-constrained-problem}
(\ref{eqn:unconstrained_problem_average_cost_with_outage_cost}) is the relaxed version of the following 
constrained problem:

\footnotesize
\begin{eqnarray}
&& \inf_{\pi \in \Pi} \limsup_{x \rightarrow \infty} \frac{\mathbb{E}_{\pi}\sum_{i=1}^{N_x}\Gamma_i}{x} \nonumber\\
&s.t.& \, \limsup_{x \rightarrow \infty} \frac{\mathbb{E}_{\pi}\sum_{i=1}^{N_x}Q_{out}^{(i,i-1)}}{x} \leq \overline{q}, \nonumber\\
&& \text{and  } \limsup_{x \rightarrow \infty} \frac{\mathbb{E}_{\pi}N_x}{x} \leq \overline{N}
\label{eqn:constrained_problem_average_cost_with_outage_cost}
\end{eqnarray}
\normalsize

Here we seek to minimize 
the mean power per step subject to  constraints on the mean outage per step and   the 
mean number of relays per step. 

It turns out that (\ref{eqn:unconstrained_problem_average_cost_with_outage_cost}) is the relaxed version of the constrained 
problem (\ref{eqn:constrained_problem_average_cost_with_outage_cost}), with $\xi_{out}$ and $\xi_{relay}$ as the Lagrange multipliers. 
The constrained problem can be solved by solving the unconstrained problem, under proper choice of the Lagrange multipliers. 
The following  theorem tells us how to choose the {\em Lagrange multipliers} 
$\xi_{out}$ and $\xi_{relay}$ (see \cite{beutler-ross85optimal-policies-controlled-markov-chains-constraint}, 
Theorem~$4.3$):

\begin{theorem}\label{theorem:how-to-choose-optimal-Lagrange-multiplier}
For the constrained problem (\ref{eqn:constrained_problem_average_cost_with_outage_cost}), if there exists a pair 
$\xi_{out}^* \geq 0$, $\xi_{relay}^* \geq 0$ and a policy $\pi^*$ 
such that $\pi^*$ is the optimal policy of the unconstrained problem 
(\ref{eqn:unconstrained_problem_average_cost_with_outage_cost}) under $(\xi_{out}^*, \xi_{relay}^*)$, and if the constraints in 
(\ref{eqn:constrained_problem_average_cost_with_outage_cost}) are met with equality under the policy $\pi^*$, 
then $\pi^*$ is an optimal policy for the constrained problem (\ref{eqn:constrained_problem_average_cost_with_outage_cost}) 
as well.\qed
\end{theorem}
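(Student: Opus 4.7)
The plan is to use the standard weak-duality argument for constrained optimization, specialized to the long-run average cost setting. Write $J_{power}(\pi)$, $J_{out}(\pi)$, and $J_{relay}(\pi)$ respectively for the three $\limsup$ averages appearing in \eqref{eqn:constrained_problem_average_cost_with_outage_cost}, and for any $\xi_{out},\xi_{relay}\geq 0$ define the Lagrangian
\begin{equation*}
J_\xi(\pi) \;=\; J_{power}(\pi)\;+\;\xi_{out}\,J_{out}(\pi)\;+\;\xi_{relay}\,J_{relay}(\pi),
\end{equation*}
so that \eqref{eqn:unconstrained_problem_average_cost_with_outage_cost} is precisely $\inf_{\pi\in\Pi} J_{\xi}(\pi)$ (modulo the $\limsup$ being preserved when the three $\limsup$'s are combined, which holds here since the per-step costs are all nonnegative and bounded). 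I would assume this Lagrangian decomposition, note that $\xi_{out}^*,\xi_{relay}^*\geq 0$, and work with the two hypotheses: (a) $\pi^*$ attains $\inf_{\pi\in\Pi} J_{\xi^*}(\pi)$, and (b) the two constraints in \eqref{eqn:constrained_problem_average_cost_with_outage_cost} hold with equality under $\pi^*$, i.e.\ $J_{out}(\pi^*)=\overline q$ and $J_{relay}(\pi^*)=\overline N$.

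First, I would argue that $\pi^*$ is feasible for \eqref{eqn:constrained_problem_average_cost_with_outage_cost}: this is immediate from (b), since equality implies the constraints hold. Next, for an arbitrary feasible policy $\pi$ (i.e.\ $J_{out}(\pi)\leq \overline q$ and $J_{relay}(\pi)\leq \overline N$), I would chain three inequalities. By (a),
\begin{equation*}
J_{power}(\pi^*)+\xi_{out}^*J_{out}(\pi^*)+\xi_{relay}^*J_{relay}(\pi^*)\;\leq\;J_{power}(\pi)+\xi_{out}^*J_{out}(\pi)+\xi_{relay}^*J_{relay}(\pi).
\end{equation*}
Using (b) on the left-hand side and feasibility of $\pi$ together with $\xi_{out}^*,\xi_{relay}^*\geq 0$ on the right-hand side gives
\begin{equation*}
J_{power}(\pi^*)+\xi_{out}^*\overline q+\xi_{relay}^*\overline N \;\leq\; J_{power}(\pi)+\xi_{out}^*\overline q+\xi_{relay}^*\overline N,
\end{equation*}
which after cancellation yields $J_{power}(\pi^*)\leq J_{power}(\pi)$. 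Since $\pi$ was an arbitrary feasible policy and $\pi^*$ is itself feasible, $\pi^*$ attains the infimum in \eqref{eqn:constrained_problem_average_cost_with_outage_cost}.

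The argument is essentially a one-line Lagrangian dominance; the only place that requires care is the step in which the three separate $\limsup$'s are combined into the single $\limsup$ defining $J_{\xi^*}(\cdot)$. For nonnegative per-step costs this is standard (the Cesaro averages of the sums converge in the appropriate sense along the same subsequences), but I would state it explicitly, possibly invoking the fact that under any stationary policy the induced chain is positive recurrent so that all three limits actually exist (not just as $\limsup$). I do not anticipate any other technical obstacle; the result is a direct consequence of nonnegativity of the multipliers and the equality-of-constraints hypothesis, exactly as in Theorem~4.3 of \cite{beutler-ross85optimal-policies-controlled-markov-chains-constraint}.
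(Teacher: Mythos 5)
Your proposal is correct, and it reconstructs precisely the argument behind the citation the paper relies on: the paper itself gives no proof of this theorem, simply invoking Theorem~4.3 of \cite{beutler-ross85optimal-policies-controlled-markov-chains-constraint}, so your self-contained Lagrangian-dominance chain is exactly what that reference supplies. The one point you flag — combining the three $\limsup$'s into the single $\limsup$ defining the unconstrained objective — is indeed the only delicate step, but it is worth being precise about which direction is needed where. For an arbitrary feasible $\pi$ you only need the subadditive inequality
\begin{equation*}
\limsup_{x\rightarrow\infty}\tfrac{1}{x}\mathbb{E}_{\pi}\textstyle\sum_{i=1}^{N_x}\bigl(\Gamma_i+\xi_{out}^*Q_{out}^{(i,i-1)}+\xi_{relay}^*\bigr)\;\leq\;J_{power}(\pi)+\xi_{out}^*J_{out}(\pi)+\xi_{relay}^*J_{relay}(\pi),
\end{equation*}
which always holds; nonnegativity and boundedness buy you nothing more and are not needed here. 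The reverse (superadditive) inequality is needed only at $\pi^*$, and there it holds with equality not because the costs are nonnegative but because $\pi^*$ is a stationary policy under which the deployment process regenerates at each relay placement, so by the renewal--reward theorem (which the paper itself uses in Section~\ref{subsection:properties-optimal-cost}) all three per-step averages are genuine limits and the $\limsup$ of the sum equals the sum of the limits. Your closing parenthetical captures this; I would just drop the suggestion that nonnegativity of the per-step costs is what makes the combination work, and state the renewal argument as the actual justification. With that cleanup the proof is complete.
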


\subsection{Traffic Model}\label{subsection:traffic_model}
Motivated by our prior  work reported in   \cite{mondal-etal12impromptu-deployment_NCC}, 
\cite{sinha-etal12optimal-sequential-relay-placement-random-lattice-path}, 
\cite{chattopadhyay-etal13measurement-based-impromptu-placement_wiopt}, 
\cite{chattopadhyay-etal15measurement-based-impromptu-deployment-arxiv-v1},  we 
assume that the traffic in the network is so light that there is only one packet in the network at a time; this model is called 
the ``lone packet model'' (or the {\em zero traffic} model). 
This model results in collision-free transmissions, since there 
are no simultaneous transmissions in the network. As a result, we can easily write down the
communication cost in the line network as a sum of hop costs (Section~\ref{subsection:network_cost}). 

 It has been formally shown that network design under the lone packet model may be necessary for 
designing a network with positive traffic carrying capability (see \cite[Section~II]{bhattacharya-kumar12qos-aware-survivable-network-design}). 
We can easily adapt the result of \cite[Section~II]{bhattacharya-kumar12qos-aware-survivable-network-design} to show that, for a finite line network, 
if a target end-to-end packet delivery probability has to be achieved under positive traffic, then it is necessary to achieve 
that target  under lone packet traffic. Now, the end-to-end packet error rate under lone packet traffic 
is approximately equal to the sum outage; this justifies the sum outage cost in 
\eqref{eqn:unconstrained_problem_average_cost_with_outage_cost} and the outage constraint in 
\eqref{eqn:constrained_problem_average_cost_with_outage_cost}. Network design for a given positive 
traffic rate is left for future research.

In a line network, if interference-free communication is achieved via multi-channel access and frequency reuse after several hops, then the traffic model essentially becomes lone packet. There have been recent efforts to use multiple channels available in $802.15.4$ radio in WSN; see \cite{lohier-etal12multichannel-wsn}, \cite{abdeddaim-etal12multichannel-cluster-tree-wsn}, 
\cite{toscano-bello12multichannel-superframe-scheduling-wsn}, 
\cite{bardella-etal10experimental-multichannel-transmission-wsn}.

The lone packet traffic model is realistic for WSNs carrying low duty cycle 
measurements, or just  an occasional alarm packet. For example, recently 
developed passive infra-red (PIR) sensor platforms  
can detect and classify human or animal intrusion  
(\cite{raviteja-etal15animation-intrusion-classification-PIR}); such sensors deployed in a forest generate very low data.  The paper 
\cite[Section~3.2]{dyo-etal10wildlife-wsn} 
uses  $1.1\%$ duty cycle for a multi-hop WSN for wildlife monitoring; the sensors gather data from RFID collars tied the animals, and generate light traffic.  Very light traffic model is also realistic for condition monitoring/industrial telemetry applications 
(\cite{aghaei11wsn-water-gas}), where infrequent measurements are taken.   
Very light traffic model is  also 
common in machine-to-machine communication (\cite{adame-etal14m2m}).  The paper 
\cite[Table~$1$, Table~$3$]{mainwaring-etal02wsn-habitat} illustrate sensors with small sampling rate and sampled data size; it shows several   bytes per second data rate requirement for habitat monitoring.

We  assume that data packets traverse the network in a hop-by-hop fashion, without skipping any 
intermediate relay.  Later we will explain in Section~\ref{subsecion:a-note-on-the-unconstrained-objective-function} 
why we do not consider the possibility of relay skipping in this paper; the reason is  increased computational 
complexity without a very significant gain in network performance.

\subsection{Extension to Bi-Directional Traffic Flow}
\label{subsection:bidirectional-traffic}
Let us consider the situation where the traffic is still lone packet, but a packet can flow towards either direction along the line network with 
equal probabilities. In such cases, one can define the cost of link $(i,i-1)$ as 
$\Gamma_{i, forward}+\Gamma_{i-1,reverse}+ \xi_{out} Q_{out}^{(i,i-1, forward)}+\xi_{out} Q_{out}^{(i-1,i, reverse)}+\xi_{relay}$, where 
$\Gamma_{i, forward}$ is the transmit power used  from node~$i$ to node~$(i-1)$, and 
$\Gamma_{i-1, reverse}$ is the transmit power used  from node~$(i-1)$ to node~$i$. Similar meanings apply for 
the outage probabilities $Q_{out}^{(i,i-1, forward)}$ and $Q_{out}^{(i-1,i, reverse)}$, under transmit power levels 
$\Gamma_{i, forward}$ and $\Gamma_{i-1, reverse}$, respectively. It has to be noted that the  shadowing between two potential 
locations in forward and reverse directions, $W_{forward}$ and $W_{reverse}$, 
may not necessarily be independent. But the shadowing random variable pair 
$(W_{forward}, W_{reverse}) \in \mathbb{R}_{+}^2$ between two potential locations have a joint distribution, and 
this pair assumes independent and identically distributed (i.i.d.) value in $\mathbb{R}_{+}^2$ if either the transmitter or the receiver is moved 
beyond the shadowing decorrelation distance (which is smaller than the step size $\delta$). Hence, with this new link cost, our formulation 
\eqref{eqn:unconstrained_problem_average_cost_with_outage_cost} can easily be adapted to deploy a network carrying bi-directional traffic. 
In the process of deployment, the agent has to measure link qualities in both forward and reverse directions in such situation. 
The action at each step is to decide whether to place a relay; if the decision is to place a relay, then the agent also decides 
the transmit power levels used in that link along the forward and the reverse directions. 

Since the design for bi-directional traffic carrying network is mathematically equivalent to the 
design for unidirectional traffic carrying network, {\em we will consider only unidirectional traffic for the rest of this paper.}

\vspace{-2mm}
\section{Formulation for known propagation parameters}\label{section:mdp-for-pure-as-you-go-deployment}
Throughout this section, we will assume that we seek to solve the 
unconstrained problem given in (\ref{eqn:unconstrained_problem_average_cost_with_outage_cost}), and that the radio 
propagation parameters (such as $\eta$ and the standard deviation $\sigma$ for log-normal shadowing) are known prior 
to deployment. We formulate the problem as an average cost MDP, and develop a threshold policy for deployment. 
  In the process, we also discover some interesting properties of the value function, which do not follow 
from the discounted cost formulation.

 Note that, we assume throughout this section that measurement only to the immediately previous node 
is used to make a placement decision at any given location. Measurement to more than one previous nodes will be used later in order 
to develop the learning algorithms.

\subsection{Markov Decision Process (MDP) Formulation}\label{subsection:mdp-formulation}
When the deployment agent is $r$ steps away from the previous node ($r \in \{1,2,\cdots,B\}$), 
the agent measures the outage probabilities $\{Q_{out}(r,\gamma,w)\}_{\gamma \in \mathcal{S}}$ 
on the link from the current location to the previous node,\footnote{Note that, for the time being, we 
will ignore the measurements made to other nodes from the set $\mathcal{I}_k$.} where $w$ is the realization of  shadowing 
in that link. Then the algorithm decides whether to place a relay there, and also the transmit power 
$\gamma \in \mathcal{S}$ in case it decides to place a relay. We formulate the problem as an average cost  MDP 
with state space $\{1,2,\cdots,B\} \times \mathcal{W}$, where a typical state is of the form 
$(r,w), 1 \leq r \leq B, w \in \mathcal{W}$.  
If $r \leq B-1$, the action is either to 
place a relay and select a transmit power, or not to place. If $r=B$, the only 
feasible action is to place and select a transmit power $\gamma \in \mathcal{S}$. 
If a relay is placed at state $(r,w)$ and if a transmit power $\gamma$ is chosen for it, then 
a hop-cost of $\gamma+\xi_{out}Q_{out}(r,\gamma,w)+\xi_{relay}$ is incurred.\footnote{We have taken $(r,w)$ as a typical state for the sake of simplicity in representation;  
for the channel model given by (\ref{eqn:channel_model}), we can also take  
$(r,\{Q_{out}(r,\gamma,w)\}_{\gamma \in \mathcal{S}})$ as a typical state, since the cost of an action 
depends on the state $(r,w)$ only via the outage probabilities.}

A deterministic Markov policy $\pi$ is a sequence of mappings $\{\mu_k\}_{k \geq 1}$ 
from the state space to the action space. The policy $\pi$ is called a stationary policy if $\mu_k=\mu$ for all $k$. 
Given the state (i.e., the measurements), the policy provides the placement decision.

\subsection{Optimal Policy Based on Average Cost Optimality Equation}
\label{subsection:average-cost-no-backtracking-poisson-equation-approach}
We will first derive the structure of an optimal policy based on the average cost optimality equation 
(ACOE). Let $\lambda^*$ (or $\lambda^*(\xi_{out},\xi_{relay})$) be the optimal average cost per step for the unconstrained 
problem (\ref{eqn:unconstrained_problem_average_cost_with_outage_cost}) under the pure as-you-go deployment approach,   
and let $v^*(r,w)$ be the differential cost for the state $(r,w)$, where $1 \leq r \leq B$ and $w \in \mathcal{W}$. 
The average cost optimality equation for our MDP is as follows (by the theory of 
\cite[Chapter~$4$]{bertsekas07dynamic-programming-optimal-control-2}, for the case of finite $\mathcal{W}$, 
and by the theory developed 
in \cite[Chapter~$5$]{lerma-lasserre96mdp-book}, when $\mathcal{W}$ is a Borel subset of the real line): 

\footnotesize
\begin{eqnarray}
 v^*(r,w)&=&\min \bigg \{\min_{\gamma \in \mathcal{S}}(\gamma+ \xi_{out} Q_{out}(r,\gamma, w))+\xi_{relay}-\lambda^*  \nonumber\\
&& +  \sum_{w'}g(w')v^*(1,w'), -\lambda^*+\sum_{w'}g(w')v^*(r+1,w') \bigg \} \nonumber\\
&&  \forall 1 \leq r \leq B-1 \nonumber\\
v^*(B,w)&=& \min_{\gamma \in \mathcal{S}}(\gamma+ \xi_{out} Q_{out}(B,\gamma, w))+\xi_{relay}-\lambda^* \nonumber\\
&& +\sum_{w'}g(w')v^*(1,w') \label{eqn:average_cost_optimality_equation_no_backtracking}
\end{eqnarray}
\normalsize
where $g(w)$ was defined (in Section~\ref{subsection:channel-model}) 
to be the probability mass function or probability density function of shadowing $W$.

The ACOE~(\ref{eqn:average_cost_optimality_equation_no_backtracking}) can be explained as follows. When the state is $(r,w)$, 
the deployment agent can either place or may not place a relay. If he places a relay, he will incur a stage 
cost of $\min_{\gamma \in \mathcal{S}}(\gamma+ \xi_{out} Q_{out}(r,\gamma, w))+\xi_{relay}$ and the next (random) state is $(1, W')$, where 
$W'$ has p.m.f. or p.d.f. $g(w')$. 
If he does not place, then he incurs $0$ cost at that step and the next state is $(r+1, W')$. 
When at state $(B,w)$, he can only place a relay 
and incur a cost of $\min_{\gamma \in \mathcal{S}}(\gamma+ \xi_{out} Q_{out}(B,\gamma, w))+\xi_{relay}$ at that stage and the 
next (random) state is $(1,W')$. Note that, $\min_{\gamma \in \mathcal{S}}$ appears in the single-stage cost because 
choice of transmit power of the placed node is also a part of the action, and a transmit power is chosen so that 
the single-stage cost for a placed relay is minimized.

Note that, by   multiplying both sides of (\ref{eqn:average_cost_optimality_equation_no_backtracking}) with 
$g(w)$ and taking summation over $w$, we obtain the following:

\footnotesize
\begin{eqnarray}
 V(r)&=&\mathbb{E}_{W} \min \bigg \{ \min_{\gamma \in \mathcal{S}}(\gamma+ \xi_{out} Q_{out}(r,\gamma, W))+\xi_{relay}-\lambda^* \nonumber\\
     &  &   + V(1), -\lambda^*+V(r+1) \bigg \} \forall 1 \leq r \leq B-1 \nonumber\\
V(B)&=&\mathbb{E}_{W} \min_{\gamma \in \mathcal{S}}(\gamma+ \xi_{out} Q_{out}(B,\gamma, W))+\xi_{relay}-\lambda^*+ V(1) \nonumber\\
  \label{eqn:average_cost_optimality_equation_no_backtracking-V-equation}
\end{eqnarray}
\normalsize

where $V(r)=\sum_{w}g(w)v^*(r,w) \forall 1 \leq r \leq B$. Now, it is easy to see that if any 
$V(\cdot)$ satisfies (\ref{eqn:average_cost_optimality_equation_no_backtracking-V-equation}), then $V(\cdot)+c$ for any 
constant number $c$ also satisfies (\ref{eqn:average_cost_optimality_equation_no_backtracking-V-equation}). Hence, 
we can put $V(1)=\lambda^*$ in (\ref{eqn:average_cost_optimality_equation_no_backtracking-V-equation}) and obtain:

\footnotesize
\begin{eqnarray}
 V(r)&=&\mathbb{E}_{W} \min \bigg \{ \min_{\gamma \in \mathcal{S}}(\gamma+ \xi_{out} Q_{out}(r,\gamma, W))+\xi_{relay}, \nonumber\\
     &  &    V(r+1)-V(1) \bigg \} \forall 1 \leq r \leq B-1 \nonumber\\
V(B)&=&\mathbb{E}_{W} \min_{\gamma \in \mathcal{S}}(\gamma+ \xi_{out} Q_{out}(B,\gamma, W))+\xi_{relay} 
  \label{eqn:average_cost_optimality_equation_no_backtracking-V-equation-no-lambda}
\end{eqnarray}
\normalsize

{\em Remark:} Let $c(r,W):=\min_{\gamma \in \mathcal{S}}(\gamma+ \xi_{out} Q_{out}(r,\gamma, W))+\xi_{relay}$ be the (random) cost incurred if we place 
a relay at a distance $r$ from the previous relay. \eqref{eqn:average_cost_optimality_equation_no_backtracking-V-equation-no-lambda} 
shows the criteria for optimality to be $V(r)=\mathbb{E}_{W} \min \{c(r,W), V(r+1)-V(1)\}$ for $r \leq B-1$ and 
$V(B)=\mathbb{E}_{W} c(B,W)$. We will see in 
Algorithm~\ref{algorithm:optimal-policy-structure} that, by solving this system of (nonlinear) equations, one can find the optimal policy; there is no need 
to compute the differential cost for each state explicitly. Also, \eqref{eqn:average_cost_optimality_equation_no_backtracking-V-equation-no-lambda} 
will be particularly useful when we develop online deploy-and-learn algorithms in later sections, using the theory 
of stochastic approximation.

\begin{theorem}\label{theorem:uniqueness_of_V}
 There exists a unique vector $\underline{V}^*=[V^*(1), \, V^*(2), \, \cdots \, , V^*(B)]^T$ satisfying 
(\ref{eqn:average_cost_optimality_equation_no_backtracking-V-equation-no-lambda}). Also, 
$V^*(r) \geq r V^*(1)$ for all $r \in \{1,2,\cdots,B-1\}$ and $V^*(r)$ is increasing in $r$.
\end{theorem}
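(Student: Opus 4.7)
The strategy is to reduce the vector system \eqref{eqn:average_cost_optimality_equation_no_backtracking-V-equation-no-lambda} to a scalar fixed-point equation in $x := V(1)$. Starting from $V_B(x) := \mathbb{E}_{W} c(B, W)$ (a constant, independent of $x$), I would recursively define $V_r(x) := \mathbb{E}_W \min\{c(r, W),\, V_{r+1}(x) - x\}$ for $r = B-1, \ldots, 2$. Any $\underline{V}$ solving the system must then be of the form $V(r) = V_r(V(1))$, so uniqueness of $\underline{V}^*$ reduces to uniqueness of a zero of $f(x) := V_1(x) - x$. I would prove by backward induction that each $V_r(\cdot)$ is continuous and non-increasing on $\mathbb{R}$: continuity follows from dominated convergence with the uniform envelope $c(r, W) \leq P_M + \xi_{out} + \xi_{relay}$, while monotonicity follows because $V_{r+1}(x) - x$ is strictly decreasing by the inductive hypothesis and $\min\{c(r, W),\,\cdot\,\}$ is order-preserving. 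A parallel induction gives $V_r(x) \to \mathbb{E}_W c(r, W)$ as $x \to -\infty$ (since $V_{r+1}(x) - x$ eventually exceeds $c(r, W)$ a.s.) and $V_r(x) \to -\infty$ as $x \to +\infty$ (indeed, a short calculation yields $V_r(x) = \mathbb{E}_W c(B, W) - (B - r)\, x$ for all sufficiently large $x$). Consequently $f$ is continuous and strictly decreasing (the $-x$ term dominates the non-increasing $V_1$), with $f(-\infty) = +\infty$ and $f(+\infty) = -\infty$, yielding a unique zero $x^*$ and hence a unique $\underline{V}^*$.

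For the bound $V^*(r) \geq r\, V^*(1)$, I would apply the pointwise inequality $\min\{a, b\} \leq b$ inside the expectation in \eqref{eqn:average_cost_optimality_equation_no_backtracking-V-equation-no-lambda}, obtaining $V^*(r) \leq V^*(r+1) - V^*(1)$, i.e., $V^*(r+1) \geq V^*(r) + V^*(1)$, for each $r \in \{1, \ldots, B-1\}$; telescoping from $r = 1$ gives $V^*(r) \geq r\, V^*(1)$ throughout the stated range. To promote this chain into monotonicity of $V^*(r)$ in $r$ it suffices to show $V^*(1) \geq 0$. I would give a direct algebraic argument rather than invoking the MDP interpretation $V^*(1) = \lambda^* \geq 0$: evaluating the backward recursion at $x = 0$, a nonnegativity induction (using $c(r, W) \geq 0$ and $V_B(0) \geq 0$) shows $V_r(0) \geq 0$ for every $r$, so $f(0) = V_1(0) \geq 0$; combined with the strict monotonicity of $f$ and $f(x^*) = 0$, this forces $x^* = V^*(1) \geq 0$. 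Substituting back into $V^*(r+1) \geq V^*(r) + V^*(1)$ then yields the claimed monotonicity.

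The chief technical obstacle is the rigorous bookkeeping of the backward induction when $\mathcal{W}$ is an uncountable (Borel) subset of $\mathbb{R}$: at each stage one must produce integrable majorants so that dominated convergence is valid both for continuity in $x$ and for the $x \to \pm\infty$ limits, and one must check that these envelopes propagate cleanly through the $\min$ operator (which is straightforward but has to be written down). Beyond that, the proof is essentially a collection of one-line arithmetic consequences of the fixed-point equation itself.
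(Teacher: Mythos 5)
Your proof is correct and its core — a backward induction showing each $V(r)$ is a continuous, decreasing function of $V(1)$, reducing the system to a scalar fixed-point equation in $V(1)$, followed by the telescoping inequality $V^*(r+1)\geq V^*(r)+V^*(1)$ obtained from $\min\{a,b\}\leq b$ — is exactly the paper's argument. You go beyond the paper in two places. First, the paper asserts that ``$V(1)=\phi(V(1))$ with $\phi$ continuous and decreasing has a unique fixed point'' without addressing existence; your computation of the limits $V_1(x)\to\mathbb{E}_W c(1,W)$ as $x\to-\infty$ and $V_1(x)=\mathbb{E}_W c(B,W)-(B-1)x$ for large $x$ supplies the intermediate-value step that makes existence rigorous, which is a genuine (if small) improvement. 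Second, for nonnegativity of $V^*(1)$ the paper simply invokes the MDP interpretation $V^*(1)=\lambda^*>0$, whereas you give a self-contained algebraic argument via $f(0)=V_1(0)\geq 0$ and strict monotonicity of $f$; this buys independence from the average-cost machinery at the price of a slightly weaker conclusion ($V^*(1)\geq 0$ rather than $>0$, hence only non-decreasing rather than strictly increasing $V^*(r)$). The strict version is recovered within your own framework by noting $c(r,W)\geq P_1+\xi_{relay}>0$, so that the same induction gives $V_r(0)\geq\min\{P_1+\xi_{relay},V_{r+1}(0)\}>0$ and hence $f(0)>0$; you may want to state it that way to match the paper's claim exactly.
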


\begin{proof}
 See Appendix~\ref{appendix:mdp-for-pure-as-you-go-deployment}.
\end{proof}

\subsubsection{Policy Structure}
Algorithm~\ref{algorithm:optimal-policy-structure} specifies the optimal decision when the agent is $r$ steps away from the previously placed node and the shadowing realization from the current location to the previously placed node is $w$.
\begin{algorithm}[t!]
\hrule
{\bf Input:} $\xi_{out}$, $\xi_{relay}$, $\underline{V}^*$.\\
{\bf Output:} Placement decision at each step.\\
{\bf Pre-compute:} The threshold values $c_{th}(r):=V^*(r+1)-V^*(1)$ for all $1 \leq r \leq B-1$.\\
{\bf Initialization:} $r=1$ (distance from the previous node)\\
 \While{$1 \leq r \leq B$}{
 Measure  $Q_{out}(r,\gamma,w) \forall {\gamma \in \mathcal{S}}$;\\
  \uIf{$r \leq B-1$ and $\min_{\gamma \in \mathcal{S}}(\gamma+ \xi_{out} Q_{out}(r,\gamma, w))+\xi_{relay} \leq c_{th}(r)$}
{Place a new relay and use transmit power $\arg \min_{\gamma \in \mathcal{S}}(\gamma+ \xi_{out} Q_{out}(r,\gamma, w))$;\\
   Move to next step and set $r=1$;}
    \uElseIf{$r \leq B-1$ and $\min_{\gamma \in \mathcal{S}}(\gamma+ \xi_{out} Q_{out}(r,\gamma, w))+\xi_{relay} > c_{th}(r)$}
    {Do not place a relay and move to next step;\\
    $r=r+1$;\\
    }
    \Else
    {Place a new relay (since $r=B$);\\
    Use transmit power $\arg \min_{\gamma \in \mathcal{S}}(\gamma+ \xi_{out} Q_{out}(B,\gamma, w))$;\\
    Move to next step;\\
    Set $r=1$.
    }
 }

\hrule
\caption{OptAsYouGo Algorithm}
\label{algorithm:optimal-policy-structure}
\end{algorithm}

\begin{theorem}\label{theorem:optimal-policy-structure}
 The policy given by Algorithm~\ref{algorithm:optimal-policy-structure} is optimal 
for the unconstrained problem in (\ref{eqn:unconstrained_problem_average_cost_with_outage_cost}). The threshold $c_{th}(r)$ is 
increasing in $r$. 
\end{theorem}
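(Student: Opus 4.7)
My plan is to show that Algorithm~\ref{algorithm:optimal-policy-structure} is exactly the stationary deterministic policy that attains the arg-min in the ACOE \eqref{eqn:average_cost_optimality_equation_no_backtracking}, and then invoke a standard average-cost verification theorem to conclude optimality. Monotonicity of the thresholds will then drop out of Theorem~\ref{theorem:uniqueness_of_V}.

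For the first step, let $\underline{V}^*$ be the unique solution to \eqref{eqn:average_cost_optimality_equation_no_backtracking-V-equation-no-lambda} given by Theorem~\ref{theorem:uniqueness_of_V}, and set $\lambda^* := V^*(1)$. Define $v^*(r,w) := \min\{c(r,w) - \lambda^* + V^*(1),\, V^*(r+1) - \lambda^*\}$ for $r \leq B-1$ and $v^*(B,w) := c(B,w) - \lambda^* + V^*(1)$, using the shorthand $c(r,w) = \min_{\gamma \in \mathcal{S}}(\gamma + \xi_{out}Q_{out}(r,\gamma,w)) + \xi_{relay}$ introduced in the remark following \eqref{eqn:average_cost_optimality_equation_no_backtracking-V-equation-no-lambda}. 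Integrating against $g(\cdot)$ on both sides recovers \eqref{eqn:average_cost_optimality_equation_no_backtracking-V-equation-no-lambda}, so the pair $(\lambda^*, v^*)$ satisfies the ACOE \eqref{eqn:average_cost_optimality_equation_no_backtracking}. At a state $(r,w)$ with $r \leq B-1$, the minimizing action inside the outer $\min$ is ``place'' iff $c(r,w) \leq V^*(r+1) - V^*(1) = c_{th}(r)$, and at $r = B$ placement is forced; in both cases the power level is the inner arg-min in $c(r,w)$. This is exactly the decision rule of Algorithm~\ref{algorithm:optimal-policy-structure}.

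Next I would invoke a standard average-cost MDP verification result (e.g., the sufficient optimality conditions in \cite{lerma-lasserre96mdp-book}, Ch.~5, for the Borel-state case, or the vanishing-discount argument for the countable-state case) to conclude that the stationary policy achieving the arg-min in the ACOE attains the optimal average cost $\lambda^*$. The hypotheses to verify are: (a) bounded per-stage cost, which holds since $\mathcal{S}$ is finite, $Q_{out} \in [0,1]$, and $\xi_{relay}$ is a constant; (b) a regenerative/drift structure ensuring finiteness of the expected inter-regeneration cost, which holds because the state $r$ resets to $1$ at every placement and placement is forced within $B$ steps, so the cycle length is bounded by $B$ almost surely; and (c) existence of a measurable selector for the arg-min, which is immediate because the action set is finite and $c(r,\cdot)$ is measurable in $w$.

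For the monotonicity claim, Theorem~\ref{theorem:uniqueness_of_V} states that $V^*(r)$ is increasing in $r$, so $c_{th}(r) = V^*(r+1) - V^*(1)$ is increasing in $r \in \{1,\ldots,B-1\}$ as a shift of the increasing sequence $V^*(2), V^*(3), \ldots, V^*(B)$ by the constant $V^*(1)$. The main obstacle I anticipate is the verification step in the Borel-state setting when $\mathcal{W}$ is uncountable: one has to be careful about measurability of the differential-cost function $v^*(r,\cdot)$ defined above (it is measurable since it is a finite minimum of measurable functions) and about integrability of $v^*$ against $g(\cdot)$ (it is bounded whenever $Q_{out}(r,\gamma,\cdot)$ is, which is true for the models considered). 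Once these technicalities are handled, the argument reduces to the routine verification theorem and both claims of Theorem~\ref{theorem:optimal-policy-structure} follow.
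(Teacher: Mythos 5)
Your proposal is correct and follows essentially the same route as the paper: the paper's proof likewise reads the placement rule off the ACOE \eqref{eqn:average_cost_optimality_equation_no_backtracking} (place iff the placing cost does not exceed the continuation cost, which via \eqref{eqn:average_cost_optimality_equation_no_backtracking-V-equation-no-lambda} is exactly $c(r,w)\leq V^*(r+1)-V^*(1)$) and obtains monotonicity of $c_{th}(r)$ from the monotonicity of $V^*(\cdot)$ in Theorem~\ref{theorem:uniqueness_of_V}. You simply make explicit the construction of $v^*(r,w)$ from $\underline{V}^*$ and the verification-theorem hypotheses, which the paper leaves implicit by citing the standard average-cost theory when stating the ACOE.
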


\begin{proof}
 From (\ref{eqn:average_cost_optimality_equation_no_backtracking}), the optimal policy is to place a relay 
at state $(r,w)$ if the cost of placing is less than the cost of not placing. Hence, the policy structure 
follows from equations (\ref{eqn:average_cost_optimality_equation_no_backtracking}), 
(\ref{eqn:average_cost_optimality_equation_no_backtracking-V-equation}) 
and (\ref{eqn:average_cost_optimality_equation_no_backtracking-V-equation-no-lambda}). $c_{th}(r)$ is increasing 
in $r$ since $V^*(r+1)$ is increasing in $r$.
\end{proof}

{\em We denote the optimal policy given by Algorithm~\ref{algorithm:optimal-policy-structure} by 
$\pi^*(\xi_{out}, \xi_{relay})$.}

\subsection{Some properties of the optimal cost}
\label{subsection:properties-optimal-cost}

Let us consider a sub-class of stationary deployment policies (parameterized by $\underline{V}$, 
$\xi_{out} \geq 0$ and $\xi_{relay} \geq 0$) where $\underline{V}^*(\cdot)$ in 
Algorithm~\ref{algorithm:optimal-policy-structure} is replaced by any vector 
$\underline{V}$. Under this sub-class of policies, 
let us denote by $(U_k,\Gamma_k,Q_{out}^{(k,k-1)}), k \geq 1,$  the sequence of inter-node distances, transmit  
powers and link outage probabilities (see Figure~\ref{fig:pure-as-you-go-learning}).  Since 
shadowing is i.i.d. across links, the deployment process probabilistically 
restarts after each relay placement. Hence,   
$(U_k,\Gamma_k,Q_{out}^{(k,k-1)}), k \geq 1,$ is an i.i.d. sequence. Let 
$\overline{\Gamma}(\underline{V},\xi_{out},\xi_{relay})$, $\overline{Q}_{out}(\underline{V},\xi_{out},\xi_{relay})$ 
and $\overline{U}(\underline{V},\xi_{out},\xi_{relay})$ 
denote the mean power per link, mean outage 
per link and mean placement distance (in steps) respectively, under this sub-class of policies. We denote  
by $\overline{\Gamma}^*(\xi_{out},\xi_{relay})$, $\overline{Q}_{out}^*(\xi_{out},\xi_{relay})$ and $\overline{U}^*(\xi_{out},\xi_{relay})$ 
the optimal mean power per link, the optimal mean outage 
per link and the optimal mean placement distance (in steps) respectively, under 
Algorithm~\ref{algorithm:optimal-policy-structure}, where  $\underline{V}^*$ is used instead of any general $\underline{V}$.

Now, the optimal mean power per step, the optimal mean outage per step, 
and the optimal mean number of relays per step are given by 
$\frac{\overline{\Gamma}^*(\xi_{out},\xi_{relay})}{\overline{U}^*(\xi_{out},\xi_{relay})}$, 
$\frac{\overline{Q}_{out}^*(\xi_{out},\xi_{relay})}{\overline{U}^*(\xi_{out},\xi_{relay})}$ 
and $\frac{1}{\overline{U}^*(\xi_{out},\xi_{relay})}$ (by the Renewal-Reward theorem).

\begin{theorem}\label{theorem:lambda-increasing-concave-continuous-in-xi}
 The optimal average cost per step for problem 
(\ref{eqn:unconstrained_problem_average_cost_with_outage_cost}), $\lambda^*(\xi_{out},\xi_{relay})$, is concave, increasing 
and Lipschitz continuous in $\xi_{out}\geq 0$, $\xi_{relay} \geq 0$.
\end{theorem}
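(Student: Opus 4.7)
The plan is to reduce $\lambda^*$ to an infimum of affine functions of $(\xi_{out}, \xi_{relay})$ over stationary deterministic Markov policies. By Theorem~\ref{theorem:optimal-policy-structure} and the ACOE framework of Section~\ref{subsection:average-cost-no-backtracking-poisson-equation-approach}, an optimal stationary deterministic policy exists for every $(\xi_{out}, \xi_{relay}) \geq 0$, so we may write $\lambda^*(\xi_{out}, \xi_{relay}) = \inf_{\pi \in \Pi_s} J_\pi(\xi_{out}, \xi_{relay})$, where $\Pi_s$ denotes the stationary class and $J_\pi$ the associated long-run average cost per step. Since shadowing is i.i.d.\ across links (Section~\ref{subsection:channel-model}), under any $\pi \in \Pi_s$ the deployment process regenerates after each relay placement, so $(U_k, \Gamma_k, Q_{out}^{(k,k-1)})$ is an i.i.d.\ sequence with per-cycle means $U_\pi, \Gamma_\pi, Q_\pi$ that depend only on $\pi$ (and the propagation environment), not on the multipliers. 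By the Renewal-Reward theorem, as already invoked in Section~\ref{subsection:properties-optimal-cost},
\[
J_\pi(\xi_{out}, \xi_{relay}) \;=\; \frac{\Gamma_\pi + \xi_{out}\, Q_\pi + \xi_{relay}}{U_\pi},
\]
which is affine in $(\xi_{out}, \xi_{relay})$ with constant non-negative slopes $Q_\pi/U_\pi$ and $1/U_\pi$.

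Given this affine representation, the three claimed properties become direct consequences of elementary facts about infima of affine functions. Concavity: the infimum of any family of affine functions is jointly concave. Monotonicity: since $Q_\pi \geq 0$ and $U_\pi \geq 0$, each $J_\pi$ is componentwise non-decreasing on $\mathbb{R}_+^2$, and this property is preserved by pointwise infimum. Lipschitz continuity: $U_k \geq 1$ and $Q_{out}^{(k,k-1)} \in [0,1]$ give $U_\pi \geq 1$ and $Q_\pi \in [0,1]$, so both slopes of $J_\pi$ lie in $[0,1]$, making every $J_\pi$ $1$-Lipschitz in the $\ell^1$ norm. Writing $J_\pi(\xi) \leq J_\pi(\xi') + \|\xi - \xi'\|_1$ for every $\pi$ and taking the infimum over $\pi$ on both sides yields $\lambda^*(\xi) \leq \lambda^*(\xi') + \|\xi - \xi'\|_1$, and the reverse bound follows by symmetry.

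The central step is the affine representation of $J_\pi$ through the regenerative structure of pure as-you-go deployment; once that is in place, concavity, monotonicity, and Lipschitz continuity each drop out in a single line. The only point that warrants care is the reduction to stationary policies, i.e., the equality $\inf_{\pi \in \Pi} J_\pi = \inf_{\pi \in \Pi_s} J_\pi$. This is sandwiched immediately: since $\Pi_s \subseteq \Pi$, the left side is no larger than the right; and the existence of an optimal stationary policy (Theorem~\ref{theorem:optimal-policy-structure}) certifies that the left side is in fact achieved inside $\Pi_s$, giving equality.
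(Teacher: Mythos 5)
Your proposal is correct and takes essentially the same route as the paper: both express $\lambda^*(\xi_{out},\xi_{relay})$ as an infimum over stationary policies of the renewal-reward ratio $\frac{\Gamma_\pi+\xi_{out}Q_\pi+\xi_{relay}}{U_\pi}$, observe that each such function is affine and increasing in the multipliers with the denominator independent of them, and conclude concavity, monotonicity, and $1$-Lipschitz continuity from the pointwise infimum. Your version merely spells out two steps the paper leaves implicit — the sandwich argument reducing $\Pi$ to $\Pi_s$, and the bounds $U_\pi\geq 1$, $Q_\pi\in[0,1]$ that make the slopes lie in $[0,1]$ — so no further comment is needed.
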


\begin{proof}
 See Appendix~\ref{appendix:mdp-for-pure-as-you-go-deployment}.
\end{proof}

\begin{theorem}\label{theorem:V-continuous-in-xi}
 $\underline{V}^*=(V^*(1),V^*(2), \cdots, V^*(B))$ is Lipschitz continuous in $(\xi_{out},\xi_{relay})$.
\end{theorem}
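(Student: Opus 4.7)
\medskip

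\noindent\textbf{Proof plan for Theorem \ref{theorem:V-continuous-in-xi}.}

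The plan is to exploit the triangular structure of the reduced optimality equation~\eqref{eqn:average_cost_optimality_equation_no_backtracking-V-equation-no-lambda}: the component $V^{*}(B)$ is determined explicitly by the parameters, the equation for $V^{*}(r)$ with $r\le B-1$ couples only to $V^{*}(r+1)$ and $V^{*}(1)$, and $V^{*}(1)$ is pinned down by the earlier normalization $V(1)=\lambda^{*}$. Combined with Theorem~\ref{theorem:lambda-increasing-concave-continuous-in-xi}, this allows a backward induction on $r$.

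First I would collect two elementary Lipschitz facts. Write $\theta=(\xi_{out},\xi_{relay})$ and let $c_{\theta}(r,w):=\min_{\gamma\in\mathcal S}\bigl(\gamma+\xi_{out}Q_{out}(r,\gamma,w)\bigr)+\xi_{relay}$. Because $Q_{out}(r,\gamma,w)\in[0,1]$ and $\mathcal S$ is finite, the map $\theta\mapsto \gamma+\xi_{out}Q_{out}(r,\gamma,w)+\xi_{relay}$ is $1$-Lipschitz (in $\xi_{out}$ via $Q_{out}\le 1$ and in $\xi_{relay}$ trivially) for each fixed $(\gamma,r,w)$. Since the minimum of finitely many Lipschitz functions is Lipschitz with the same constant, and expectation preserves Lipschitz continuity, $\theta\mapsto c_{\theta}(r,w)$ and $\theta\mapsto\mathbb E_{W}c_{\theta}(r,W)$ are Lipschitz in $\theta$ with a constant $L_{0}$ that does not depend on $r$. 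The second fact I need is the elementary inequality $|\min(a,b)-\min(a',b')|\le|a-a'|+|b-b'|$, which lets me propagate Lipschitz bounds through the $\min$ inside \eqref{eqn:average_cost_optimality_equation_no_backtracking-V-equation-no-lambda}.

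Next, recall the passage preceding \eqref{eqn:average_cost_optimality_equation_no_backtracking-V-equation-no-lambda} that normalizes the solution by setting $V(1)=\lambda^{*}$. By Theorem~\ref{theorem:uniqueness_of_V} the solution of \eqref{eqn:average_cost_optimality_equation_no_backtracking-V-equation-no-lambda} is unique, and by this normalization we therefore have $V^{*}(1)=\lambda^{*}(\xi_{out},\xi_{relay})$. Theorem~\ref{theorem:lambda-increasing-concave-continuous-in-xi} now gives that $V^{*}(1)$ is Lipschitz in $\theta$; call the constant $L_{1}$. Directly from \eqref{eqn:average_cost_optimality_equation_no_backtracking-V-equation-no-lambda} the terminal component satisfies $V^{*}(B)=\mathbb E_{W}c_{\theta}(B,W)$, which by the first fact is Lipschitz with constant at most $L_{0}$.

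I would then perform a backward induction for $r=B-1,B-2,\ldots,2$. Assume $V^{*}(r+1)$ is Lipschitz in $\theta$ with some constant $L_{r+1}$. Writing the equation for two parameter values $\theta_{1},\theta_{2}$ and applying the $\min$-difference inequality inside the expectation,
\[
|V^{*}_{\theta_{1}}(r)-V^{*}_{\theta_{2}}(r)|
\le \mathbb E_{W}\bigl|c_{\theta_{1}}(r,W)-c_{\theta_{2}}(r,W)\bigr|
+\bigl|V^{*}_{\theta_{1}}(r+1)-V^{*}_{\theta_{2}}(r+1)\bigr|
+\bigl|V^{*}_{\theta_{1}}(1)-V^{*}_{\theta_{2}}(1)\bigr|,
\]
so $V^{*}(r)$ is Lipschitz with constant $L_{r}\le L_{0}+L_{r+1}+L_{1}$. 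Since the induction runs over only $B-1$ steps, the final Lipschitz constants $L_{2},\ldots,L_{B}$ are all finite, which gives the claim. The main subtlety I would be careful about is the role of $V^{*}(1)$: it appears in every one of the $B-1$ coupled equations, so without Theorem~\ref{theorem:lambda-increasing-concave-continuous-in-xi} pinning down its Lipschitz continuity first, the backward recursion would not close; everything else is a mechanical propagation of the $\min$-difference bound through a finite number of stages.
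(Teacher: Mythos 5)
Your proposal is correct and follows essentially the same route as the paper's own proof: anchor $V^{*}(1)=\lambda^{*}$ via Theorem~\ref{theorem:lambda-increasing-concave-continuous-in-xi}, observe that $V^{*}(B)=\mathbb{E}_{W}c_{\theta}(B,W)$ is Lipschitz directly from \eqref{eqn:average_cost_optimality_equation_no_backtracking-V-equation-no-lambda}, and propagate backward through $r=B-1,\ldots,2$. The only difference is that you make explicit the elementary ingredients (the uniform Lipschitz constant $L_{0}$ for $c_{\theta}(r,w)$ via $Q_{out}\le 1$ and the bound $|\min(a,b)-\min(a',b')|\le|a-a'|+|b-b'|$) that the paper leaves implicit.
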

\begin{proof}
 See Appendix~\ref{appendix:mdp-for-pure-as-you-go-deployment}.
\end{proof}

\begin{theorem}\label{theorem:outage_decreasing_with_xio_placement_rate_decreasing_with_xir}
 For a given $\xi_{out}$, the mean number of relays per step  under 
Algorithm~\ref{algorithm:optimal-policy-structure},  
$\frac{1}{\overline{U}^*(\xi_{out},\xi_{relay})}$, decreases with $\xi_{relay}$. Similarly, 
for a given $\xi_{relay}$, the optimal mean outage per step, $\frac{\overline{Q}_{out}^*(\xi_{out},\xi_{relay})}{\overline{U}^*(\xi_{out},\xi_{relay})}$, 
decreases with $\xi_{out}$.
\end{theorem}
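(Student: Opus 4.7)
The plan is to exploit the Lagrangian/linear-combination structure of $\lambda^*(\xi_{out},\xi_{relay})$, together with the Renewal-Reward representation introduced just before the theorem, via a standard interchange (``four corners'') inequality. Because the cost \eqref{eqn:unconstrained_problem_average_cost_with_outage_cost} is an affine function of $(\xi_{out},\xi_{relay})$ for each fixed stationary policy, one expects the optimal mean-relays-per-step $\tfrac{1}{\overline{U}^*}$ and optimal mean-outage-per-step $\tfrac{\overline{Q}_{out}^*}{\overline{U}^*}$ to be non-increasing in $\xi_{relay}$ and $\xi_{out}$, respectively; this is the standard envelope/subgradient behavior of the concave function $\lambda^*$ established in Theorem~\ref{theorem:lambda-increasing-concave-continuous-in-xi}.

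To carry this out, first fix $\xi_{out}$ and take two values $\xi_{relay}^{(1)}<\xi_{relay}^{(2)}$. Let $\pi_1=\pi^*(\xi_{out},\xi_{relay}^{(1)})$ and $\pi_2=\pi^*(\xi_{out},\xi_{relay}^{(2)})$ be the corresponding optimal policies obtained from Algorithm~\ref{algorithm:optimal-policy-structure}. By the Renewal-Reward representation noted just before the theorem, the per-step cost under any stationary policy $\pi$ in this sub-class is the affine function
\begin{equation*}
J(\pi;\xi_{out},\xi_{relay}) \;=\; \frac{\overline{\Gamma}(\pi)}{\overline{U}(\pi)} + \xi_{out}\,\frac{\overline{Q}_{out}(\pi)}{\overline{U}(\pi)} + \xi_{relay}\,\frac{1}{\overline{U}(\pi)}.
\end{equation*}
Writing the two optimality inequalities $J(\pi_i;\xi_{out},\xi_{relay}^{(i)})\le J(\pi_j;\xi_{out},\xi_{relay}^{(i)})$ for $(i,j)=(1,2)$ and $(2,1)$, and adding, the $\overline{\Gamma}/\overline{U}$ and $\overline{Q}_{out}/\overline{U}$ terms cancel and one obtains
\begin{equation*}
\bigl(\xi_{relay}^{(2)}-\xi_{relay}^{(1)}\bigr)\!\left(\frac{1}{\overline{U}^*(\xi_{out},\xi_{relay}^{(2)})}-\frac{1}{\overline{U}^*(\xi_{out},\xi_{relay}^{(1)})}\right)\;\le\;0,
\end{equation*}
which is the first claim. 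The second claim follows by the identical argument with the roles of $\xi_{out}$ and $\xi_{relay}$ swapped, keeping $\xi_{relay}$ fixed and comparing $\xi_{out}^{(1)}<\xi_{out}^{(2)}$; after cancellation one is left with a sign inequality on $\overline{Q}_{out}^*/\overline{U}^*$.

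The only subtlety I anticipate is the justification that the stationary policy $\pi^*(\xi_{out},\xi_{relay})$ delivered by Algorithm~\ref{algorithm:optimal-policy-structure} really does attain the infimum in \eqref{eqn:unconstrained_problem_average_cost_with_outage_cost} over \emph{all} policies $\Pi$, so that the comparison $J(\pi_i;\cdot,\cdot)\le J(\pi_j;\cdot,\cdot)$ is valid. This is not hard, though, because Theorem~\ref{theorem:uniqueness_of_V} provides a unique solution to the ACOE \eqref{eqn:average_cost_optimality_equation_no_backtracking-V-equation-no-lambda}, and the standard verification theorem for average-cost MDPs (cf.\ \cite[Chapter~4]{bertsekas07dynamic-programming-optimal-control-2}, \cite[Chapter~5]{lerma-lasserre96mdp-book}) then identifies the ACOE optimum with $\lambda^*$; combined with the probabilistic restart after every placement (which yields the Renewal-Reward identity $\lambda^*=\overline{\Gamma}^*/\overline{U}^*+\xi_{out}\overline{Q}_{out}^*/\overline{U}^*+\xi_{relay}/\overline{U}^*$), the interchange inequality above is all that is required.
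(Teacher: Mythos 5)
Your proof is correct and is essentially the argument the paper relies on: the paper simply cites the interchange ("four corners") argument from Theorem~5 of \cite{chattopadhyay-etal15measurement-based-impromptu-deployment-arxiv-v1}, which is exactly the two-point optimality comparison plus Renewal-Reward representation you carry out, and your handling of the verification step (that Algorithm~\ref{algorithm:optimal-policy-structure} attains the infimum, via Theorem~\ref{theorem:uniqueness_of_V} and Theorem~\ref{theorem:optimal-policy-structure}) is the right justification. The only cosmetic remark is that the argument yields ``non-increasing'' rather than strictly decreasing, which is the intended reading of the theorem statement.
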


\begin{proof}
 The proof is exactly same as the proof of \cite[Theorem~$5$]{chattopadhyay-etal15measurement-based-impromptu-deployment-arxiv-v1}.
\end{proof}

\subsection{A note on the objective function in \eqref{eqn:unconstrained_problem_average_cost_with_outage_cost}}
\label{subsecion:a-note-on-the-unconstrained-objective-function}
Even though the deployment policy developed in this section uses only the measurements made to the  immediately previous placed node in order 
to make a placement location,  we will see in subsequent sections that measurements to all placed relay nodes located within $B$~steps from the 
current location of the agent will be used for on-line learning of the optimal deployment policy. A question that naturally arises is whether 
we can do better with the additional measurements (when the propagation parameters are known and the optimal policy can be computed 
prior to deployment); this might require skipping some already placed relay nodes after the deployment is over. The 
possibility of relay skipping was considered in 
\cite{chattopadhyay-etal13measurement-based-impromptu-placement_wiopt}; in the current paper, 
we briefly describe a similar formulation in our context and 
explain why we rule out the possibility of relay skipping.

Let us consider deployment up to $x$ steps. 
After the deployment is over, we construct 
a directed acyclic graph over the deployed nodes (including the sink) as follows.  
Links are all directed edges
from each node to every node with smaller index and located within a distance of $B$~steps.  Hence, if $i$ and
$j$ are two nodes with $i>j$ and $\sum_{k=j+1}^i U_k \leq B$, there is a link $(i,j)$ between
them.  Consider all directed acyclic paths from node $N_x$  
to the sink over this graph.  Let us denote by $\mathbf{p}$ any arbitrary directed
acyclic path, and by
$\mathcal{E}(\mathbf{p})$ the set of (directed) links of the path
$\mathbf{p}$.  We also define $\mathcal{P}_x:=\{\mathbf{p}:(i,j) \in \mathcal{E}(\mathbf{p}) \implies N_x \geq i>j \geq 0, \sum_{k=j+1}^i U_k \leq B \}$. 
Let us denote a generic link (edge) on this graph by $e$, and the transmit power and outage probability on edge~$e$ by 
$\Gamma^{(e)}$ and $Q_{out}^{(e)}$. 

Let us consider the following problem: 

\small
\begin{eqnarray}
 && \min_{\pi \in \Pi} \lim \sup_{x \rightarrow \infty}  \nonumber\\
 &&\frac{ \mathbb{E}_{\pi} \bigg( \min_{\mathbf{p} \in \mathcal{P}_x} \sum_{e \in \mathcal{E}(\mathbf{p})} \bigg( \Gamma^{(e)}+ \xi_{out} Q_{out}^{(e)} \bigg) + \xi_{relay} N_x \bigg) }{x} \nonumber\\
 && \label{eqn:shortest-path-problem}
\end{eqnarray}
\normalsize
We call $\sum_{e \in \mathcal{E}(\mathbf{p})} \bigg( \Gamma^{(e)}+ \xi_{out} Q_{out}^{(e)} \bigg)$ the
length of the path $\mathbf{p}$, and $\min_{\mathbf{p} \in
  \mathcal{P}_x} \sum_{e \in \mathcal{E}(\mathbf{p})} \bigg( \Gamma^{(e)}+ \xi_{out} Q_{out}^{(e)} \bigg)$ the
length of the shortest path. 

Formulation of problem~\eqref{eqn:shortest-path-problem} as an MDP will require as the typical state the distance of all nodes located within 
$B$~steps from the current location, the realization of shadowing to all these nodes (through the measured outage probabilities), and 
the lengths of the shortest paths from all these nodes to the sink. A similar situation was considered in 
\cite{chattopadhyay-etal13measurement-based-impromptu-placement_wiopt}. It turns out that the state space becomes very large 
(the number of all possible lengths of shortest paths grows to $\infty$ as $x \rightarrow \infty$, even when the set 
$\mathcal{W}$ of possible values of shadowing is finite), and the 
policy computation becomes numerically intensive; but the numerical results of \cite{chattopadhyay-etal13measurement-based-impromptu-placement_wiopt} 
show that the margin of performance improvement achieved via this formulation (instead of the formulation used 
earlier in this section) is not significant. Hence, in this paper, we only consider 
formulation~\eqref{eqn:unconstrained_problem_average_cost_with_outage_cost} and proceed with it.

\section{OptAsYouGoLearning: Learning with  Deployment for Given  Multipliers}
\label{section:learning-for-pure-as-you-go-deployment-given-xi}

Note that, for any  given values of $\xi_{out}$ and $\xi_{relay}$, the optimal policy  given by Algorithm~\ref{algorithm:optimal-policy-structure} can be completely specified 
by the vector $\underline{V}^*$. 
But, the computation of $\underline{V}^*$ requires the agent to solve a system of nonlinear equations (which is 
computationally intensive), 
and these nonlinear equations can be specified only when the channel model parameters (e.g., path-loss 
exponent $\eta$ and  standard deviation $\sigma$ for log-normal shadowing) are 
known apriori. However, in practice, these parameters may not be available prior to deployment. Under this situation, the 
deployment agent has to {\em learn} the optimal policy as deployment progresses, and  
use the corresponding updated policy at each step to make a placement decision. In order 
to address this requirement, we propose 
an algorithm which will maintain a running estimate of $\underline{V}^*$, and update this estimate at each step 
(using new measurements made at each step). Using   
the theory of Asynchronous Stochastic Approximation (see \cite{bhatnagar11borkar-meyn-theorem-asynchronous-stochastic-approximation}), 
we show that, as the number of deployed relays goes to infinity, the running estimate  
converges to $\underline{V}^*$ almost surely.  
From \eqref{eqn:average_cost_optimality_equation_no_backtracking-V-equation-no-lambda} 
(and the notation defined immediately after \eqref{eqn:average_cost_optimality_equation_no_backtracking-V-equation-no-lambda}), we see that 
the optimal $\underline{V}^*$ is the unique real zero of the system of equations: 
$\mathbb{E}_{W} \min \{c(r,W), V(r+1)-V(1)\}-V(r)=0$ for $r \leq B-1$ and $\mathbb{E}_{W} \, c(B,W)-V(B)=0$. 
We use asynchronous stochastic approximation so that the iterates $\{\underline{V}^{(k)}\}_{k \geq 0}$ converge asymptotically to this unique zero.

\subsection{OptAsYouGoLearning Algorithm}\label{subsection:OptAsYouGoLearning_algorithm}

Suppose that the deployment agent is standing $k$~steps away from the sink node. 
At the $k$-th step, the agent makes a placement decision and then performs a learning operation.  
Let us recall the deployment process (see 
Section~\ref{subsection:deployment-process} and Figure~\ref{fig:pure-as-you-go-learning}) 
and notation: $\mathcal{I}_k:=\{ r \in \{1,2,\cdots,B \}: \text{a relay was placed at a distance $(k-r)\delta$ from the sink} \}\}$. 
For the learning operation, 
$\mathcal{I}_k \subset \{1, \cdots, B\}$ denotes the set of the 
values of $r$ for which links from the current location to the placed relay $r$ 
steps backwards are measured, and for which $V(r)$ is updated, when the agent is at a distance $k \delta$ from the sink. 
Clearly, for each $k \geq 1$, $\mathcal{I}_k$  is a random set. 
Let us denote by $\underline{V}^{(k)}$ the estimate 
of $\underline{V}^*$ after an update (i.e., a learning operation) is made at the $k$-th step from the sink. 
 At step $k$ (after a placement decision is made), $V^{(k-1)}(r)$ for 
$r \in \mathcal{I}_k$ is updated to $V^k(r)$, and it is not updated for $r \notin \mathcal{I}_k$ 
(which means that $V^{(k)}(r)=V^{(k-1)}(r)$ for $r \notin \mathcal{I}_k$). 
Let us define $\nu(r,k):=\sum_{i=1}^k \mathbb{I} \{ r \in \mathcal{I}_i \}$ the number of times the estimate of $V^*(r)$ 
is updated up to the $k$-th step. 

Note that, Algorithm~\ref{algorithm:optimal-policy-structure} 
requires the agent to measure link quality only to the previous node, whereas the learning algorithm presented in this section  
involves link quality measurement to more than one previous nodes (unlike our prior paper \cite{chattopadhyay-etal15measurement-based-impromptu-deployment-arxiv-v1}). {\em This is necessary because, if we make measurement only to last relay, then, depending on the 
initial estimate   $\underline{V}^{(0)}$, there could arise a situation that the inter-relay distance never equals to  
$B$~steps in the entire deployment process, which implies that $V^{(0)}(B)$ will never be updated, thereby converging 
to an unintended policy. Making measurements to all previously placed nodes located at distance less than 
$B\delta$ from the current location ensures that $\liminf_{k \rightarrow \infty}\frac{\nu(r,k)}{k}>0$ almost surely, which is 
required for the convergence proof.}

The OptAsYouGoLearning algorithm is provided in Algorithm~\ref{algorithm:OptAsYouGoLearning}.

\begin{algorithm}[t!]
\hrule
{\bf Input:} $\xi_{out}$, $\xi_{relay}$, and a decreasing positive sequence $\{a(n)\}_{n \geq 1}$ 
such that $\sum_{n=1}^{\infty} a(n)=\infty$, $\sum_{n=1}^{\infty} a^2(n) < \infty$.\\
{\bf Output:} Placement decision at each step.\\
{\bf Initialization:} $r'=1$ (distance from the previous node), $k=1$ (distance of the current location from the sink), initial estimate  $\underline{V}^{(0)}$.\\
 \While{$1 \leq r' \leq B$}{
 Find $\mathcal{I}_k:=\{ r \in \{1,2,\cdots,B \}: \text{relay placement at $(k-r)\delta$ distance from sink} \}\}$;\\
 Find $\nu(r,k):=\sum_{i=1}^k \mathbb{I} \{ r \in \mathcal{I}_i \} \forall r \in \{1,2,\cdots,B\}$ ;\\
 Measure  $Q_{out}(r,\gamma,w_r) \forall {\gamma \in \mathcal{S}}, r \in \mathcal{I}_k$;\\
  \uIf{$r' \leq B-1$ and $\min_{\gamma \in \mathcal{S}}(\gamma+ \xi_{out} Q_{out}(r',\gamma, w_{r'}))+\xi_{relay} \leq -V^{(k-1)}(1)+V^{(k-1)}(r'+1)$}
{Place a new relay and use transmit power $\arg \min_{\gamma \in \mathcal{S}}(\gamma+ \xi_{out} Q_{out}(r',\gamma, w_{r'}))$;\\
Do the following updates:

\small
\begin{eqnarray}
 && V^{(k)}(r)\nonumber\\
 &=&V^{(k-1)}(r)+ a(\nu(r,k)) \mathbb{I}\{r \in \mathcal{I}_k\}  \bigg[ \min \bigg \{ \min_{\gamma}(\gamma+ \nonumber\\
&& \xi_{out} Q_{out}(r,\gamma, w_r)) +\xi_{relay}, -V^{(k-1)}(1) \nonumber\\
&& +V^{(k-1)}(r+1) \bigg \} -V^{(k-1)}(r) \bigg], \forall 1 \leq r \leq B-1 \nonumber\\
&& V^{(k)}(B) \nonumber\\
&=&V^{(k-1)}(B)+ a(\nu(B,k)) \mathbb{I}\{B \in \mathcal{I}_k\}  \bigg[  \min_{\gamma}(\gamma+ \nonumber\\
&& \xi_{out} Q_{out}(B,\gamma, w_B))  +\xi_{relay}-V^{(k-1)}(B) \bigg] 
\label{eqn:learning_no_backtracking_given_xio_xir_update_part}
\end{eqnarray}
\normalsize

   Move to next step and set $r'=1$;}
    \uElseIf{$r' \leq B-1$ and $\min_{\gamma \in \mathcal{S}}(\gamma+ \xi_{out} Q_{out}(r',\gamma, w_{r'}))+\xi_{relay} > -V^{(k-1)}(1)+V^{(k-1)}(r'+1)$}
    {Do not place, do the same updates as \eqref{eqn:learning_no_backtracking_given_xio_xir_update_part};\\
    Move to next step and do $r'=r'+1$;\\
    }
    \Else
    {Place a new relay (since $r'=B$);\\
    Use transmit power $\arg \min_{\gamma \in \mathcal{S}}(\gamma+ \xi_{out} Q_{out}(B,\gamma, w_B))$;\\
        Do the same updates as \eqref{eqn:learning_no_backtracking_given_xio_xir_update_part};\\
    Move to next step and set $r'=1$.
    }
    k=k+1;
 }
\hrule
\caption{OptAsYouGoLearning Algorithm}
\label{algorithm:OptAsYouGoLearning}
\end{algorithm}

\begin{theorem}\label{theorem:OptAsYouGoLearning}
 Under Algorithm~\ref{algorithm:OptAsYouGoLearning}, $V^{(k)}(r) \rightarrow V^*(r)$ almost surely for all $1 \leq r \leq B$.
\end{theorem}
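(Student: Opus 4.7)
The plan is to recognize the update \eqref{eqn:learning_no_backtracking_given_xio_xir_update_part} as an asynchronous stochastic approximation (SA) scheme whose mean field vanishes precisely at $\underline{V}^\ast$, and to invoke the Borkar--Meyn-type convergence result for asynchronous SA (as formulated in the Bhatnagar reference cited right before the theorem). Define, for $r \leq B-1$, $h_r(\underline V) := \mathbb{E}_W \min\{c(r,W),\, V(r+1)-V(1)\} - V(r)$ and $h_B(\underline V) := \mathbb{E}_W c(B,W) - V(B)$, where $c(r,W) := \min_{\gamma \in \mathcal{S}}(\gamma + \xi_{out} Q_{out}(r,\gamma,W)) + \xi_{relay}$. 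Then the recursion reads
\[
V^{(k)}(r) = V^{(k-1)}(r) + a(\nu(r,k))\,\mathbb{I}\{r \in \mathcal{I}_k\}\,\bigl(h_r(\underline V^{(k-1)}) + M_r^{(k)}\bigr),
\]
with $M_r^{(k)}$ obtained by replacing the expectation defining $h_r$ with its single-sample evaluation at the fresh shadowing realisation $w_r$. By \eqref{eqn:average_cost_optimality_equation_no_backtracking-V-equation-no-lambda} and Theorem~\ref{theorem:uniqueness_of_V}, $\underline V^\ast$ is the unique zero of $h$.

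Next I would discharge the standard SA hypotheses. The step-size conditions $\sum_n a(n) = \infty$ and $\sum_n a^2(n) < \infty$ are assumed. Lipschitz continuity of $h$ follows from $|\min\{a,x\}-\min\{a,y\}| \leq |x-y|$ and linearity of expectation. For the noise, because shadowing on distinct links is i.i.d.\ (Section~\ref{subsection:channel-model}) and the link from the agent's location at step $k$ to a previously placed relay is a link that has \emph{never} been measured before, conditionally on $\mathcal{F}_{k-1}$ each $w_r$ is an independent draw from the shadowing distribution, so $\mathbb{E}[M_r^{(k)} \mid \mathcal{F}_{k-1}] = 0$; boundedness of $c(r,W)$ (finite $\mathcal{S}$, $Q_{out}\in[0,1]$) yields the requisite bounded conditional second moment. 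For the asynchronous update-frequency hypothesis, since the algorithm forces a placement within every $B$ consecutive steps, and any relay placed at step $k_0$ is within $B\delta$ of the agent for the next $B$ steps (so it contributes to $\mathcal{I}_{k_0+r}$ for every $r \in \{1,\dots,B\}$), I obtain $\nu(r,k) \geq \lfloor(k-r)/B\rfloor$ deterministically. Hence $\liminf_k \nu(r,k)/k \geq 1/B > 0$ for every $r$, and, because $\{a(n)\}$ is monotone, the required boundedness of step-size ratios $a(\nu(r,k))/a(\nu(r',k))$ also holds.

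It remains to verify a.s.\ boundedness of the iterates and global asymptotic stability of the ODE $\dot{\underline V} = h(\underline V)$ at $\underline V^\ast$. Boundedness of $\{V^{(k)}(B)\}$ is immediate because its update is a scalar Robbins--Monro recursion with constant drift target $\mathbb{E}_W c(B,W)$; with $V^{(k)}(B)$ bounded, the $\min$-truncation $\min\{c(r,W), V^{(k-1)}(r+1)-V^{(k-1)}(1)\}$ caps the drift of the remaining components and a backward induction (or equivalently a check of the Borkar--Meyn scaling limit $h_\infty$, which here is piecewise linear with the origin as a globally asymptotically stable equilibrium) yields uniform boundedness of $\|\underline V^{(k)}\|$. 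The genuinely hard step is the ODE stability: the map $T(\underline V):=\underline V + h(\underline V)$ is only non-expansive in the sup norm because perturbations of $V(1)$ propagate into \emph{every} component through the differences $V(r+1)-V(1)$, so it is not a strict contraction and the usual Banach-type argument fails. My plan is to exploit the hierarchical monotonicity established in Theorem~\ref{theorem:uniqueness_of_V} ($V^\ast(r)\geq rV^\ast(1)$, $V^\ast$ increasing in $r$): first observe that $V(B)(t) \to V^\ast(B)$ exponentially without dependence on the other components, then analyse the reduced ODE on $(V(1),\ldots,V(B-1))$ via a Lyapunov function such as $L(\underline V)=\max_{r}|V(r)-V^\ast(r)|$, using the identity $\min\{c,x\}-\min\{c,y\} \in [0,1]\cdot(x-y)$ to show that along trajectories $L$ is non-increasing and strictly decreasing whenever $\underline V \neq \underline V^\ast$; uniqueness of the fixed point from Theorem~\ref{theorem:uniqueness_of_V} then rules out any other limit and forces convergence to $\underline V^\ast$. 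With all the hypotheses of the asynchronous SA theorem in place, the conclusion $V^{(k)}(r) \to V^\ast(r)$ almost surely for every $r$ follows.
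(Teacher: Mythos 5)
Your overall strategy coincides with the paper's: both cast the update \eqref{eqn:learning_no_backtracking_given_xio_xir_update_part} as an asynchronous stochastic approximation with the same mean field (your $h_r$ is the paper's $f_r$), identify $\underline{V}^*$ as its unique zero via Theorem~\ref{theorem:uniqueness_of_V}, verify the step-size, Lipschitz, martingale-noise and update-frequency hypotheses (your bound $\nu(r,k)\geq \lfloor (k-r)/B\rfloor$ is exactly the paper's ``updated at least once every $B$ steps'' argument, and your justification of the martingale property from i.i.d.\ shadowing on never-before-measured links is correct), check the Borkar--Meyn scaling limit for boundedness, and invoke the asynchronous convergence theorem of the Bhatnagar reference.

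The gap is in the ODE stability step, which you rightly flag as the hard part but then resolve with an argument that fails. The operator $T(\underline{V})(r)=\mathbb{E}_W\min\{c(r,W),\,V(r+1)-V(1)\}$ is \emph{not} non-expansive in the sup norm: because of the difference $V(r+1)-V(1)$ it is only $2$-Lipschitz, and consequently $L(\underline{V})=\max_r|V(r)-V^*(r)|$ is not a Lyapunov function. Concretely, writing $e(r)=V(r)-V^*(r)$, at a coordinate attaining $L=e(r)>0$ one has $\dot e(r)=\theta\bigl(e(r+1)-e(1)\bigr)-e(r)$ for some $\theta\in[0,1]$, and since $e(r+1)-e(1)$ can be as large as $2L$ this gives only $\dot e(r)\le (2\theta-1)L$, which is positive when $\theta>1/2$; your inequality $\min\{c,x\}-\min\{c,y\}\in[0,1]\cdot(x-y)$ does not rescue this. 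The hierarchical decoupling helps only for $V(B)$: the remaining components stay fully coupled through the common $-V(1)$ term, so there is no cascade. The correct route --- and the one the paper takes via Lemma~\ref{lemma:ode-globally-asymptotically-stable}, which defers to Salodkar et al., Abounadi et al.\ and Borkar--Soumyanath --- is the RVI-Q-learning machinery: write $T(\underline{V})=F(\underline{V})-V(1)\mathbf{1}$ with $F(\underline{V})(r)=\mathbb{E}_W\min\{c(r,W)+V(1),\,V(r+1)\}$, which \emph{is} sup-norm non-expansive, compare the trajectory with the shifted ODE $\dot y=F(y)-y-\lambda^*\mathbf{1}$ (whose solutions differ from yours by a time-dependent scalar multiple of $\mathbf{1}$), and conclude stability from non-expansiveness and a span-seminorm argument there. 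Separately, your ``scalar Robbins--Monro plus backward induction'' claim for boundedness is also shaky on the lower-bound side (the recursion only yields $\min_s V^{(k)}(s)\geq \min_s V^{(k-1)}(s)-O(a(k))$, which does not telescope to a finite bound), but since you offer the scaling-limit check as the alternative --- which is what the paper actually verifies --- that part is repairable.
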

\begin{proof}
 See Appendix~\ref{appendix:learning-for-pure-as-you-go-deployment-given-xi}.
\end{proof}

{\bf Discussion of Algorithm~\ref{algorithm:OptAsYouGoLearning}:}
\begin{enumerate}[label=(\roman{*})]
\item {\em The basic idea:} From \eqref{eqn:average_cost_optimality_equation_no_backtracking-V-equation-no-lambda} 
(and the notation defined immediately after \eqref{eqn:average_cost_optimality_equation_no_backtracking-V-equation-no-lambda}), we see that 
the optimal $\underline{V}^*$ is the unique real zero of the system of equations: 
$\mathbb{E}_{W} \min \{c(r,W), V(r+1)-V(1)\}-V(r)=0$ for $r \leq B-1$ and $\mathbb{E}_{W} \, c(B,W)-V(B)=0$. 
We use asynchronous stochastic approximation so that the iterates converge asymptotically to this unique zero.
\item {\em Asynchronous stochastic approximation:} In standard stochastic approximation techniques, all iterates are updated at the same time. 
However, the pure as-you-go deployment scheme does not allow the deployment agent to update all iterates at each step. Since only a subset 
$\mathcal{I}_k \subset \{1, \cdots, B\}$ of iterates can be updated at step $k$, we have to use {\em asynchronous} stochastic approximation. 
\item The proof of Theorem~\ref{theorem:OptAsYouGoLearning} exhibits a nice separation between the estimation and control. In 
other words, the iterates will asymptotically converge to $\underline{V}^*$ (and the policy will 
converge to the optimal policy) even when the placement decisions are not made according 
to the proposed threshold policy (but the measurement and update scheme should be unchanged); but it may not yield the optimal 
cost for problem~\eqref{eqn:unconstrained_problem_average_cost_with_outage_cost} since we do not use the 
optimal policy at each stage. However, this nice separation property 
will not hold in next section when we vary $\xi_{out}$ and $\xi_{relay}$ in order to solve the constrained problem 
\eqref{eqn:constrained_problem_average_cost_with_outage_cost}. 
 \item Note that, since the state space of the MDP in Section~\ref{section:mdp-for-pure-as-you-go-deployment} is large 
(potentially infinite and even uncountable), it will not be easy to use traditional Q-learning algorithms. In fact, all 
the state action-pairs in a Q-learning algorithm need to repeat comparably often over infinite time horizon 
to guarantee the desired convergence, but this may not happen in case of infinite 
state space (arising out of infinite $\mathcal{W}$). On the other hand, Algorithm~\ref{algorithm:OptAsYouGoLearning} provides a learning 
algorithm with provable convergence guarantee while having only $B$ number of iterates.
\end{enumerate}

\section{OptAsYouGoAdaptiveLearning for the Constrained problem}
\label{section:learning-for-pure-as-you-go-deployment-constrained-problem}

In Section~\ref{section:learning-for-pure-as-you-go-deployment-given-xi}, 
we provided a deploy-and-learn algorithm 
for given   $\xi_{out}$ and $\xi_{relay}$. 
However, Theorem~\ref{theorem:how-to-choose-optimal-Lagrange-multiplier} tells us how to choose the 
Lagrange multipliers $\xi_{out}$ and $\xi_{relay}$ (if they exist) 
in (\ref{eqn:unconstrained_problem_average_cost_with_outage_cost}) 
in order to solve the constrained problem (\ref{eqn:constrained_problem_average_cost_with_outage_cost}). 
But we need to know the radio propagation parameters (e.g., $\eta$ and $\sigma$) in order to 
compute a pair $(\xi_{out}^*, \xi_{relay}^*)$ that satisfies the condition 
given in Theorem~\ref{theorem:how-to-choose-optimal-Lagrange-multiplier}. In practice,  
these parameters may not be known.  
Hence, we provide a sequential placement algorithm such that, as deployment progresses, 
the placement policy (updated at each step) converges to the set of optimal policies for the constrained problem 
(\ref{eqn:constrained_problem_average_cost_with_outage_cost}). We modify the 
OptAsYouGoLearning algorithm so that a running estimate $(\underline{V}^{(k)},\xi_{out}^{(k)},\xi_{relay}^{(k)})$ gets 
updated at each step, and asymptotically converges to 
the set of optimal $(\underline{V}^*(\xi_{out},\xi_{relay}),\xi_{out},\xi_{relay})$ tuples. 
This algorithm is based on two time-scale stochastic approximation 
(see \cite[Chapter~$6$]{borkar08stochastic-approximation-book}).

\subsection{Some Useful Notation and Assumptions}
In this subsection, we will introduce some assumptions and notation (these were provided in 
\cite[Section~VII]{chattopadhyay-etal15measurement-based-impromptu-deployment-arxiv-v1}, but are repeated here for completeness).
\begin{definition}
We denote  by $\gamma^*$ the optimal mean power per step for problem 
(\ref{eqn:constrained_problem_average_cost_with_outage_cost}), for a given constraint pair 
$(\overline{q},\overline{N})$. The set 
$\mathcal{K}(\overline{q},\overline{N})$ is defined as follows:

\footnotesize
\begin{eqnarray*}
&& \mathcal{K}(\overline{q},\overline{N}) := \bigg\{(\underline{V}^*(\xi_{out},\xi_{relay}),\xi_{out},\xi_{relay}): \\
&& \frac{\overline{\Gamma}^*(\xi_{out},\xi_{relay})}{\overline{U}^*(\xi_{out},\xi_{relay})}=\gamma^* , 
\frac{ \overline{Q}_{out}^*(\xi_{out},\xi_{relay}) }{ \overline{U}^*(\xi_{out},\xi_{relay}) } \leq \overline{q} \\
&& \frac{1}{ \overline{U}^*(\xi_{out},\xi_{relay})} \leq \overline{N}, 
\xi_{out} \geq 0, \xi_{relay} \geq 0 \bigg\}
\end{eqnarray*}
\normalsize
\qed 
\end{definition}

Note that, the pair $(\overline{q},\overline{N})$ can be infeasible. For  
example, if $\overline{N}=\frac{1}B$ (i.e., inter-node distance is  $B$)  
and  $\overline{q}< \frac{\mathbb{E}_W Q_{out}(B,P_M,W)}B$ ($P_M$ is the 
maximum available transmit power),  the outage constraint cannot be satisfied 
while meeting the constraint on the mean number of relays per step, even by using the  
maximum transmit power $P_M$.

$\mathcal{K}(\overline{q},\overline{N})$ is empty if  
$(\overline{q},\overline{N})$ is infeasible. 
{\em In this paper, we assume that $\mathcal{K}(\overline{q},\overline{N})$ is non-empty (i.e., 
$(\overline{q},\overline{N})$ is a feasible pair), which is true for feasible pairs of $\mathcal{K}(\overline{q},\overline{N})$:} 

\begin{assumption}\label{assumption:existence_of_xio_xir}
$\overline{q}$ and $\overline{N}$ are such that there exists 
at least one pair $ \xi_{out}^* \geq 0, \xi_{relay}^* \geq 0$ such that   
$(\underline{V}^*(\xi_{out}^*,\xi_{relay}^*),\xi_{out}^*,\xi_{relay}^*) \in \mathcal{K}(\overline{q},\overline{N})$.\qed
\end{assumption}

\begin{assumption}\label{assumption:shadowing_continuous_random_variable}
 The probability density 
function (p.d.f.) of the shadowing random variable $W$ is continuous  over $(0,\infty)$; i.e.,  
$\mathbb{P}(W=w)=0$ for any $w \in (0,\infty)$ (e.g., log-normal shadowing).\qed
\end{assumption}

\begin{theorem}\label{theorem:placement_rate_mean_outage_per_step_continuous_in_xio_and_xir}
Under Assumption~\ref{assumption:shadowing_continuous_random_variable} and 
Algorithm~\ref{algorithm:optimal-policy-structure}, 
the optimal mean power per step $\frac{\overline{\Gamma}^*(\xi_{out},\xi_{relay})}{\overline{U}^*(\xi_{out},\xi_{relay})}$, 
the optimal mean placement rate $\frac{1}{\overline{U}^*(\xi_{out},\xi_{relay})}$ 
and the optimal mean outage per step $\frac{\overline{Q}_{out}^*(\xi_{out},\xi_{relay})}{\overline{U}^*(\xi_{out},\xi_{relay})}$, 
are continuous in $(\xi_{out},\xi_{relay})$.
\end{theorem}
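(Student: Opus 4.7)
The plan is to exhibit $\overline{U}^*$, $\overline{\Gamma}^*$, and $\overline{Q}_{out}^*$ as expectations of uniformly bounded random variables indexed by $(\xi_{out},\xi_{relay})$, show those random variables are continuous in $(\xi_{out},\xi_{relay})$ almost surely under Assumption~\ref{assumption:shadowing_continuous_random_variable}, and then invoke dominated convergence. Continuity of the three quotients in the theorem statement then follows because $\overline{U}^*(\xi_{out},\xi_{relay})\geq 1$ everywhere.

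First I would build a single coupled probabilistic representation for the placement cycle. Let $W_1,W_2,\dots,W_B$ be i.i.d.\ copies of the shadowing random variable, \emph{independent of the multipliers}. Write $c(r,w;\xi_{out},\xi_{relay}) := \min_{\gamma \in \mathcal{S}}(\gamma + \xi_{out}Q_{out}(r,\gamma,w))+\xi_{relay}$, $c_{th}(r;\xi_{out},\xi_{relay}) := V^*(r+1;\xi_{out},\xi_{relay})-V^*(1;\xi_{out},\xi_{relay})$, $\gamma^*(r,w;\xi_{out}) := \arg\min_{\gamma \in \mathcal{S}}(\gamma+\xi_{out}Q_{out}(r,\gamma,w))$, and let $\tau$ be the first $k \in \{1,\dots,B-1\}$ with $c(k,W_k) \leq c_{th}(k)$, with $\tau := B$ if no such $k$ exists. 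Because shadowing is i.i.d.\ across links and the process probabilistically restarts after each placement, $\overline{U}^* = \mathbb{E}[\tau]$, $\overline{\Gamma}^* = \mathbb{E}[\gamma^*(\tau,W_\tau;\xi_{out})]$, and $\overline{Q}_{out}^* = \mathbb{E}[Q_{out}(\tau,\gamma^*(\tau,W_\tau;\xi_{out}),W_\tau)]$.

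Second I would check continuity of the primitives: $c(r,w;\xi_{out},\xi_{relay})$ is continuous on $[0,\infty)^2\times(0,\infty)$ as the min of finitely many jointly continuous functions, and $c_{th}(r;\xi_{out},\xi_{relay})$ is Lipschitz continuous in $(\xi_{out},\xi_{relay})$ by Theorem~\ref{theorem:V-continuous-in-xi}. For $\xi_{out}>0$, each map $w \mapsto \gamma + \xi_{out}Q_{out}(r,\gamma,w)$ is strictly decreasing in $w$, so $w \mapsto c(r,w;\xi_{out},\xi_{relay})$ is strictly decreasing as well.

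Third comes the technical heart. Fix $(\xi_{out}^0,\xi_{relay}^0)$. By strict monotonicity, the boundary set
\[
\mathcal{B}_r := \{\,w : c(r,w;\xi_{out}^0,\xi_{relay}^0) = c_{th}(r;\xi_{out}^0,\xi_{relay}^0)\,\}
\]
is either empty or a singleton; by Assumption~\ref{assumption:shadowing_continuous_random_variable}, $\mathbb{P}(W \in \mathcal{B}_r)=0$. Similarly, for each ordered pair $(\gamma_1,\gamma_2)\in\mathcal{S}^2$ the tie set $\{w:\gamma_1+\xi_{out}^0 Q_{out}(r,\gamma_1,w) = \gamma_2+\xi_{out}^0 Q_{out}(r,\gamma_2,w)\}$ is a finite collection of solutions of a strictly monotone equation in $w$, hence has $W$-measure zero; taking a finite union over $r$ and $(\gamma_1,\gamma_2)$, we get an exceptional null set outside of which both $\mathbb{I}\{c(r,W;\cdot,\cdot)\leq c_{th}(r;\cdot,\cdot)\}$ and $\gamma^*(r,W;\cdot)$ are locally constant, and in particular continuous, in $(\xi_{out},\xi_{relay})$ at $(\xi_{out}^0,\xi_{relay}^0)$. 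Consequently $\tau$, $\gamma^*(\tau,W_\tau;\xi_{out})$, and $Q_{out}(\tau,\gamma^*(\tau,W_\tau;\xi_{out}),W_\tau)$ are continuous at $(\xi_{out}^0,\xi_{relay}^0)$ for a.e.\ $(W_1,\dots,W_B)$.

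Fourth, since these three random variables are uniformly bounded (by $B$, $P_M$, and $1$ respectively), dominated convergence gives continuity of $\overline{U}^*$, $\overline{\Gamma}^*$, $\overline{Q}_{out}^*$ in $(\xi_{out},\xi_{relay})$, and the ratios in the theorem statement inherit continuity because $\overline{U}^*\geq 1$. The main obstacle I expect is dealing cleanly with the edge case $\xi_{out}=0$, where strict monotonicity in $w$ is lost and $c(r,w)=P_1+\xi_{relay}$ is constant in $w$; there I would handle the one-sided limit $\xi_{out}\to 0^+$ directly, using that $c(r,W;\xi_{out},\xi_{relay}) \to P_1+\xi_{relay}$ uniformly in $W$ (because $Q_{out}\in[0,1]$) together with Lipschitz continuity of $c_{th}$ to conclude that the placement indicators converge, save on the non-generic set where the limiting threshold coincides with $P_1+\xi_{relay}$, which again has $W$-measure zero in the relevant conditional expectations.
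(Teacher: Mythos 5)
Your proposal is correct in substance and its technical core coincides with the paper's key lemma: both arguments reduce everything to showing that the ``tie'' events (where the placement cost equals the threshold $V^*(r+1)-V^*(1)$, or where two transmit powers achieve the same cost) have $W$-measure zero under Assumption~\ref{assumption:shadowing_continuous_random_variable}, then deduce almost-sure convergence of the relevant indicators as $(\xi_{out},\xi_{relay})$ varies (using the Lipschitz continuity of $\underline{V}^*$ from Theorem~\ref{theorem:V-continuous-in-xi}), and finish with dominated convergence. Where you genuinely diverge is in the treatment of the mean outage per link: the paper proves continuity only of the joint law $g(r,\gamma)$ of $(U_k,\Gamma_k)$, obtains $\overline{U}^*$ and $\overline{\Gamma}^*$ as finite sums against $g$, and then recovers continuity of $\overline{Q}_{out}^*$ \emph{indirectly} by solving the Renewal--Reward identity $\lambda^*=(\overline{\Gamma}^*+\xi_{out}\overline{Q}_{out}^*+\xi_{relay})/\overline{U}^*$ for $\overline{Q}_{out}^*$ and invoking Theorem~\ref{theorem:lambda-increasing-concave-continuous-in-xi}. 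You instead represent $\overline{Q}_{out}^*$ directly as $\mathbb{E}\left[Q_{out}(\tau,\gamma^*(\tau,W_\tau),W_\tau)\right]$ on a single coupled probability space and apply dominated convergence to it as well; this is more self-contained (it does not lean on the concavity argument behind Theorem~\ref{theorem:lambda-increasing-concave-continuous-in-xi}) and, as a bonus, you explicitly address the boundary case $\xi_{out}=0$, which the paper passes over.

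One caveat: your justification that the power-tie set $\{w:\gamma_1+\xi_{out}Q_{out}(r,\gamma_1,w)=\gamma_2+\xi_{out}Q_{out}(r,\gamma_2,w)\}$ is null because it solves ``a strictly monotone equation in $w$'' is not quite right. Each map $w\mapsto \gamma+\xi_{out}Q_{out}(r,\gamma,w)$ is strictly decreasing, but the \emph{difference} of two such maps need not be monotone (for the Rayleigh model $Q_{out}=1-e^{-a/(\gamma w)}$ the difference vanishes at both $w\to 0$ and $w\to\infty$), so the tie set is not automatically a singleton. The measure-zero conclusion still holds for the paper's channel model (the tie set is the zero set of a non-constant real-analytic function, hence locally finite), and the paper's own proof asserts the same conclusion with no more justification than ``Assumption~\ref{assumption:shadowing_continuous_random_variable} and continuity of $Q_{out}$ in $w$,'' so this is a shared imprecision rather than a gap unique to your argument; but the strict-monotonicity reasoning should be replaced. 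The rest of your argument --- in particular the local constancy of $\tau$ and $\gamma^*$ off the exceptional null set, and the use of $\overline{U}^*\geq 1$ to pass continuity to the three quotients --- is sound.
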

\begin{proof}
  See Appendix~\ref{appendix:learning-for-pure-as-you-go-deployment-constrained-problem}.
\end{proof}

{\em Remark:} Theorem~\ref{theorem:placement_rate_mean_outage_per_step_continuous_in_xio_and_xir} 
implies that there is no need to do any randomization among deterministic policies 
(unlike \cite{ma-makowski88steering-policies-recurrence-condition})  in order to meet the constraints with equality.

\subsection{OptAsYouGoAdaptiveLearning Algorithm}\label{subsection:OptAsYouGoAdaptiveLearning-algorithm-description-discussion}
The basic idea behind this algorithm (Algorithm~\ref{algorithm:OptAsYouGoAdaptiveLearning}; see next page) is to vary $\xi_{out}^{(k)}$ and $\xi_{relay}^{(k)}$ at a much slower rate than $\underline{V}^{(k)}$, as if $\xi_{out}^{(k)}$ and $\xi_{relay}^{(k)}$ are varied in an outer loop and $\underline{V}^{(k)}$ is varied in an inner loop. If the outage in a newly created link is larger than the budgeted outage for a link with that length, then $\xi_{out}$ is increased with the hope that subsequent links will have smaller outage; the opposite is done in case the outage in a newly created link is smaller. On the other hand, if a newly created link is shorter than $\frac{1}{\overline{N}}$, then $\xi_{relay}$ is increased, otherwise it is decreased.

{\em Notation in Algorithm~\ref{algorithm:OptAsYouGoAdaptiveLearning}:} 
$\Lambda_{[0,A_1]}(x)$ denotes the projection of $x$ on the interval $[0,A_1]$. 
Let the power, outage and link length of the new relay (if placed) at the $k$-th step be $\Gamma_{N_k}$, $Q_{out}^{(N_k,N_k-1)}$ and $U_{N_k}$ 
(recall that $N_k$ is the number of nodes placed up to the $k$-th step). Note that,  
$\mathbb{I}\{ N_k=N_{k-1}+1 \}$ is the indicator that a relay is placed at the $k$-th step.

\begin{algorithm}[h!]
\hrule
{\bf Input:} Two positive numbers $A_1$ and $A_2$ appropriately chosen, two decreasing positive sequences $\{a(n)\}_{n \geq 1}$ and $\{b(n)\}_{n \geq 1}$  such that $\sum_{n=1}^{\infty} a(n)=\infty$, $\sum_{n=1}^{\infty} a^2(n) < \infty$, 
$\sum_{n=1}^{\infty} b(n)=\infty$, $\sum_{n=1}^{\infty} b^2(n) < \infty$ 
and $\lim_{n \rightarrow \infty}\frac{b(\lfloor \frac{n}{B} \rfloor)}{a(n)}=0$.  \\
{\bf Output:} Placement decision at each step.\\
{\bf Initialization:}  $r'=1$ (distance from the previous node), $k=1$ (distance of the current location from the sink), initial estimates  $\underline{V}^{(0)}$, $\xi_{out}^{(0)}$, $\xi_{relay}^{(0)}$.\\
 \While{$1 \leq r' \leq B$}{
 Find $\mathcal{I}_k:=\{ r \in \{1,2,\cdots,B \}: \text{relay placed at  $(k-r)\delta$ distance from  sink} \}\}$;\\
 Find $\nu(r,k):=\sum_{i=1}^k \mathbb{I} \{ r \in \mathcal{I}_i \} \forall r \in \{1,2,\cdots,B\}$;\\
 Measure  $Q_{out}(r,\gamma,w_r) \forall {\gamma \in \mathcal{S}}, r \in \mathcal{I}_k$;\\
  \uIf{$r' \leq B-1$ and $\min_{\gamma \in \mathcal{S}}(\gamma+ \xi_{out}^{(k-1)} Q_{out}(r',\gamma, w_{r'}))+\xi_{relay}^{(k-1)} \leq -V^{(k-1)}(1)+V^{(k-1)}(r'+1)$}
{Place a new relay and use transmit power $\arg \min_{\gamma \in \mathcal{S}}(\gamma+ \xi_{out}^{(k-1)} Q_{out}(r',\gamma, w_{r'}))$;\\
Do the following updates:
\small
\begin{eqnarray}
&& V^{(k)}(r)=V^{(k-1)}(r)+ a(\nu(r,k)) \mathbb{I}\{r \in \mathcal{I}_k, r <B\} \nonumber\\
&& \bigg[ \min \bigg \{ \min_{\gamma}(\gamma+  \xi_{out}^{(k-1)} Q_{out}(r,\gamma, w_r)) +\xi_{relay}^{(k-1)},  \nonumber\\
&& -V^{(k-1)}(1)+V^{(k-1)}(r+1) \bigg \} -V^{(k-1)}(r) \bigg] \nonumber\\
 && V^{(k)}(B) =V^{(k-1)}(B)+ a(\nu(B,k)) \mathbb{I}\{B \in \mathcal{I}_k\}  \nonumber\\
 && \bigg[  \min_{\gamma}(\gamma+ \xi_{out}^{(k-1)} Q_{out}(B,\gamma, w_B))  +\xi_{relay}^{(k-1)} \nonumber\\
 && -V^{(k-1)}(B) \bigg] \nonumber\\
&& \xi_{out}^{(k)} = \bigg[\xi_{out}^{(k-1)}+b({N_k}) \mathbb{I}\{ N_k=N_{k-1}+1 \} \nonumber\\
&&  \bigg( Q_{out}^{(N_k,N_k-1)}-\overline{q}U_{N_k} \bigg) \bigg]_{0}^{A_1} \nonumber\\
&& \xi_{relay}^{(k)} =\bigg[\xi_{relay}^{(k-1)}+b({N_k}) \mathbb{I}\{ N_k=N_{k-1}+1 \} \nonumber\\
&& \bigg( 1-\overline{N}U_{N_k} \bigg) \bigg]_{0}^{A_2} 
\label{eqn:OptAsYouGoAdaptiveLearning_update_part}
\end{eqnarray}
\normalsize

   Move to next step and set $r'=1$;}
    \uElseIf{$r' \leq B-1$ and $\min_{\gamma \in \mathcal{S}}(\gamma+ \xi_{out}^{(k-1)} Q_{out}(r',\gamma, w_{r'}))+\xi_{relay}^{(k-1)} > -V^{(k-1)}(1)+V^{(k-1)}(r'+1)$}
    {Do not place, and perform updates as in \eqref{eqn:OptAsYouGoAdaptiveLearning_update_part};\\
    Move to next step and set $r'=r'+1$;\\
    }
    \Else
    {Place a new relay (since $r'=B$);\\
    Use power $\arg \min_{\gamma \in \mathcal{S}}(\gamma+ \xi_{out}^{(k-1)} Q_{out}(B,\gamma, w_B))$;\\
        Do the same updates as \eqref{eqn:OptAsYouGoAdaptiveLearning_update_part};\\
    Move to next step and set $r'=1$.
    }
    k=k+1;.
 }
\hrule
\caption{OptAsYouGoAdaptiveLearning}
\label{algorithm:OptAsYouGoAdaptiveLearning}
\end{algorithm}

\begin{theorem}\label{theorem:convergence_OptAsYouGoAdaptiveLearning}
 Under Assumption~\ref{assumption:existence_of_xio_xir}, Assumption~\ref{assumption:shadowing_continuous_random_variable}   
 and under proper choice of 
$A_1$ and $A_2$, we have 
 $(\underline{V}^{(k)}, \xi_{out}^{(k)}, \xi_{relay}^{(k)}) \rightarrow \mathcal{K}(\overline{q},\overline{N})$ 
 almost surely for 
Algorithm~\ref{algorithm:OptAsYouGoAdaptiveLearning}.
\end{theorem}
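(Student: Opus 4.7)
The plan is to invoke two-timescale asynchronous stochastic approximation (Borkar, Chapter~6), exploiting the step-size condition $\lim_{n\to\infty} b(\lfloor n/B\rfloor)/a(n)=0$, which makes $\underline{V}^{(k)}$ the fast iterate and $(\xi_{out}^{(k)},\xi_{relay}^{(k)})$ the slow iterate. First I would argue that, from the perspective of the fast timescale, the Lagrange multipliers appear quasi-static, so the $V$-iterates essentially evolve as in Algorithm~\ref{algorithm:OptAsYouGoLearning} with a frozen pair $(\xi_{out},\xi_{relay})\in[0,A_1]\times[0,A_2]$. Replaying the argument of Theorem~\ref{theorem:OptAsYouGoLearning}, I would verify: (a)~the update is an asynchronous Robbins-Monro iteration whose mean-field fixed point is the unique $\underline{V}^*(\xi_{out},\xi_{relay})$ of Theorem~\ref{theorem:uniqueness_of_V}; (b)~$\liminf_{k}\nu(r,k)/k>0$ almost surely for each $r\in\{1,\dots,B\}$ because every location within $B$ steps of a previously placed relay triggers an update (this is precisely why we enlarged the measurement set $\mathcal{I}_k$); and (c)~boundedness of $\{\underline{V}^{(k)}\}$ using the Borkar--Meyn criterion applied to the scaled limit ODE, whose unique equilibrium at the origin provides stability. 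Combining these gives $\underline{V}^{(k)}-\underline{V}^*(\xi_{out}^{(k)},\xi_{relay}^{(k)})\to 0$ almost surely, using the Lipschitz continuity of $\underline{V}^*$ in $(\xi_{out},\xi_{relay})$ from Theorem~\ref{theorem:V-continuous-in-xi} to absorb the slow drift of the multipliers.

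On the slow timescale, the $\xi$-iterates see $\underline{V}^{(k)}\approx \underline{V}^*(\xi_{out}^{(k)},\xi_{relay}^{(k)})$, so the agent is effectively running the optimal policy $\pi^*(\xi_{out}^{(k)},\xi_{relay}^{(k)})$ at each step. The updates in \eqref{eqn:OptAsYouGoAdaptiveLearning_update_part} can then be written as $\xi_{out}^{(k+1)} = \Lambda_{[0,A_1]}(\xi_{out}^{(k)} + b(N_k)(h_1(\xi^{(k)})+M_1^{(k)}))$ and similarly for $\xi_{relay}$, where the conditional-mean drift (after a renewal argument using i.i.d.\ inter-relay segments) is
\begin{align*}
h_1(\xi_{out},\xi_{relay}) &= \overline{Q}_{out}^*(\xi_{out},\xi_{relay})-\overline{q}\,\overline{U}^*(\xi_{out},\xi_{relay}),\\
h_2(\xi_{out},\xi_{relay}) &= 1-\overline{N}\,\overline{U}^*(\xi_{out},\xi_{relay}),
\end{align*}
and $M_1^{(k)},M_2^{(k)}$ are bounded martingale-difference noise terms. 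By Theorem~\ref{theorem:placement_rate_mean_outage_per_step_continuous_in_xio_and_xir}, $h_1,h_2$ are continuous in $\xi$. Applying the Kushner--Clark lemma for projected stochastic approximation, the slow iterates track the projected ODE
\begin{equation*}
\dot{\xi}_{out}=\Lambda_{[0,A_1]}(h_1(\xi)),\qquad \dot{\xi}_{relay}=\Lambda_{[0,A_2]}(h_2(\xi)).
\end{equation*}

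To identify the $\omega$-limit set of this ODE with $\mathcal{K}(\overline{q},\overline{N})$, I would use the Lagrangian dual function $\mathcal{L}(\xi_{out},\xi_{relay}):=\lambda^*(\xi_{out},\xi_{relay})-\xi_{out}\overline{q}/\overline{U}^*(\xi)-\xi_{relay}\overline{N}/\overline{U}^*(\xi)$; more cleanly, one works with the dual of the constrained problem \eqref{eqn:constrained_problem_average_cost_with_outage_cost} whose gradient (by an envelope-type computation and the Renewal-Reward identity) has the same sign as $(h_1,h_2)$ after scaling by the positive quantity $\overline{U}^*$. Since $\lambda^*$ is concave in $(\xi_{out},\xi_{relay})$ by Theorem~\ref{theorem:lambda-increasing-concave-continuous-in-xi}, the projected ODE is a projected ascent on a concave potential, so its trajectories converge to the set of KKT points inside $[0,A_1]\times[0,A_2]$. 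By Assumption~\ref{assumption:existence_of_xio_xir} we can choose $A_1,A_2$ strictly larger than at least one optimal multiplier pair, which guarantees that the KKT set inside the box coincides with the set of optimal Lagrange multipliers; invoking Theorem~\ref{theorem:how-to-choose-optimal-Lagrange-multiplier} then identifies these KKT points precisely with the projections of $\mathcal{K}(\overline{q},\overline{N})$ onto the $\xi$-coordinates. Combined with the fast-timescale conclusion $\underline{V}^{(k)}\to \underline{V}^*(\xi^{(k)})$, we get $(\underline{V}^{(k)},\xi^{(k)})\to\mathcal{K}(\overline{q},\overline{N})$ almost surely.

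The hardest step is the fast-timescale convergence under asynchronous updates when the slow variables are still drifting. Standard asynchronous SA results (Borkar--Meyn, \cite{bhatnagar11borkar-meyn-theorem-asynchronous-stochastic-approximation}) require that the update-frequency ratios converge and that the limiting ODE is well-behaved; here the complication is that the vector field of the fast ODE itself depends on the slow variables through $c(r,W)$, so I must use the Lipschitz continuity of $\underline{V}^*(\cdot)$ (Theorem~\ref{theorem:V-continuous-in-xi}) together with the bound $|\xi^{(k+1)}-\xi^{(k)}|=O(b(N_k))$ to show that the perturbation introduced by the slow drift is $o(a(\nu(r,k)))$ in an appropriate sense, and therefore does not prevent the fast iterates from tracking their $\xi$-parameterized equilibrium. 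A secondary subtlety is verifying that the empirical fractions $\nu(r,k)/k$ remain bounded away from zero uniformly: this relies on the fact that, under any policy in the considered class, relays are placed within $B$ steps of each other, so that for every $r\le B$ the event $\{r\in\mathcal{I}_k\}$ occurs on a positive density of steps.
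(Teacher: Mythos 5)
Your overall architecture coincides with the paper's: a two-timescale decomposition in which the fast iterates $\underline{V}^{(k)}$ track $\underline{V}^*(\xi_{out}^{(k)},\xi_{relay}^{(k)})$ (using asynchronous SA, the positive update-density of each $r$ guaranteed by the enlarged measurement set, a Borkar--Meyn-type stability argument for boundedness, and the Lipschitz continuity from Theorem~\ref{theorem:V-continuous-in-xi}), followed by a Kushner--Clark analysis of the projected slow iteration whose drift is $(f_1,f_2)=(\overline{Q}_{out}^*-\overline{q}\,\overline{U}^*,\ 1-\overline{N}\,\overline{U}^*)$. Where you genuinely depart from the paper is the last step: you identify the slow ODE as projected supergradient ascent on the concave Lagrangian dual $\lambda^*(\xi)-\xi_{out}\overline{q}-\xi_{relay}\overline{N}$ (concavity from Theorem~\ref{theorem:lambda-increasing-concave-continuous-in-xi}, the supergradient via a Danskin/envelope computation and Renewal--Reward), so that the stationary set of the projected ODE is the KKT set, which under a large enough box equals the set of optimal multipliers. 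The paper instead establishes this identification by a direct case analysis of the zeros of $\overline{\Lambda}_{\mathcal{G}}$ on the interior and boundary of $[0,A_1]\times[0,A_2]$ (Lemma~\ref{lemma:stationary_point_is_optimal_point_adaptive_learning_proof}, deferring to the companion paper), without ever exhibiting a potential. Your route is more conceptual and in effect supplies the Lyapunov function that Kushner--Clark's stability hypothesis implicitly requires, at the price of having to justify that the particular supergradient selection generated by the threshold policy is continuous (which Theorem~\ref{theorem:placement_rate_mean_outage_per_step_continuous_in_xio_and_xir} provides) and that projected ascent on a possibly nonsmooth concave function converges to its argmax set. Two places where you compress details that the paper treats explicitly and that you should not skip in a full write-up: (i) the conditional-mean drift of the slow iteration at the $k$-th placement is \emph{not} $f_i(\xi^{(k-1)})$ --- the paper decomposes the error into a term $g_i$ (because $\underline{V}^{(k-1)}\neq\underline{V}^*(\xi^{(k-1)})$) and a term $l_i$ (because the iterates continue to be updated \emph{within} the segment between two placements, so the link statistics are those of a slowly time-varying, not frozen, policy), and shows both vanish almost surely using uniform continuity on compacts; and (ii) the almost-sure boundedness of the fast iterates cannot be read off from the single-timescale Theorem~\ref{theorem:OptAsYouGoLearning} and is proved in the paper as a separate lemma via a two-timescale stability criterion. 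Neither omission invalidates your plan, but both are load-bearing.
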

\begin{proof}
 See Appendix~\ref{appendix:learning-for-pure-as-you-go-deployment-constrained-problem}.
 
 We complete the proof in four steps. 
First, we show  
that the difference between  $\underline{V}^{(k)}$ and $\underline{V}^*(\xi_{out}^{(k)},\xi_{relay}^{(k)})$ 
converges to $0$ almost surely. This proves  the desired convergence in the faster timescale. 
Next, we pose the slower timescale iteration  as a projected stochastic approximation iteration 
(see \cite[Equation~$5.3.1$]{kushner-clark78SA-constrained-unconstrained}).  
Next, we show that the slower timescale iteration satisfies some conditions 
given in \cite{kushner-clark78SA-constrained-unconstrained} (see   
\cite[Theorem~$5.3.1$]{kushner-clark78SA-constrained-unconstrained}). 
Finally, we argue  
(using Theorem~$5.3.1$ of \cite{kushner-clark78SA-constrained-unconstrained}) 
that the slower timescale iterates converge to the set of stationary points of a suitable ordinary differential 
equation.
 
 It is to be noted that while the proof to some extent follows the outline of the proof of 
\cite[Theorem~$12$]{chattopadhyay-etal15measurement-based-impromptu-deployment-arxiv-v1}, 
significantly new nontrivialities arise in our work as compared to the proof of 
\cite[Theorem~$12$]{chattopadhyay-etal15measurement-based-impromptu-deployment-arxiv-v1}. For example, we had to prove 
the boundedness of the faster timescale iterates separately, since the asynchronous updates in the faster timescale do 
not allow us to mimic the proof of \cite[Theorem~$12$]{chattopadhyay-etal15measurement-based-impromptu-deployment-arxiv-v1}. 
Similarly there are many other steps which require significant novel additional mathematical analysis compared to 
\cite[Theorem~$12$]{chattopadhyay-etal15measurement-based-impromptu-deployment-arxiv-v1}. Hence, in this proof, we 
proved intermediate results wherever necessary, and skipped some steps if they follow from the proof of 
\cite[Theorem~$12$]{chattopadhyay-etal15measurement-based-impromptu-deployment-arxiv-v1}.
 \end{proof}

 {\bf Choice of $A_1$ and $A_2$:} 
$A_1$ and $A_2$ need to be chosen carefully, otherwise the iterates $(\xi_{out}^{(k)}, \xi_{relay}^{(k)})$ 
may converge to undesired points on the boundary of $[0,A_1] \times [0,A_2]$. In general, a stationary point on the boundary of $[0,A_1] \times [0,A_2]$ 
may not correspond to a point in $\mathcal{K}(\overline{q},\overline{N})$. 
Hence, we borrow a scheme from 
\cite{chattopadhyay-etal15measurement-based-impromptu-deployment-arxiv-v1} 
for choosing $A_1$ and $A_2$ which ensures that, if $(\xi_{out}',\xi_{relay}')$ 
is a stationary point of the o.d.e., then 
$(\underline{V}^*(\xi_{out}',\xi_{relay}'),\xi_{out}',\xi_{relay}') \in \mathcal{K}(\overline{q},\overline{N})$. 
The number $A_1$ has to be chosen so large 
that, for all $u \in \{1,2,\cdots,B\}$, we will have 
$\mathbb{P}(\argmin_{\gamma \in \mathcal{S}}(\gamma+A_1 Q_{out}(u,\gamma,W))=P_M)>1-\kappa$ 
for some small enough $\kappa>0$. We also need the condition that   
$\frac{\overline{Q}_{out}^*(A_1,0)}{\overline{U}^*(A_1,0)} \leq \overline{q}$. 
The number $A_2$ has to be chosen so large that,  
for any $\xi_{out} \in [0,A_1]$, we will have 
$\overline{U}^*(\xi_{out},A_2) > \frac{1}{\overline{N}}$ (when $\frac{1}{\overline{N}}<B$). 
The numbers $A_1$ and $A_2$ have to be chosen so large that there exists at least one pair 
$(\xi_{out}',\xi_{relay}')$ for which  
$(\underline{V}^*(\xi_{out}',\xi_{relay}'),\xi_{out}',\xi_{relay}') \in \mathcal{K}(\overline{q},\overline{N})$.\qed

{\bf Discussion of Algorithm~\ref{algorithm:OptAsYouGoAdaptiveLearning}:}
\begin{enumerate}[label=(\roman{*})]
\item {\em Two timescales:} The update scheme (\ref{eqn:OptAsYouGoAdaptiveLearning_update_part}) 
is based on two-timescale 
stochastic approximation (see \cite[Chapter~$6$]{borkar08stochastic-approximation-book}). 
Since $\lim_{n \rightarrow \infty}\frac{b(\lfloor \frac{n}{B} \rfloor)}{a(n)}=0$, 
we can say that $\xi_{out}$ and $\xi_{relay}$ are adapted in a 
{\em slower} timescale, and $\underline{V}$ is updated in a {\em faster} timescale, as if 
$\xi_{out}$ and $\xi_{relay}$ are updated in a slow outer loop, and, $\underline{V}$ is updated in an inner loop. 

\item {\em Structure of the iteration:} The slower timescale iteration involves updating $\xi_{out}$ and $\xi_{relay}$ 
based on whether the corresponding constraints are violated in a link (after placing a relay); if a constraint is violated 
by a newly created link, then the corresponding Lagrange multiplier is increased to counterbalance it in subsequent 
node placements. The goal is to meet both constraints with equality (if possible) in the long run.

\item {\em Asymptotic behaviour of the iterates:}
If $\overline{q}> \frac{\mathbb{E}_W Q_{out}(B,P_1,W)}B$;  we 
will have $\xi_{out}^{(k)} \rightarrow 0$; here 
the policy places all the relays at the $B$-th step and uses the smallest power $P_1$ at each node. 
If the constraints are not feasible, then either $\xi_{out}^{(k)} \rightarrow A_1$ 
or $\xi_{relay}^{(k)} \rightarrow A_2$  or both happens.

{\em   Simulation results show that $\mathcal{K}(\overline{q},\overline{N})$ has only one tuple in case the pair 
$(\overline{q},\overline{N})$ is feasible.}\qed
\end{enumerate}

\subsection{Asymptotic Performance of Algorithm~\ref{algorithm:OptAsYouGoAdaptiveLearning}}
\label{subsection:asymptotic_performance_adaptive_learning}

Though Algorithm~\ref{algorithm:OptAsYouGoAdaptiveLearning} 
induces a nonstationary  
policy, Theorem~\ref{theorem:convergence_OptAsYouGoAdaptiveLearning} 
states that the sequence of policies generated by 
Algorithm~\ref{algorithm:OptAsYouGoAdaptiveLearning} converges to the set 
of optimal stationary, deterministic policies 
for the constrained  problem (\ref{eqn:constrained_problem_average_cost_with_outage_cost}). 
Let  $\pi_{oaygal}$ denote the (nonstationary) deployment policy induced by 
Algorithm~\ref{algorithm:OptAsYouGoAdaptiveLearning}.

\begin{theorem}\label{theorem:expected_average_cost_performance_of_optexplorelimadaptivelearning}
Under Assumption~{\ref{assumption:existence_of_xio_xir}}, 
Assumption~\ref{assumption:shadowing_continuous_random_variable} and proper choice of $A_1$ and $A_2$, we have:

\footnotesize
\begin{eqnarray*}
&& \limsup_{x \rightarrow \infty} \frac{\mathbb{E}_{\pi_{oaygal}}\sum_{i=1}^{N_x}\Gamma_i}{x} = \gamma^* \nonumber\\
&& \, \limsup_{x \rightarrow \infty} \frac{\mathbb{E}_{\pi_{oaygal}}\sum_{i=1}^{N_x}Q_{out}^{(i,i-1)}}{x} \leq \overline{q}, \,\,  
\limsup_{x \rightarrow \infty} \frac{\mathbb{E}_{\pi_{oaygal}}N_x}{x} \leq \overline{N} \nonumber\\
\end{eqnarray*}
\normalsize
\end{theorem}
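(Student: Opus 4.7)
The plan is to bootstrap the almost-sure iterate convergence of Theorem~\ref{theorem:convergence_OptAsYouGoAdaptiveLearning} into convergence of the time-averaged costs, and then pass to expectations by bounded convergence. By Theorem~\ref{theorem:convergence_OptAsYouGoAdaptiveLearning}, on a probability-$1$ event there exists (possibly sample-path dependent) $(\xi_{out}^\infty,\xi_{relay}^\infty)$ with $(\underline{V}^*(\xi_{out}^\infty,\xi_{relay}^\infty),\xi_{out}^\infty,\xi_{relay}^\infty) \in \mathcal{K}(\overline{q},\overline{N})$, such that the iterates approach this point (or, more generally, such that the distance of the iterates to $\mathcal{K}(\overline{q},\overline{N})$ vanishes). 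By the very definition of $\mathcal{K}(\overline{q},\overline{N})$, the associated stationary policy $\pi^*(\xi_{out}^\infty,\xi_{relay}^\infty)$ achieves mean power per step $\gamma^*$, mean outage per step at most $\overline{q}$, and mean relay rate at most $\overline{N}$.

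The main substantive step is an asymptotic strong law for the renewal-reward sequence generated by the nonstationary policy $\pi_{oaygal}$. Let $T_i$ be the step at which the $i$-th relay is placed, and let $(U_i,\Gamma_i,Q_{out}^{(i,i-1)})$ be the associated inter-node distance, transmit power, and link outage. Under the hypothetical stationary policy $\pi^*(\xi_{out}^\infty,\xi_{relay}^\infty)$, these triples would be i.i.d. with means $\overline{U}^*(\xi_{out}^\infty,\xi_{relay}^\infty)$, $\overline{\Gamma}^*(\xi_{out}^\infty,\xi_{relay}^\infty)$ and $\overline{Q}_{out}^*(\xi_{out}^\infty,\xi_{relay}^\infty)$. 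Under $\pi_{oaygal}$, the decision thresholds $V^{(k-1)}(r+1)-V^{(k-1)}(1)$ and the multipliers $\xi_{out}^{(k-1)},\xi_{relay}^{(k-1)}$ converge a.s. to those of the limiting stationary policy. Because Assumption~\ref{assumption:shadowing_continuous_random_variable} rules out ties in the threshold comparison with probability one (and by the continuity result of Theorem~\ref{theorem:placement_rate_mean_outage_per_step_continuous_in_xio_and_xir}), one can couple, for each $\epsilon>0$, the actual process with a process running $\pi^*(\xi_{out}^\infty,\xi_{relay}^\infty)$ starting at a random but a.s.~finite time, so that the two processes agree thereafter on the shadowing field. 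Applying the strong law to the stationary tail and absorbing the transient into a vanishing prefix yields, almost surely,
\begin{equation*}
\frac{N_x}{x}\to \frac{1}{\overline{U}^*(\xi_{out}^\infty,\xi_{relay}^\infty)},\quad \frac{1}{N_x}\sum_{i=1}^{N_x}\Gamma_i\to \overline{\Gamma}^*(\xi_{out}^\infty,\xi_{relay}^\infty),
\end{equation*}
and analogously for outage; multiplying these yields $\tfrac{1}{x}\sum_{i=1}^{N_x}\Gamma_i\to\gamma^*$ a.s., and similarly that the outage and relay-density averages are a.s.~dominated by $\overline{q}$ and $\overline{N}$ respectively.

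The final step is routine: because $\Gamma_i\in[P_1,P_M]$, $Q_{out}^{(i,i-1)}\in[0,1]$, and $N_x\leq x$, all three time-averages are bounded by constants independent of $x$, so the bounded convergence theorem converts the almost-sure statements into the claimed limits of expectations. The main obstacle in the above plan is the coupling/strong-law step: the policy is not stationary, the limit $(\xi_{out}^\infty,\xi_{relay}^\infty)$ may be sample-path dependent, and $\mathcal{K}(\overline{q},\overline{N})$ need not be a singleton, so the limit point may in principle drift within $\mathcal{K}$. One circumvents this by exploiting that every point of $\mathcal{K}$ gives the \emph{same} cost $\gamma^*$ and feasible values of outage and relay rate, so the time-average along any subsequential limit yields the required bounds; a careful Cesàro argument using the uniform Lipschitz continuity from Theorem~\ref{theorem:V-continuous-in-xi} and Theorem~\ref{theorem:placement_rate_mean_outage_per_step_continuous_in_xio_and_xir} is needed to handle the possibility of drift along $\mathcal{K}$.
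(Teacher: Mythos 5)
Your overall architecture --- almost-sure convergence of the iterates (Theorem~\ref{theorem:convergence_OptAsYouGoAdaptiveLearning}), then almost-sure convergence of the empirical per-step averages, then passage to expectations by bounded convergence using $\Gamma_i\le P_M$, $Q_{out}^{(i,i-1)}\le 1$, $N_x\le x$ --- is the right one, and it is the same skeleton as the proof the paper defers to (Theorem~13 of the cited earlier work). The first and last steps of your plan are fine.

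The genuine gap is in your central step. The coupling you propose --- ``the actual process agrees with a process running $\pi^*(\xi_{out}^\infty,\xi_{relay}^\infty)$ from a random but a.s.\ finite time onward'' --- does not hold. The thresholds $V^{(k-1)}(r+1)-V^{(k-1)}(1)$ and multipliers $\xi^{(k-1)}$ only converge asymptotically; at every finite $k$ they differ from their limits, and the placement decision is a strict-inequality comparison of the measured link cost against the threshold. On a common shadowing field, the two processes therefore disagree at every step where the measured cost falls in the (shrinking but nonempty) gap between the running threshold and the limiting one; over an infinite horizon this happens infinitely often almost surely, and a single disagreement desynchronizes the renewal epochs of the two processes, so they never ``agree thereafter.'' Assumption~\ref{assumption:shadowing_continuous_random_variable} rules out ties but does not rescue exact coupling. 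The correct mechanism --- which you relegate to a final parenthetical about ``a careful Ces\`aro argument'' but which is in fact the whole proof --- is: condition on the filtration at each placement epoch, note that the conditional means of $(U_i,\Gamma_i,Q_{out}^{(i,i-1)})$ are the continuous functions $\overline{U}(\underline{V}^{(i-1)},\xi^{(i-1)}_{out},\xi^{(i-1)}_{relay})$, etc.\ (Lemma~\ref{lemma:placement_rate_mean_outage_per_step_continuous_in_V_xio_and_xir}, together with the vanishing gap between the primed and unprimed versions), which converge a.s.\ to values attained on $\mathcal{K}(\overline{q},\overline{N})$; apply the martingale SLLN to the bounded centered increments to replace empirical sums by sums of conditional means; and then use that the \emph{ratios} $\overline{\Gamma}^*/\overline{U}^*=\gamma^*$, $\overline{Q}_{out}^*/\overline{U}^*\le\overline{q}$, $1/\overline{U}^*\le\overline{N}$ are controlled uniformly on $\mathcal{K}$ (together with $\overline{U}^*\ge 1$ to keep denominators away from zero), so drift within $\mathcal{K}$ is harmless. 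As written, your proof substitutes an unavailable coupling for this argument, so the key step needs to be redone along these lines.
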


\begin{proof}
 The proof is similar to \cite[Theorem~$13$]{chattopadhyay-etal15measurement-based-impromptu-deployment-arxiv-v1}.
\end{proof}

\begin{figure*}[t]
\begin{minipage}[r]{0.5\linewidth}
\subfigure{
\includegraphics[width=\linewidth, height=6cm]{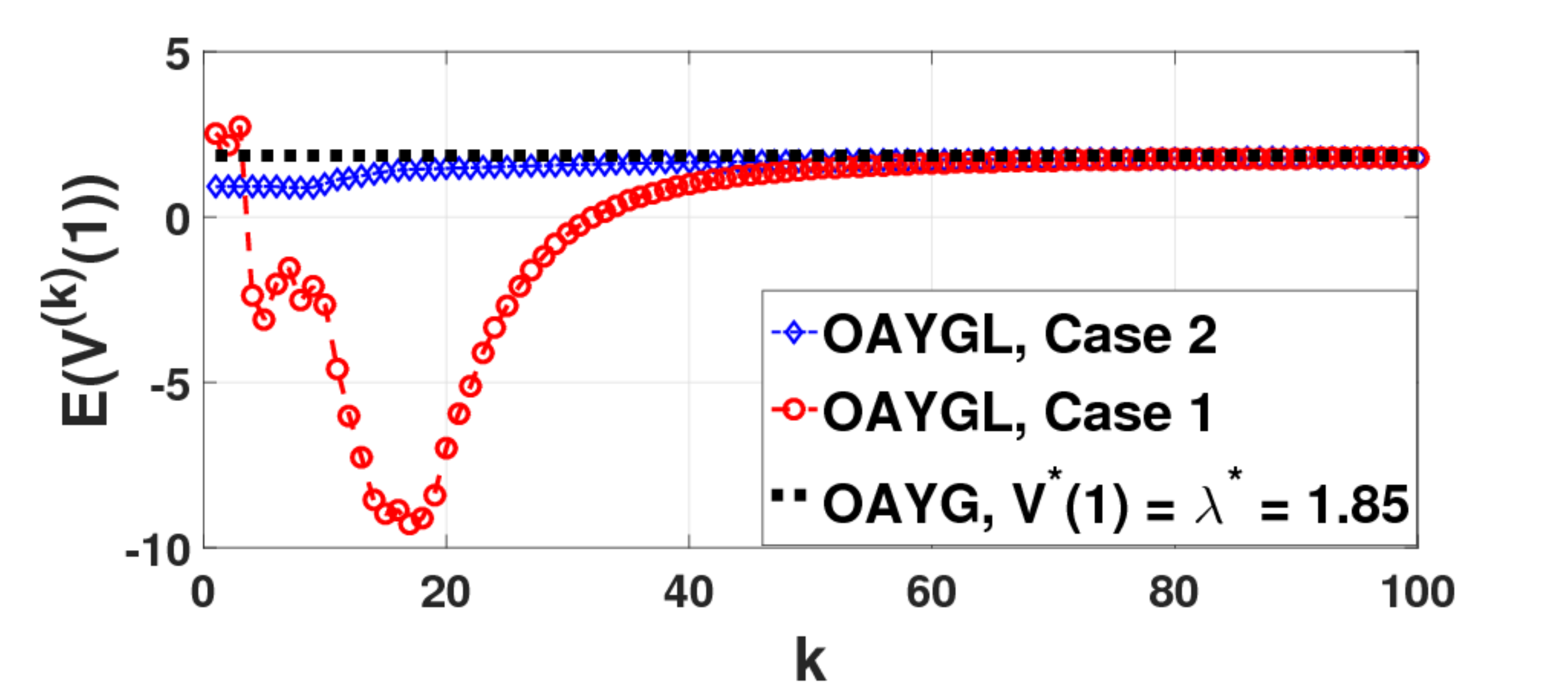}
\includegraphics[width=\linewidth, height=6cm]{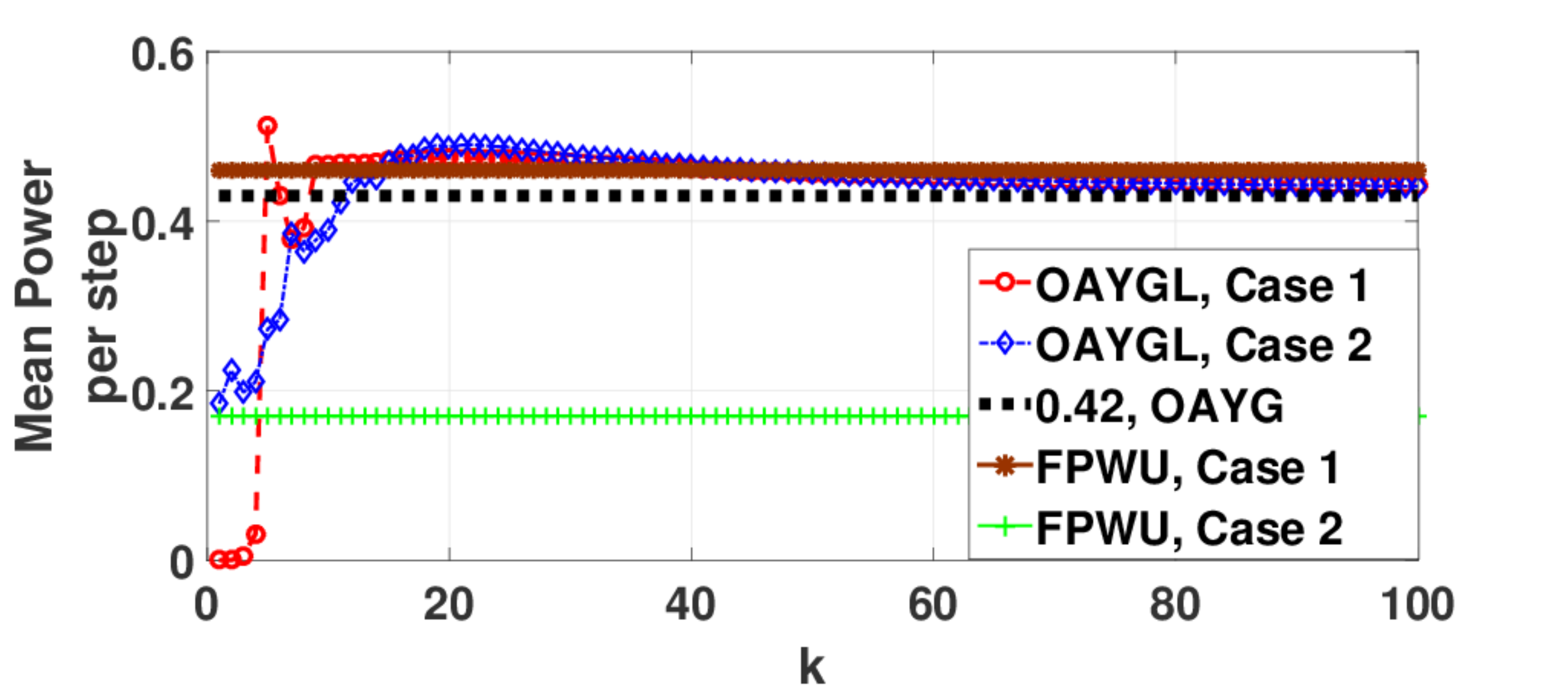}}
\end{minipage}  \hfill
\vspace{-5mm}
\end{figure*}
\begin{figure*}[t]
\begin{minipage}[c]{0.5\linewidth}
\subfigure{
\includegraphics[width=\linewidth, height=6cm]{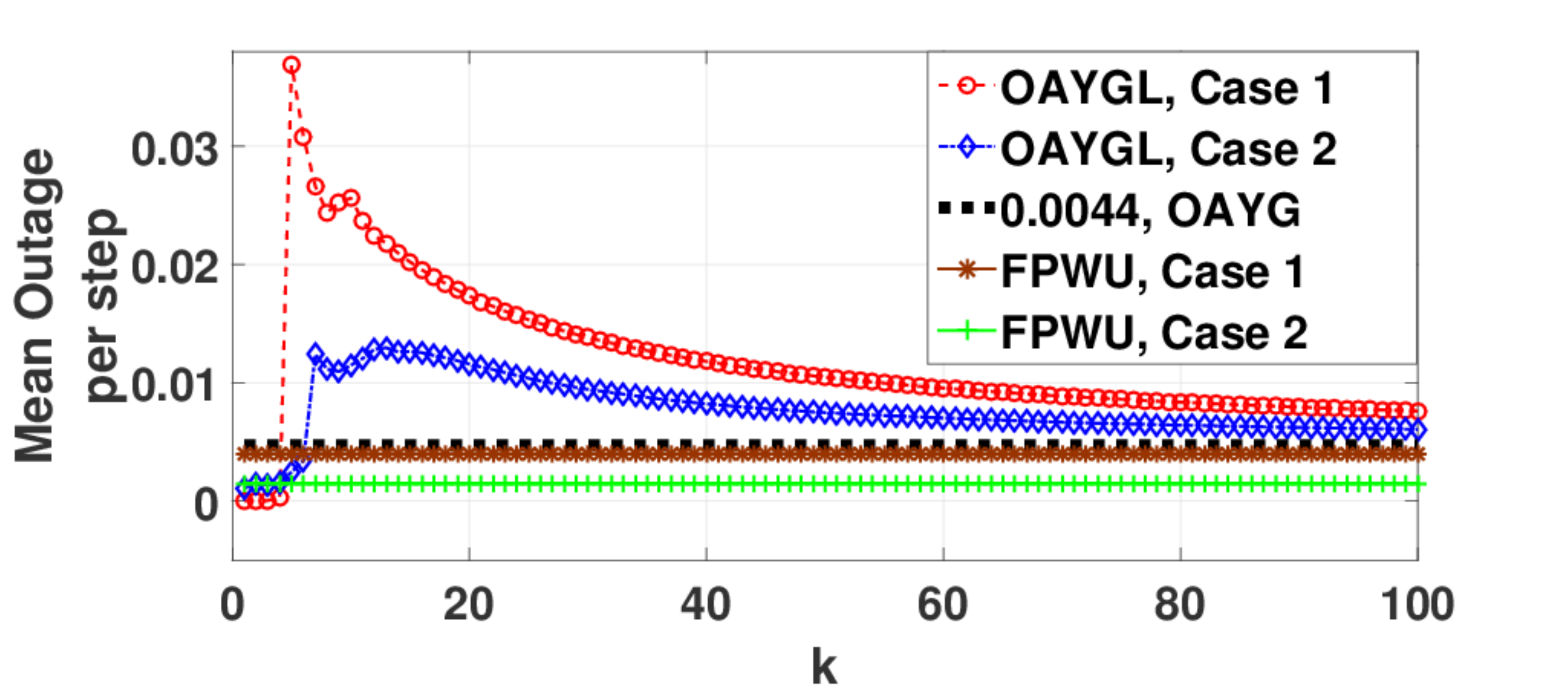}
\includegraphics[width=\linewidth, height=6cm]{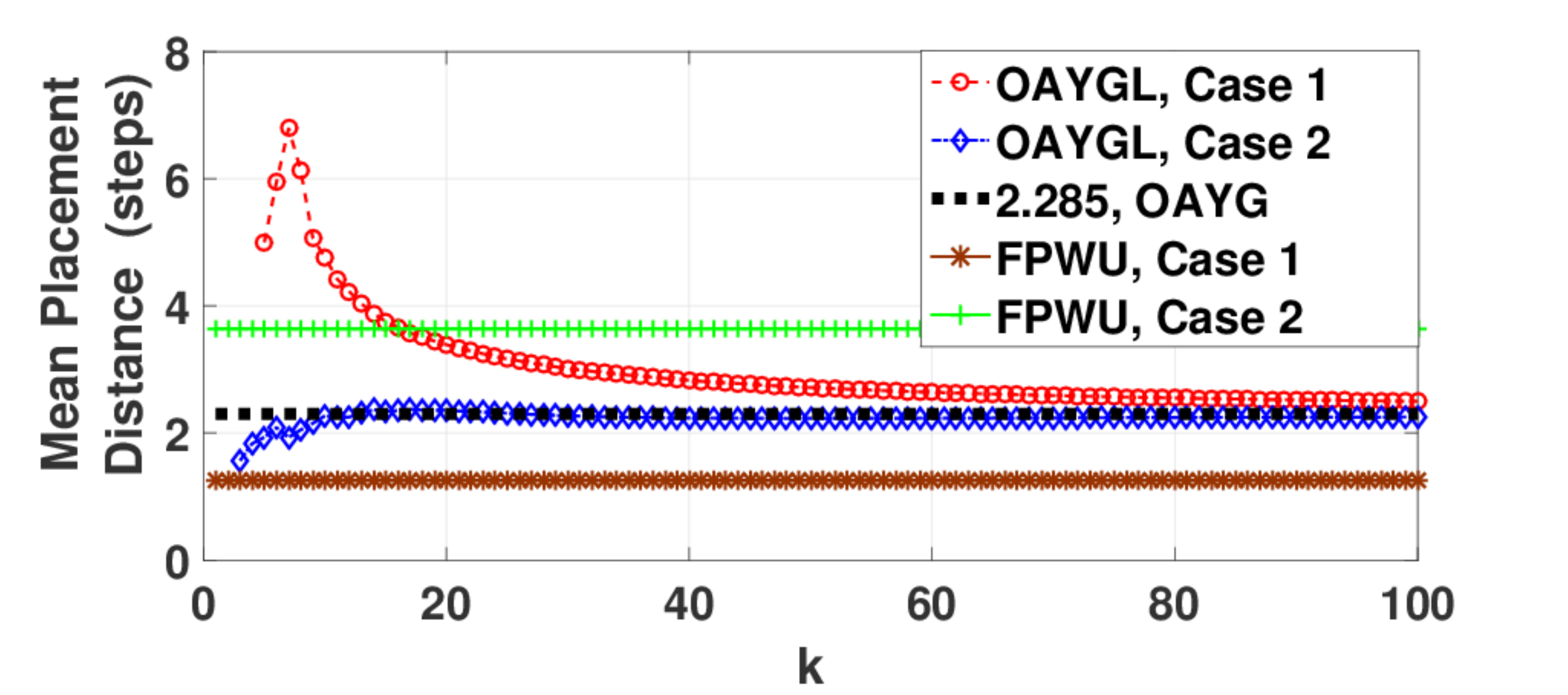}}
\end{minipage} \hfill
\vspace{-3mm}
\caption{Convergence speed of OptAsYouGoLearning (OAYGL) with the number of steps, $k$. 
In the legends, ``OAYG'' refers to the values that are obtained if Algorithm~\ref{algorithm:optimal-policy-structure}  
is used; these are the target values for OptAsYouGoLearning.}
\label{fig:single_timescale_plots}
\vspace{-0mm}
\end{figure*}

\vspace{-3mm}
\section{Convergence Speed of Learning Algorithms: A Simulation Study}\label{section:convergence_speed_learning_algorithms}
\vspace{-1mm}

In this section, we provide a simulation study for the convergence rate of 
Algorithm~\ref{algorithm:OptAsYouGoLearning} 
and  Algorithm~\ref{algorithm:OptAsYouGoAdaptiveLearning}. 

%
%This will give an idea about the performance of these algorithms 
%for deployment over a finite distance.

\vspace{-3mm}
\subsection{Parameter Values  Used in the Simulation}\label{subsection:parameter_values}
\vspace{-1mm}

For simulation, we consider a deployment environment similar to that considered in 
\cite[Section~VIII]{chattopadhyay-etal15measurement-based-impromptu-deployment-arxiv-v1}. The details of the simulation environment are provided below.

We assume that deployment is done with iWiSe motes (\cite{iwise}) equipped with $9$~dBi antennas. 
$\mathcal{S}$, the set of transmit power levels, is taken to be $\{-18,-7,-4,0,5\}$~dBm, which is a subset of available transmit power levels for iWiSe motes. 
Under the channel model as given by  
(\ref{eqn:channel_model}), our measurements 
in a forest-like environment gave 
$\eta=4.7$ and $c=10^{0.17}$ (i.e., $1.7$~dB); the experimental details can be found in 
\cite{chattopadhyay-etal14deployment-experience}. From the statistical analysis of the measurement data, we also showed that 
shadowing $W$  follows log-normal distribution in such a forest-like environment; $W=10^{\frac{Y}{10}}$ with 
$Y \sim \mathcal{N}(0, \sigma^2)$, where $\sigma=7.7$~dB was obtained from our data analysis. Shadowing decorrelation distance was calculated as $6$~meters; hence 
we consider deployment with $\delta=20$~meter. The fading turned out to be Rayleigh fading.

Outage is defined to be the event when  a packet is received at a power level below $P_{rcv-min} = 10^{-9.7}$~mW (i.e., $-97$~dBm); for a commercial implementation of IEEE~$802.15.4$, received power $-97$~dBm   results in a $2\%$ packet loss probability for $127$~byte packets for iWiSe motes (obtained from measurements).

We choose $B$ in the following way. We define a link to be workable if it has an outage probability less than $3\%$. $B$ is chosen to be the largest integer such that the probability of finding a workable link of length $B \delta$ is greater than $20\%$, under $5$~dBm transmit power. For the parameters $\eta=4.7$ and $\sigma=7.7$~dB, and $5$~dBm transmit power, $B$ turned out to be $5$. 

 It is important to note that, the radio propagation parameters (e.g., $\eta$ and $\sigma$) and modeling assumptions (e.g., log-normal shadowing) are obtained and validated using field data collected via extensive measurements in a forest-like environment; the details of these experiments can be found in \cite{chattopadhyay-etal14deployment-experience}. Hence, in this paper, we evaluate our algorithms only via MATLAB simulation of  an environment that has radio propagation model and  parameters obtained from experiments in \cite{chattopadhyay-etal14deployment-experience}. This is done by generating random channel gains in MATLAB, for the wireless links that need to be measured in course of  the deployment process.   
 
 The performance variation of OptAsYouGo algorithm with $(\xi_{out},\xi_{relay})$ has been demonstrated numerically in    \cite[Section~V, Appendix~C]{chattopadhyay-etal15measurement-based-impromptu-deployment-arxiv-v1}, which comply with Theorem~\ref{theorem:lambda-increasing-concave-continuous-in-xi} and  Theorem~\ref{theorem:outage_decreasing_with_xio_placement_rate_decreasing_with_xir}.

\vspace{-3mm}

\subsection{OptAsYouGoLearning for Given  Multipliers}
\label{subsection:convergence_speed_optasyougolearning}
\vspace{-1mm}

Here we study the rate of convergence for OptAsYouGoLearning with $\xi_{out}=125$, $\xi_{relay}=2$. 
Let us assume that the propagation environment,  
in which deployment is being carried out, is characterized 
by the parameters given in Section~\ref{subsection:parameter_values} (i.e., 
$\eta=4.7$, $\sigma=7.7$~dB etc.). The optimal average cost per step, under these parameter values, is
$\lambda^*=V^*(1)=1.85$ (computed numerically).\footnote{These values of $\xi_{out}$ and $\xi_{relay}$ are chosen because they can 
produce reasonable values of placement rate, mean power per step and mean outage per step, which can be used in practical networks. 
However, these values are chosen only for illustration purposes, and the choice will vary depending on the requirement 
for deployment.}

We numerically study the performance of the following three types of algorithms: (i) $\eta$ and $\sigma$ are 
known prior to deployment (the agent uses the fixed optimal policy with $\xi_{relay}=2$ and $\xi_{out}=125$ in this case), 
(ii) the agent has imperfect estimates of $\eta$ and $\sigma$  deployment, and OptAsYouGoLearning is 
used to update the  policy as deployment progresses, and 
(iii) the agent has imperfect estimates of $\eta$ and $\sigma$  deployment, but 
the corresponding suboptimal policy is used along the infinite line without any update. We use 
the abbreviations OAYGL and 
OAYG for OptAsYouGoLearning and Optimal Algorithm for As-You-Go deployment 
(i.e., Algorithm~\ref{algorithm:optimal-policy-structure}), respectively. Also, following 
the terminology in \cite{chattopadhyay-etal15measurement-based-impromptu-deployment-arxiv-v1}, 
we use the abbreviation FPWU for ``Fixed Policy without Update.''

Next, we formally explain the various cases  considered in our simulations:

\begin{enumerate}[label=(\roman{*})]

{\bf \item OAYG:} Here the agent  knows  
$\eta=4.7$, $\sigma=7.7$~dB prior to deployment, and uses  Algorithm~\ref{algorithm:optimal-policy-structure} 
with $\xi_{out}=125$, $\xi_{relay}=2$. 

 {\bf \item  OAYGL Case~$1$:} Here the true $\eta=4.7$ and $\sigma=7.7$~dB  are unknown to 
the deployment agent. But the agent has an initial estimate 
$\eta=5$, $\sigma=8$~dB. Hence, he starts deploying using a $\underline{V}^{(0)}$ which is optimal for 
these imperfect estimates of $\eta$ and $\sigma$, and $\xi_{out}=125$, $\xi_{relay}=2$. He updates 
the policy using the OptAsYouGoLearning algorithm as deployment progresses.

 {\bf \item  OAYGL Case~$2$:} This is different from OAYGL Case~$1$ only in the aspect that here 
 deployment starts with the optimal policy for $\eta=4$, $\sigma=7$~dB.

{\bf \item FPWU Case~$1$:} Here the true $\eta$ and $\sigma$ are unknown prior to deployment, and 
the agent has an initial estimate 
$\eta=5$, $\sigma=8$~dB. The agent computes $\underline{V}^*$ for these imperfect initial estimates and  
$\xi_{out}=125$, $\xi_{relay}=2$, and uses this policy throughout the 
deployment process without any update. This case will 
demonstrate the gain in performance by updating the policy under OptAsYouGoLearning, w.r.t. the case 
where the suboptimal policy  is used throughout the deployment process.

{\bf \item FPWU Case~$2$:} It differs from 
FPWU Case~$1$ only in the aspect that here the agent has initial estimates 
$\eta=4$, $\sigma=7$~dB.
\end{enumerate}

For simulation of OAYGL, we chose $a(k)=\frac{120}{k}$. 
We simulated $2000$ independent network deployments (i.e., $2000$ sample paths of the 
deployment process) with OptAsYouGoLearning, and 
estimated (by averaging over $2000$ deployments) the expectation of $V^{(k)}(1)$, 
mean power per step (i.e., $\frac{\sum_{j=1}^{N_k} \Gamma_j}{k}$), mean outage per step  (i.e., $\frac{\sum_{j=1}^{N_k} Q_{out}^{(j,j-1)}}{k}$) 
and mean placement distance (i.e., $\frac{k}{N_k}$), in the part of the network between the sink node 
to the $k$-th step. The results are summarized in 
Figure~\ref{fig:single_timescale_plots}. Asymptotically the estimates are supposed to converge to the values provided by 
OAYG.

{\bf Observations:} We observe that the estimate of 
$\mathbb{E} (V^{(k)}(1))$ approaches the optimal cost 
$\lambda^*=V^*(1)=1.85$ (for the actual propagation 
parameters), as $k$ increases, and gets to within $10\%$ of the optimal 
cost by the time where $k=35$ to $40$ (within a distance of $800$~meters), while 
starting with two widely different initial guesses of 
the propagation parameters. The estimates of mean power per step, mean outage per step and  
mean placement distance also converges very fast to the corresponding values achieved by OAYG. It also shows that, 
if the performance of the initial imperfect policy (FPWU) is significantly different than that of OAYG, then 
OptAsYouGoLearning will provide closer performance to OAYG, as compared to FPWU (see the mean placement distance 
plot).

Note that, even though Theorem~\ref{theorem:OptAsYouGoLearning} guarantees almost sure convergence, the 
convergence speed will vary across sample paths. But here we demonstrate speed of convergence after averaging over 
$2000$ sample paths.

\begin{figure*}[t]
\begin{minipage}[r]{0.5\linewidth}
\subfigure{
\includegraphics[width=\linewidth, height=6cm]{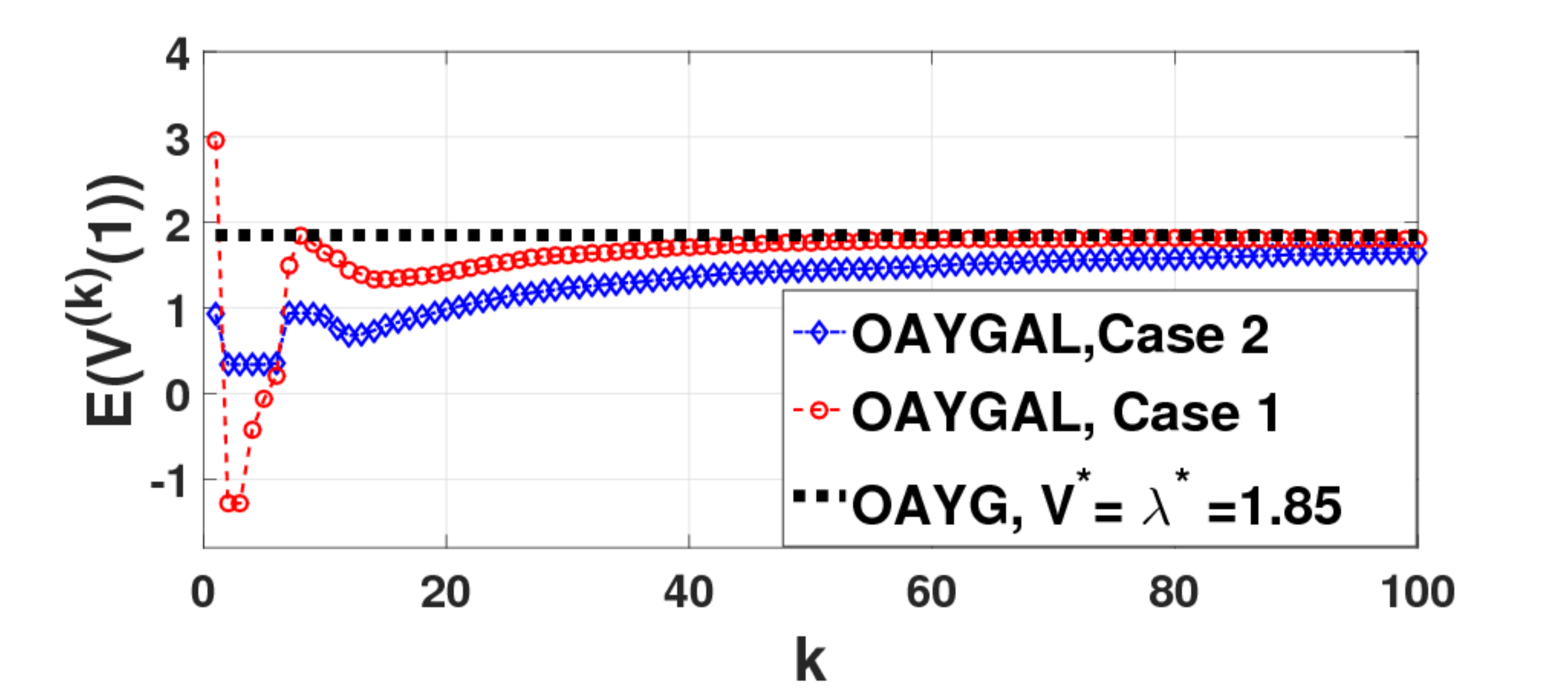}
\includegraphics[width=\linewidth, height=6cm]{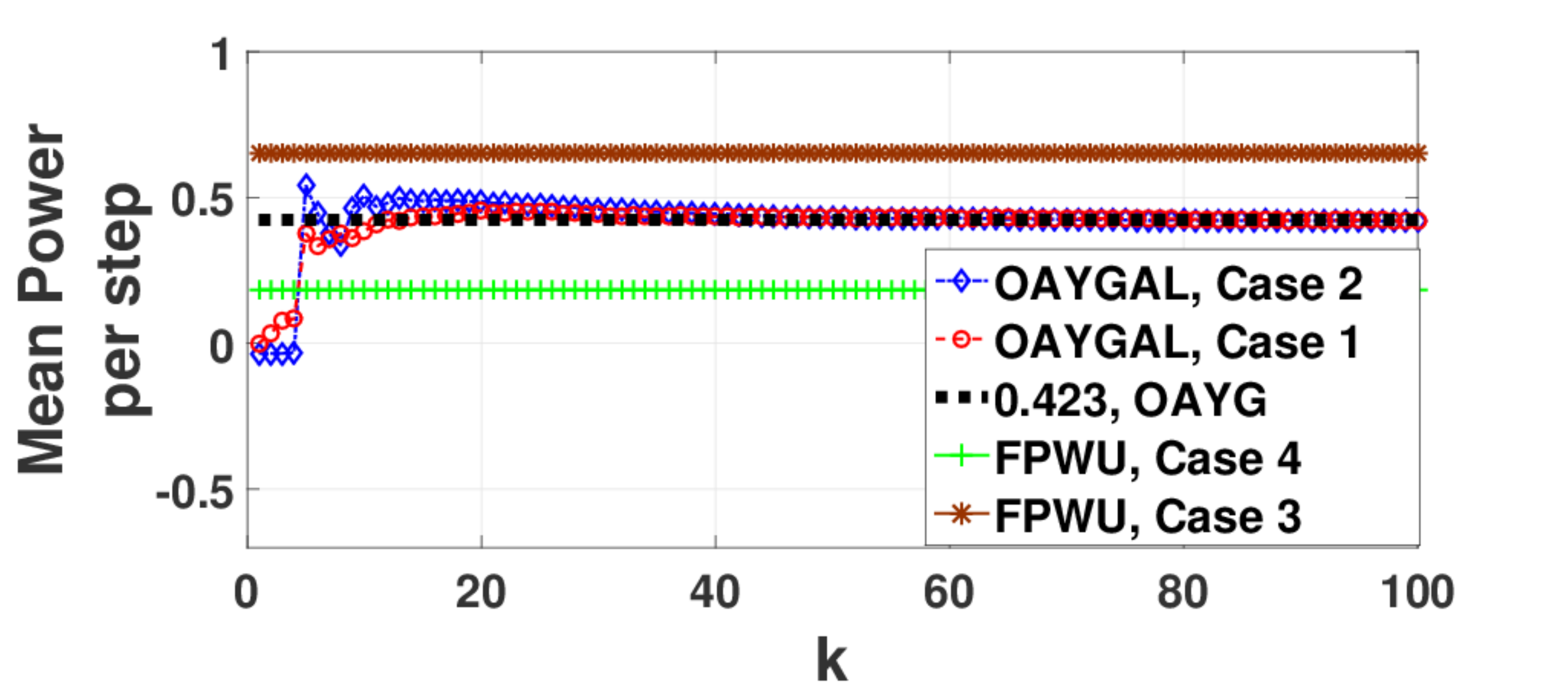}}
\end{minipage} \\ \hfill
\vspace{-3mm}
\begin{minipage}[c]{0.5\linewidth}
\subfigure{
\includegraphics[width=\linewidth, height=6cm]{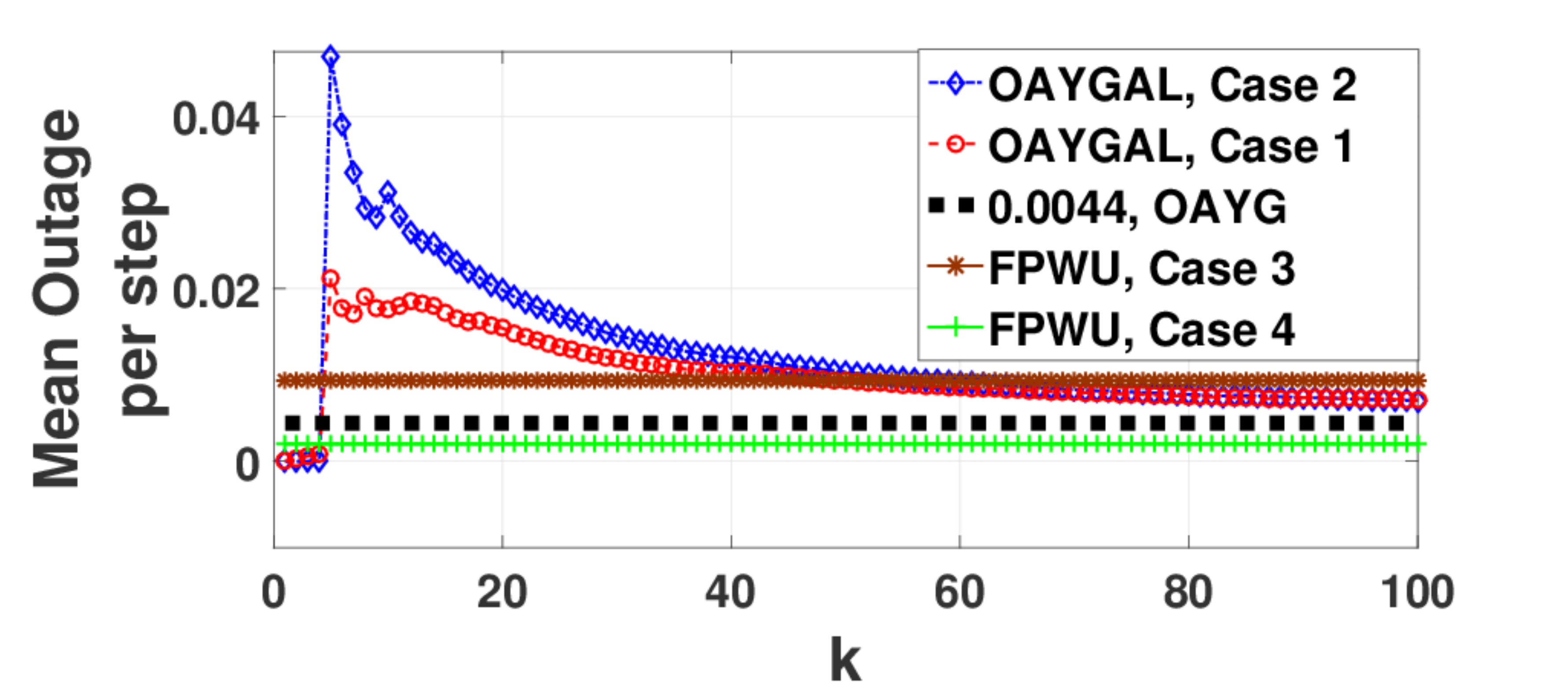}
\includegraphics[width=\linewidth, height=6cm]{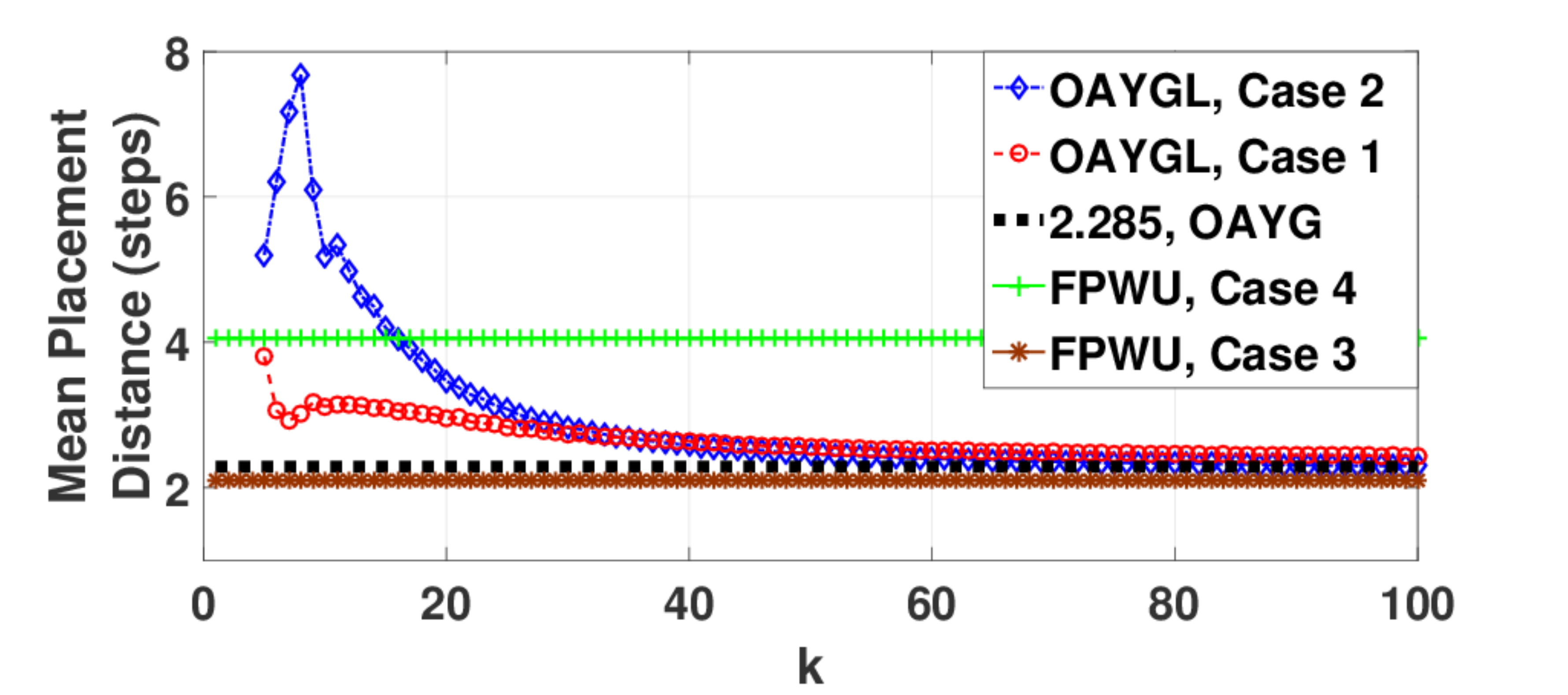}}
\end{minipage}\\ \hfill
\vspace{-2mm}
\begin{minipage}[r]{0.5\linewidth}
\subfigure{
\includegraphics[width=\linewidth, height=6cm]{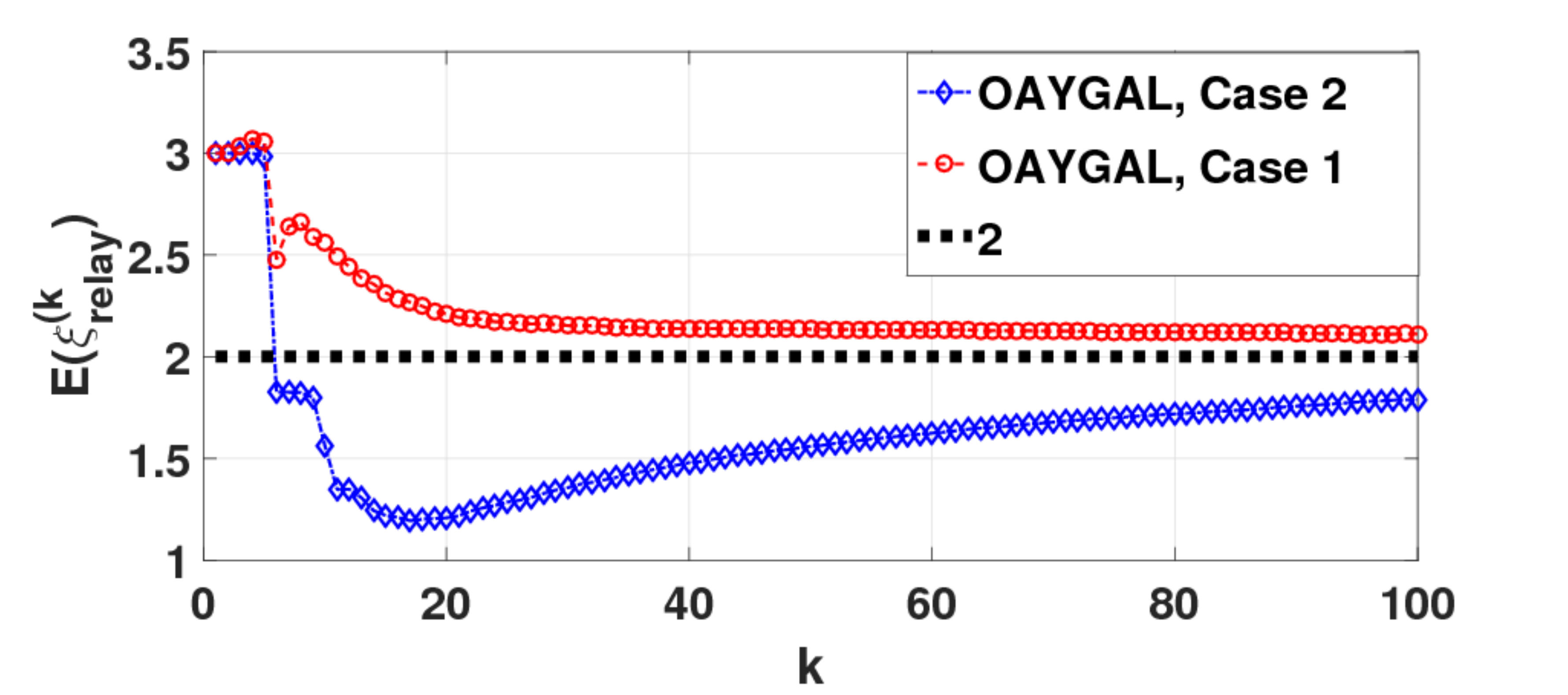}
\includegraphics[width=\linewidth, height=6cm]{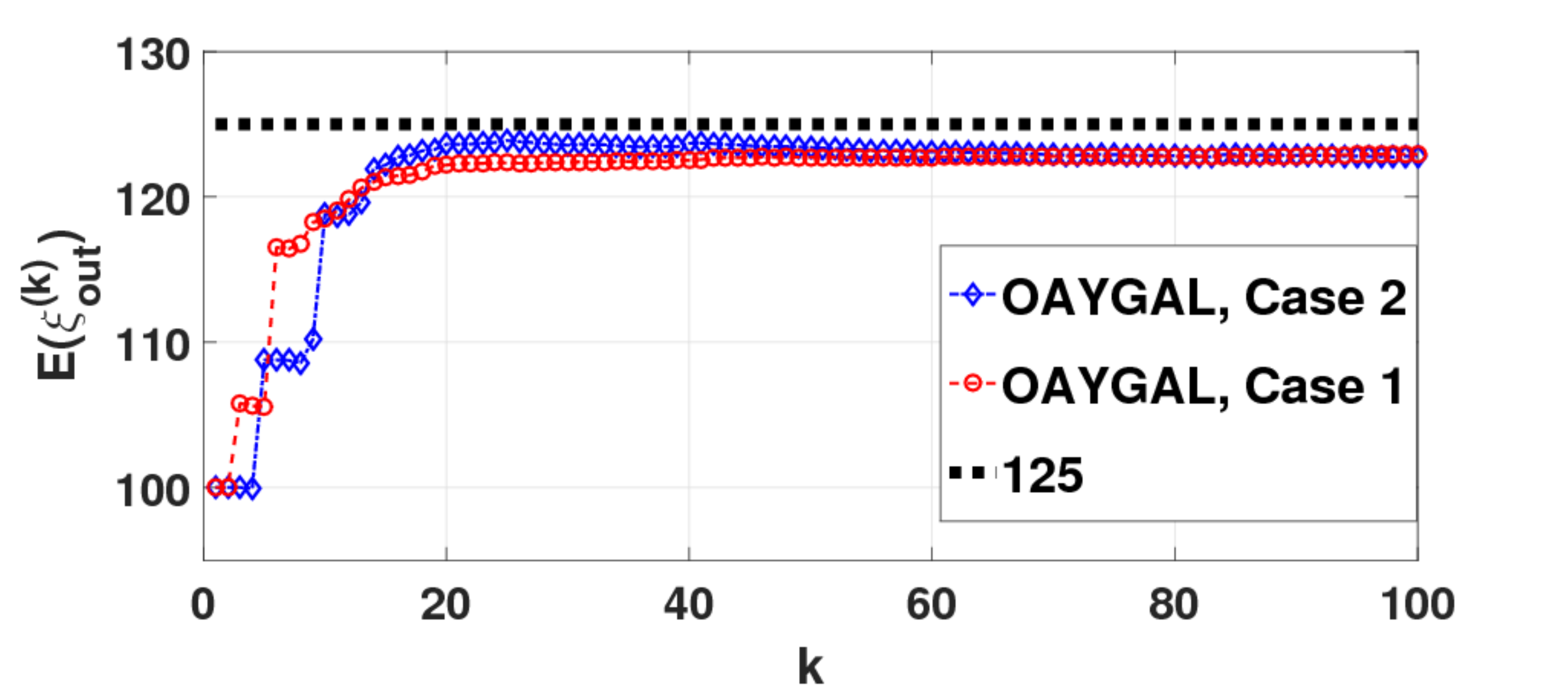}}
\end{minipage}  \hfill
\vspace{6mm}
\caption{Convergence speed of OptAsYouGoAdaptiveLearning (OAYGAL) with the number of steps, $k$. 
In the legends, ``OAYG'' refers to the values that are obtained if Algorithm~\ref{algorithm:optimal-policy-structure}  is used; 
these are the target values for OptAsYouGoAdaptiveLearning. Evolution of $\xi_{out}^{(k)}$ and $\xi_{relay}^{(k)}$ are shown 
for a longer time, since they converge slowly to their respective target values.}
\label{fig:two_timescale_plots}
\vspace{-5mm}
\end{figure*}

\vspace{-3mm}

\subsection{OptAsYouGoAdaptiveLearning}
\label{subsection:convergence_speed_optasyougoadaptivelearning}
\vspace{-1mm}

Now we will demonstrate the performance of OptAsYouGoAdaptiveLearning 
(Algorithm~\ref{algorithm:OptAsYouGoAdaptiveLearning}) 
for deployment over a 
finite distance under an unknown propagation environment. 
We again assume that the true propagation parameters are given by  $\eta=4.7$, $\sigma=7.7$~dB. 
For these parameters, under the choice $\xi_{relay}=2$ and $\xi_{out}=125$, 
the optimal average cost per step will be $\lambda^*=1.85$, which can be achieved by 
OAYG (Algorithm~\ref{algorithm:optimal-policy-structure}). OAYG in this case will 
yield a mean placement distance of $2.285$~steps, a mean outage per step 
of $\frac{0.0101}{2.285}=0.0044$, and a mean power per step 
of $0.423$~mW.

Now, let us suppose that we need to solve the constrained problem 
in (\ref{eqn:constrained_problem_average_cost_with_outage_cost}) with 
the targets $\overline{q}=0.0044$ and $\overline{N}=\frac{1}{2.285}$, 
but the true $\eta$ and $\sigma$ of the environment are unknown to us. Hence, we need to employ 
OptAsYouGoAdaptiveLearning (we use the abbreviation OAYGAL for it); as compared to OptAsYouGoLearning, we need 
to make  an additional choice of $\xi_{out}^{(0)}$ and $\xi_{relay}^{(0)}$. 

We consider the following cases  in our simulations:

\begin{enumerate}[label=(\roman{*})]

{\bf \item OAYG:} This is same as in Section~\ref{subsection:convergence_speed_optasyougolearning}

 {\bf \item  OAYGAL Case~$1$:} Here the true $\eta=4.7$ and $\sigma=7.7$~dB  are unknown to 
the deployment agent. But the agent has an initial estimate 
$\eta=5$, $\sigma=8$~dB. Hence, he starts deploying using a $\underline{V}^{(0)}$ which is optimal for 
these imperfect estimates of $\eta$ and $\sigma$, and $\xi_{out}^{(0)}=100$, $\xi_{relay}^{(0)}=3$. He updates 
the policy using the OptAsYouGoAdaptiveLearning algorithm as deployment progresses.

 {\bf \item  OAYGAL Case~$2$:} This is same as OAYGAL Case~$1$, except that the agent starts deploying 
 using a policy corresponding to the wrong initial estimate 
$\eta=4$, $\sigma=7$~dB (under $\xi_{out}^{(0)}=100$, $\xi_{relay}^{(0)}=3$).

{\bf \item FPWU Case~$3$:} Here the agent uses $\xi_{out}=100$, $\xi_{relay}=3$, and uses 
the corresponding optimal policy for the imperfect estimates $\eta=5$, $\sigma=8$~dB, throughout the 
deployment process.

{\bf \item FPWU Case~$4$:} This is similar to  
FPWU Case~$3$; the only difference is that the optimal policy for the imperfect estimates $\eta=4$, $\sigma=7$~dB 
is used throughout deployment.
\end{enumerate}

For simulation of OAYGAL, we chose the step sizes as  follows. We took $a(k)=\frac{1}{k^{0.55}}$, 
$b(k)=\frac{100}{k^{0.8}}$ for the $\xi_{out}$ update and $b(k)=\frac{1}{k^{0.8}}$ for the $\xi_{relay}$ update 
(however, both $\xi_{out}$ and $\xi_{relay}$ are updated in the same timescale). 
We simulated $2000$ independent network deployments (i.e., $2000$ sample paths of the 
deployment process) with OptAsYouGoLearning, and 
estimated (by averaging over $2000$ deployments) the expectations of $V^{(k)}(1)$,  
mean power per step, mean outage per step  
mean placement distance, $\xi_{out}^{(k)}$ and $\xi_{relay}^{(k)}$, in the part of the network between the sink node 
to the $k$-th step. The results are summarized in 
Figure~\ref{fig:two_timescale_plots} (see previous page).

{\bf Observations:} Under OAYGAL Case~$1$ the estimates of 
the expectations of $V^{(2000)}(1)$, $\xi_{out}^{(2000)}$, $\xi_{relay}^{(2000)}$, 
mean power per step up to the $2000^{th}$ step, mean outage per step up to the $2000^{th}$ step, and 
mean placement distance over $2000$ steps are $1.8479$, $124.89$, $2.01$, $0.4222$, $0.04403$  and 
$2.2852$, whereas 
the corresponding target values are $1.85$, $125$, $2$, $0.4223$, $0.00441$ and $2.2857$, respectively. 
Similarly, for OAYGAL Case~$2$ also, the quantities converge   close to the target values. In practice, the performance metrics are 
reasonably close to their respective target values within $100$~steps (i.e., $2$~kms). 

FPWU Case~$3$ and  FPWU Case~$4$ either violate some constraint or 
uses significantly higher per-step power compared to  
OAYG. But, by using OptAsYouGoAdaptiveLearning, 
we can achieve mean power per step  close to the optimal  
while (possibly) violating the constraints by small amount. 
However,  performance of OAYGAL is significantly closer to the target compared to FPWU.\qed

The speed of convergence will depend on the choice of  
$a(k)$ and $b(k)$, of $\xi_{out}^{(0)}$, $\xi_{relay}^{(0)}$ and the initial 
estimates of $\eta$ and $\sigma$. However, optimizing   convergence speed over step size 
sequences is left   for future research.

\vspace{-2mm}
\section{Conclusion}\label{section:conclusion}
In this paper, we have formulated the problem of pure-as-you-go deployment along a line, 
under a very light traffic assumption. The problem was  
formulated as an average cost MDP, and its optimal policy structure was studied analytically. 
We also proposed two learning algorithms that   asymptotically converge to the 
corresponding optimal policies. 
Numerical  results have been provided to illustrate the speed of convergence of the learning algorithms. 

While this paper provides an interesting set of results, it can be extended or modified in several ways: 
(i) One can attempt to develop deployment algorithms for 2 dimensional regions, 
where multiple agents cooperate to carry out the deployment. 
(ii) One can also attempt to develop deployment algorithms that can provide theoretical guarantees on the data rate 
supported by the deployed networks (instead of assuming that the traffic is lone packet). (iii) The optimization of 
the rate of convergence for the learning algorithms by proper choice of the step sizes is also a challenging problem. 
We leave these issues for future research endeavours.

\vspace{-3mm}
\bibliographystyle{IEEEtran}
\bibliography{IEEEabrv,arpan-techreport}

\vspace{-15mm}

\begin{IEEEbiography}[{\includegraphics[width=1in,height=1in,clip,keepaspectratio]{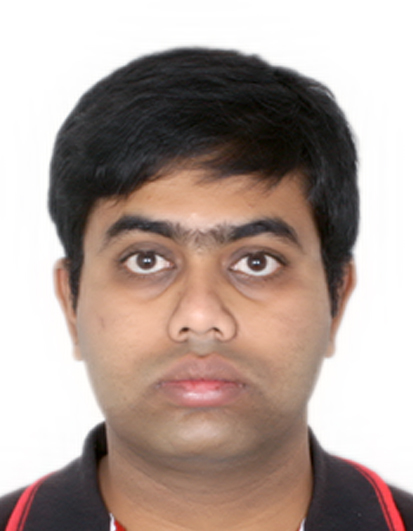}}]{Arpan 
Chattopadhyay} obtained his B.E. in Electronics and Telecommunication Engineering from Jadavpur University, 
India in the year 2008, and M.E. and Ph.D in Telecommunication Engineering from Indian Institute of Science, 
Bangalore, India in the year 2010 and 2015, respectively. He is currently working in Electrical Engineering department, University of Southern California, as a postdoctoral researcher. 
His research interests include  design, resource allocation, control and learning in wireless networks and cyber-physical systems.
    \end{IEEEbiography}

    \vspace{-15mm}

\begin{IEEEbiography}[{\includegraphics[width=1in,height=1in,clip,keepaspectratio]{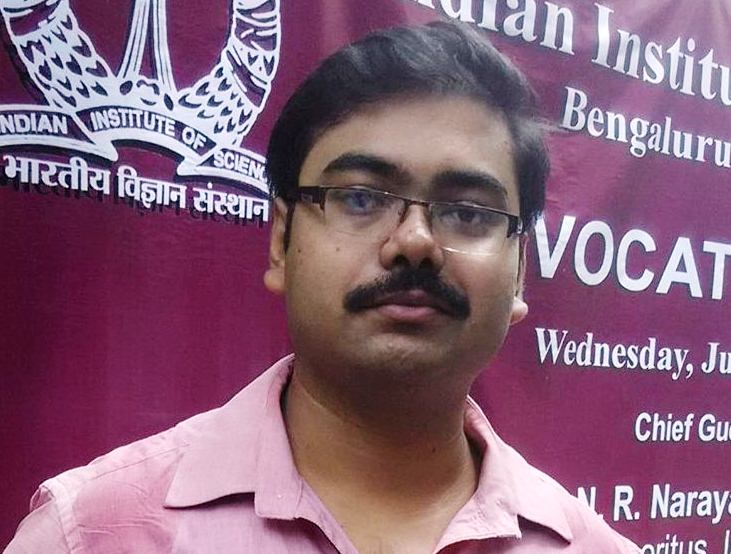}}]{Avishek Ghosh} 
obtained his B.E. in Electronics and Telecommunication Engineering from Jadavpur University, 
 India in 2012, and M.E.  in Telecommunication from Indian Institute of Science, 
Bangalore, India in the year 2014. He is currently doing his Ph.D in the department of EECS of UC Berkeley. 
His research interests include  networks and machine learning.
    \end{IEEEbiography}

   \vspace{-20mm}
   
    \begin{IEEEbiography}[{\includegraphics[width=1in,height=1in,clip,keepaspectratio]{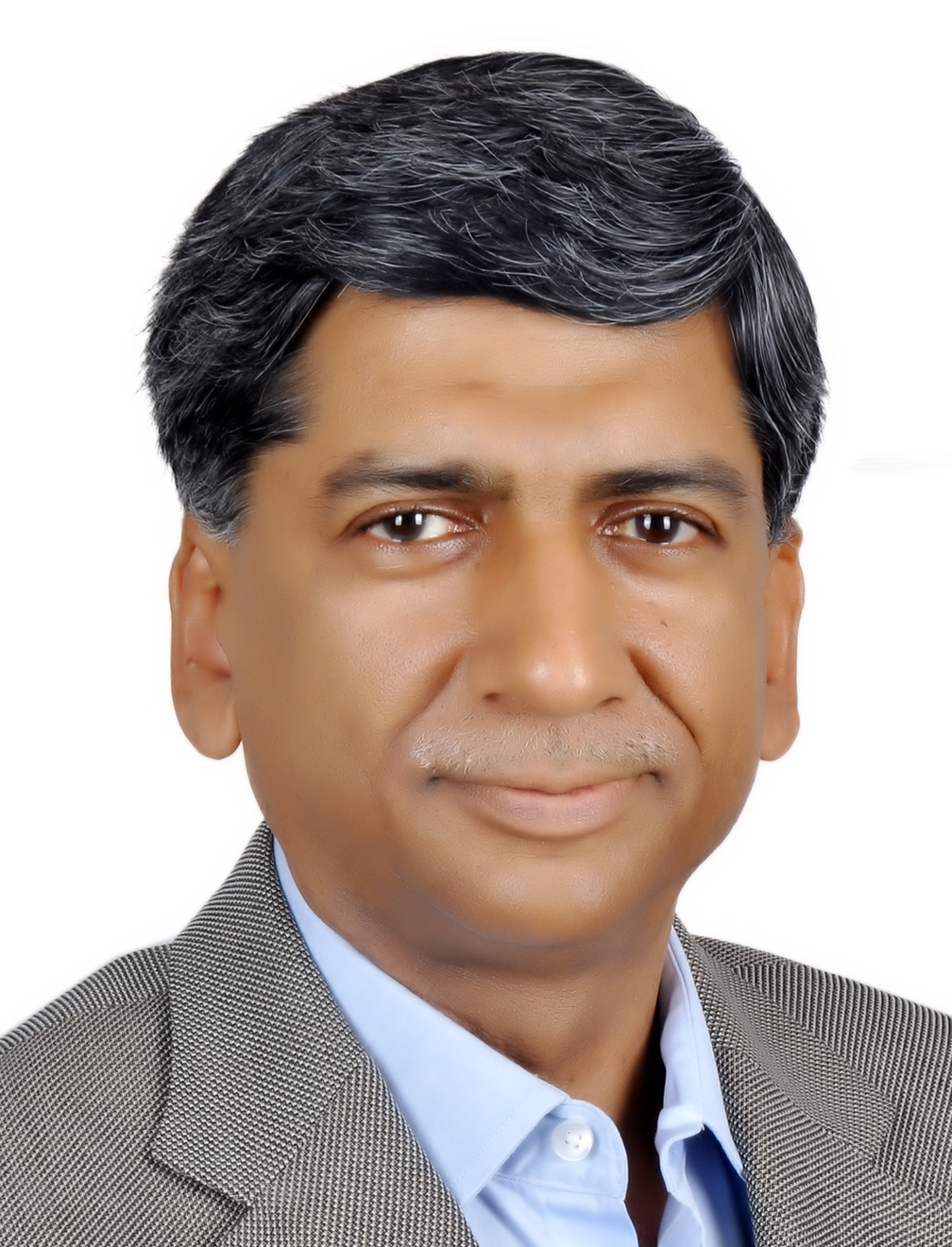}}]
    {Anurag Kumar} (B.Tech., Indian Institute of Technology (IIT)
Kanpur; PhD, Cornell University, both in Electrical Engineering) was
with Bell Labs, Holmdel, N.J., for over 6 years.  Since then he has
been on the faculty of the ECE Department at the Indian Institute of
Science (IISc), Bangalore; he is at present the Director of the
Institute.  His area of research is communication networking, and he
has recently focused primarily on wireless networking. He is a Fellow
of the IEEE, the Indian National Science Academy (INSA), the Indian
National Academy of Engineering (INAE), and the Indian Academy of
Sciences (IASc).  He was an associate editor of IEEE Transactions on
Networking, and of IEEE Communications Surveys and Tutorials.
 \end{IEEEbiography}

\newpage

\renewcommand{\thesubsection}{\Alph{subsection}}

\appendices
\hrule
{\bf Supplementary Material} \\
{\bf Title:} ``Asynchronous Stochastic Approximation Based 
Learning Algorithms for As-You-Go Deployment of Wireless Relay Networks along a Line'' \\
{\bf Authors:} Arpan Chattopadhyay, Avishek Ghosh, anurag Kumar\\
\hrule
\section{Formulation for known propagation parameters}
\label{appendix:mdp-for-pure-as-you-go-deployment}

\textbf{Proof of Theorem~\ref{theorem:uniqueness_of_V}:} 
From (\ref{eqn:average_cost_optimality_equation_no_backtracking-V-equation-no-lambda}), 
$V(B)$ is unique for fixed $\xi_{out}$ and $\xi_{relay}$. 
Hence, we can say that $V(B)$ is a continuous and decreasing function of 
$V(1)$. Now, let us assume that $V(r+1)$ is continuous and 
decreasing in $V(1)$ for some $r, 1 \leq r \leq B-1$. Let us recall 
(\ref{eqn:average_cost_optimality_equation_no_backtracking-V-equation-no-lambda}) for $V(r)$. 
Since $V(r+1)$ is continuous and decreasing in $V(1)$ by our induction hypothesis, it is evident 
from (\ref{eqn:average_cost_optimality_equation_no_backtracking-V-equation-no-lambda}) that $V(r)$ is 
also continuous and decreasing in $V(1)$. Proceeding in this way, we can write $V(1)=\phi(V(1))$ 
where $\phi(\cdot)$ is continuous and decreasing in $V(1)$. But $V(1)$ is continuous and strictly increasing 
in $V(1)$. Hence, $V(1)=\phi(V(1))$ has a unique fixed point $V^*(1)$. Now, 
from (\ref{eqn:average_cost_optimality_equation_no_backtracking-V-equation-no-lambda}), $V(B-1)$ is unique since 
$V(1)=V^*(1)$ is unique and $V(B)$ is unique. Proceeding backwards in this way, we can show that we have 
a unique $V^*(r)$ for all $r$.

Now, from (\ref{eqn:average_cost_optimality_equation_no_backtracking-V-equation-no-lambda}), we find 
that $V^*(r) \leq -V^*(1)+V^*(r+1)$, i.e., $V^*(r+1) \geq V^*(r)+V^*(1)$ for all $r \in \{1,2,\cdots,B-1\}$. Also, $V^*(1)=\lambda^*>0$ and 
it is unique. This proves the second part of the theorem.
\qed

\textbf{Proof of Theorem~\ref{theorem:lambda-increasing-concave-continuous-in-xi}:} 
Let us denote the mean power per link, mean outage per link and mean placement distance (in steps) under a 
stationary policy $\pi$ by $\overline{\Gamma}_{\pi}$, $\overline{Q}_{out,\pi}$ and $\overline{U}_{\pi}$. 
Then, by Renewal-Reward Theorem, we have 
$\lambda^{*}(\xi_{out},\xi_{relay}) =\inf_{\pi}\frac{\Gamma_{\pi}+\xi_{out}\overline{Q}_{out,\pi}+\xi_{relay}}{\overline{U}_{\pi}}$. 
The numerator is affine and increasing in  $\xi_{out}$ and $\xi_{relay}$,   
and the denominator is independent of  $\xi_{out}$ and $\xi_{relay}$.  Hence, 
$\lambda^{*}(\xi_{out},\xi_{relay})$ is concave, increasing in $\xi_{out}$ and $\xi_{relay}$, 
since the pointwise infimum of increasing affine functions of  $(\xi_{out},\xi_{relay})$ is increasing and 
jointly concave in  $(\xi_{out},\xi_{relay})$. Now, any increasing, concave function is continuous. Hence, 
$\lambda^{*}(\xi_{out},\xi_{relay})$ is continuous in $(\xi_{out},\xi_{relay})$. Also, it is easy to see that 
$\lambda^{*}(\xi_{out},\xi_{relay})$ is Lipschitz in each argument with Lipschitz constant $1$.

\textbf{Proof of Theorem~\ref{theorem:V-continuous-in-xi}:} By 
Theorem~\ref{theorem:lambda-increasing-concave-continuous-in-xi}, 
$V^*(1):=\lambda^*$ is Lipschitz continuous in $(\xi_{out}, \xi_{relay})$. By 
(\ref{eqn:average_cost_optimality_equation_no_backtracking-V-equation-no-lambda}), $V^*(B)$ is Lipschitz continuous in 
$(\xi_{out},\xi_{relay})$. Hence, by (\ref{eqn:average_cost_optimality_equation_no_backtracking-V-equation-no-lambda}), 
$V^*(B-1)$ is also Lipschitz continuous in $(\xi_{out}, \xi_{relay})$. Thus, by using backward induction, 
we can show that $V^*(r)$ is Lipschitz continuous in $(\xi_{out}, \xi_{relay})$ for all 
$1 \leq r \leq B$.

\section{OptAsYouGoLearning: Learning with Pure As-You-Go Deployment, for Given Lagrange Multipliers}
\label{appendix:learning-for-pure-as-you-go-deployment-given-xi}

\textbf{Proof of Theorem~\ref{theorem:OptAsYouGoLearning}:}
We can rewrite (\ref{eqn:learning_no_backtracking_given_xio_xir_update_part}) as follows: 

\footnotesize
\begin{eqnarray}
 V^{(k)}(r)&=&V^{(k-1)}(r)+ a(\nu(r,k)) \mathbb{I}\{r \in \mathcal{I}_k\} \bigg[ f_r(\underline{V}^{(k-1)})+ M_k(r) \bigg]  \nonumber\\ 
&& \label{eqn:learning_no_backtracking_given_xio_xir_update_part_in_standard_asynchronous_SAA_form}
\end{eqnarray}
\normalsize

where, for all $1 \leq r \leq B-1$

\footnotesize
\begin{eqnarray*}
 f_r(\underline{V}^{(k-1)})&=&\mathbb{E}_W \bigg[ \min \bigg \{ \min_{\gamma}(\gamma+ \xi_{out} Q_{out}(r,\gamma, W))+\xi_{relay}, \\
 && -V^{(k-1)}(1)+V^{(k-1)}(r+1) \bigg \} -V^{(k-1)}(r) \bigg] 
\end{eqnarray*}

\begin{eqnarray*}
 M_k(r)&=& \bigg[\min \bigg \{ \min_{\gamma}(\gamma+ \xi_{out} Q_{out}(r,\gamma, w_r))+\xi_{relay},  \\ 
&& -V^{(k-1)}(1)+V^{(k-1)}(r+1) \bigg \} -V^{(k-1)}(r) \bigg]-f_r(\underline{V}^{(k-1)})
\end{eqnarray*}

and

\begin{eqnarray*}
 f_B(\underline{V}^{(k-1)})=\mathbb{E}_W \bigg[  \min_{\gamma}(\gamma+ \xi_{out} Q_{out}(B,\gamma, W)) +\xi_{relay} -V^{(k-1)}(B) \bigg]
\end{eqnarray*}

\begin{eqnarray*}
 M_k (B)&=&\bigg[  \min_{\gamma}(\gamma+ \xi_{out} Q_{out}(B,\gamma, w_B)) +\xi_{relay}-V^{(k-1)}(B) \bigg] \\
&& -f_B(\underline{V}^{(k-1)})
\end{eqnarray*}
\normalsize

Let $\underline{M}_k:=(M_k(1), \cdots, M_k(B))$. Let us denote the $\sigma$-field 
$\mathcal{F}_k:=\sigma(\underline{V}_i, \mathcal{I}_i, \underline{M}_i, i \leq k-1)$; it is 
the information available to the deployment agent before making any decision at the $k$-th step. 
Clearly, the update equations fall under the category of Asynchronous Stochastic Approximation 
algorithms (see \cite{bhatnagar11borkar-meyn-theorem-asynchronous-stochastic-approximation}). In order to see whether 
$\underline{V}^{(k)} \rightarrow \underline{V}^*$ almost surely, we will first check whether the five assumptions 
mentioned in \cite{bhatnagar11borkar-meyn-theorem-asynchronous-stochastic-approximation} are satisfied. 

\textbf{Checking Assumption~$1$ of \cite{bhatnagar11borkar-meyn-theorem-asynchronous-stochastic-approximation}:} 
For each $r, 1 \leq r \leq B$, $V(r)$ gets updated at least once in every $B$~steps. 
Hence, $\lim \inf_{k \rightarrow \infty}\frac{\nu(r,k)}{k} \geq \frac{1}B>0 $ almost surely. Hence, the assumption is satisfied.

\textbf{Checking Assumption~$2$ of \cite{bhatnagar11borkar-meyn-theorem-asynchronous-stochastic-approximation}:} 
If we choose $\{a(k)\}_{k \geq 1}$ to be a bounded, decreasing sequence with $\sum_k a(k)=\infty$ 
and $\sum_k a^2(k) < \infty$, this condition will be satisfied.

\textbf{Checking Assumption~$3$ of \cite{bhatnagar11borkar-meyn-theorem-asynchronous-stochastic-approximation}:} 
Not applicable to our problem since before updating $\underline{V}^{(k)}$ the deployment agent knows 
$\underline{V}^{(k-1)}$.

Before checking the other two conditions, we will establish a lemma. Let us consider the following system of o.d.e-s:
\begin{eqnarray}\label{eqn:system-of-ode}
 \dot{V}_t (r) = \kappa_t(r) f_r(\underline{V}_t) \,\,\, \forall r \in \{1,2,\cdots,B\}
\end{eqnarray}
where $\kappa_t(r) \in (0,1]$ for all $r$ and $t$. By Theorem~\ref{theorem:uniqueness_of_V}, this 
system of o.d.e-s has an unique stationary point $\underline{V}^*(\xi_{out},\xi_{relay})$. 

\begin{lemma}\label{lemma:ode-globally-asymptotically-stable}
 $\underline{V}^*(\xi_{out},\xi_{relay})$ is a globally asymptotically stable equilibrium for the system of o.d.e-s \eqref{eqn:system-of-ode}. 
 Also, $\underline{V}=0$ is a globally asymptotically stable equilibrium  for \eqref{eqn:system-of-ode} when 
 $\gamma$, $\xi_{out}$ and $\xi_{relay}$ are replaced by 
 $0$ in the definition of $f_r(\underline{V})$ for all $r \in \{1,2,\cdots,B \}$.
\end{lemma}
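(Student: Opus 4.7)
The plan is to recognize \eqref{eqn:system-of-ode} as the continuous-time asynchronous value-iteration dynamics for the operator $T$ implicit in \eqref{eqn:average_cost_optimality_equation_no_backtracking-V-equation-no-lambda}. With $c(r,W):=\min_\gamma(\gamma+\xi_{out}Q_{out}(r,\gamma,W))+\xi_{relay}$ and $T(\underline{V})(r):=\mathbb{E}_W\min\{c(r,W),V(r+1)-V(1)\}$ for $r\le B-1$, $T(\underline{V})(B):=\mathbb{E}_W c(B,W)$, we have $f_r(\underline V)=T(\underline V)(r)-V(r)$ and \eqref{eqn:system-of-ode} becomes $\dot V_t(r)=\kappa_t(r)(T(V_t)(r)-V_t(r))$. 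By Theorem~\ref{theorem:uniqueness_of_V}, $\underline V^*$ is the unique fixed point of $T$, so the task is to show every trajectory converges to $\underline V^*$.

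Three structural properties of $T$ will drive the argument: (i) translation invariance, $T(\underline V+c\mathbf 1)=T(\underline V)$, since only the difference $V(r+1)-V(1)$ enters; (ii) $T(\underline V)(B)$ is a constant independent of $\underline V$; and (iii) the Lipschitz bound $|T(\underline V)(r)-T(\underline V')(r)|\le|(V-V')(r+1)-(V-V')(1)|$ for $r\le B-1$, together with $|T(\underline V)(B)-T(\underline V')(B)|=0$. Property (ii) turns the $r=B$ equation into the linear ODE $\dot V_t(B)=\kappa_t(B)(V^*(B)-V_t(B))$, which forces $V_t(B)\to V^*(B)$ under the persistent excitation inherited from Assumption~1 of \cite{bhatnagar11borkar-meyn-theorem-asynchronous-stochastic-approximation}. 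Working backward through the indices $r=B-1,B-2,\ldots$, each $V_t(r)$ inherits convergence from the already-converged higher indices, with the coupling through $V_t(1)$ handled by tracking the differences $V_t(r)-V_t(1)$; translation invariance (i) guarantees that no spurious additive constant survives.

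The technical heart of the argument is to build a Lyapunov function $L(\underline V)$ with $L(\underline V^*)=0$, $L(\underline V)>0$ otherwise, and $\dot L(V_t)\le 0$ strictly along non-equilibrium trajectories. My proposed candidate is a weighted sup norm $L(\underline V)=\max_r|V(r)-V^*(r)|/w(r)$, with weights $w(r)$ tuned to the backward cascade (heavier near $r=B$), under which some power $T^B$ of $T$ becomes a strict contraction. Global asymptotic stability of the synchronous ODE $\dot V=T(V)-V$ then follows, and the asynchronous version with $\kappa_t(r)\in(0,1]$ inherits the result via Borkar's asynchronous stochastic approximation framework. The second claim, the degenerate case with $\gamma,\xi_{out},\xi_{relay}$ all set to $0$, follows by specialization: $c(r,W)\equiv 0$ gives $T(\underline V)(r)=\mathbb{E}_W\min\{0,V(r+1)-V(1)\}$ for $r\le B-1$ and $T(\underline V)(B)=0$, whose unique fixed point is $\underline V=\mathbf 0$, and the same cascade argument applies. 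The main obstacle will be quantifying the strict contraction of $T^B$ in the weighted norm: the naive sup-norm bound from (iii) gives only a Lipschitz constant of $2$, not $<1$, and the necessary gain must be extracted from property (ii) by propagating convergence from $r=B$ downward while closing the coupling through $V_t(1)$ using the translation invariance (i).
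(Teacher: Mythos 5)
Your reduction of \eqref{eqn:system-of-ode} to $\dot V_t(r)=\kappa_t(r)\bigl(T(V_t)(r)-V_t(r)\bigr)$ and your observations (i)--(iii) are correct, but the step you yourself flag as the ``main obstacle'' --- that some power $T^B$ is a strict contraction in a weighted sup norm --- is not merely unproved; it is false, so the proof cannot be completed along this route. Consider the regime in which, in a neighbourhood of $\underline V^*$, the minimum in $T(\underline V)(r)=\mathbb{E}_W\min\{c(r,W),V(r+1)-V(1)\}$ is always attained by the second argument (this occurs for $\xi_{relay}$ large, since $c(r,W)\ge P_1+\xi_{relay}$ while $V^*(r+1)-V^*(1)$ grows more slowly). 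There $T$ acts affinely with linear part $A$ given by $(A\underline V)(r)=V(r+1)-V(1)$ for $r\le B-1$ and $(A\underline V)(B)=0$; for $B=2$ this is $A=\bigl(\begin{smallmatrix}-1&1\\0&0\end{smallmatrix}\bigr)$, with eigenvalues $-1$ and $0$. Hence $A^{2k}$ has $1$ as an eigenvalue for every $k$, no power of $T$ is a contraction in any norm (not even locally at the fixed point), and the discrete iteration $V_{k+1}=T(V_k)$ can oscillate with period $2$. What the lemma asserts is that the \emph{flow} converges, and the reason is that the vector field $T(\underline V)-\underline V$ has linear part $A-I$, whose spectrum lies in the open left half plane ($-2$ and $-1$ in the example). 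Any argument that tries to deduce stability of the ODE from a contraction property of the map $T$ itself is structurally unable to capture this: the stability margin comes from the extra $-I$, not from $T$. (Your backward ``cascade'' is also not genuinely triangular, since $V_t(1)$ is coupled to $V_t(2)$, which is coupled back to $V_t(1)$; that part is repairable, but the contraction step is not.)

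The paper's own proof avoids this by citing the standard relative-value-iteration machinery: write $T=\hat T-f(\cdot)e$ with $\hat T(\underline V)(r)=\mathbb{E}_W\min\{c(r,W)+V(1),V(r+1)\}$ for $r\le B-1$, $\hat T(\underline V)(B)=\mathbb{E}_W c(B,W)+V(1)$, and $f(\underline V)=V(1)$. Since $|\min\{a,b\}-\min\{a',b'\}|\le\max\{|a-a'|,|b-b'|\}$, the operator $\hat T$ is monotone, commutes with addition of constant vectors, and is non-expansive in the \emph{unweighted} sup norm; combined with uniqueness of the fixed point (Theorem~\ref{theorem:uniqueness_of_V}), global asymptotic stability of $\dot{\underline V}=\hat T(\underline V)-f(\underline V)e-\underline V$ follows from the Borkar--Soumyanath analysis of fixed-point ODEs for non-expansive maps, as packaged in \cite{abounadi-etal01learning-algorithm-mdp-average-cost} and \cite{salodkar-etal08online-learning-delay-constrained-scheduling} --- which is precisely the argument the paper invokes. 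If you want a self-contained proof, the object to work with is the span semi-norm $\mathrm{sp}(\underline D)=\max_r D(r)-\min_r D(r)$ (in which $T$ \emph{is} non-expansive, with the $B$-th coordinate pinned to a constant providing the anchor), not a weighted sup norm; and the convergence must be extracted from the ODE, not from iterating $T$.
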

\begin{proof}
Note that, by Theorem~\ref{theorem:uniqueness_of_V}, $\underline{V}^*(\xi_{out},\xi_{relay})$ is the unique stationary point for 
\eqref{eqn:system-of-ode}. Now, the proof for this lemma follows from similar line of arguments as in the appendix of 
 \cite{salodkar-etal08online-learning-delay-constrained-scheduling} (which uses results from \cite{abounadi-etal01learning-algorithm-mdp-average-cost}
and \cite{borkar-soumyanath97analog-fixed-point-computation}).
\end{proof}

\textbf{Checking Assumption~$4$ of \cite{bhatnagar11borkar-meyn-theorem-asynchronous-stochastic-approximation}:} 
It is easy to see that $f_r (\underline{V})$ is Lipschitz in $\underline{V}$ for each $r$; this satisfies 
Assumption~4(i). Let us consider the ODE \eqref{eqn:system-of-ode} with 
$0<\kappa_t(r) \leq 1$ corresponds to the relative rate at which $V(r)$ is updated. By Lemma~\ref{lemma:ode-globally-asymptotically-stable}, 
$\underline{V}^*(\xi_{out},\xi_{relay})$ is a globally asymptotically stable equilibrium for the system of o.d.e-s \eqref{eqn:system-of-ode}. 
Hence, Assumption~4(ii) is satisfied.

Consider the functions $\frac{f_r(c\underline{V})}{c}, c \geq 1$ for all $r$. Clearly, 
$\lim_{c \rightarrow \infty} \frac{f_r(c\underline{V})}{c}=\min\{0,-V(1)+V(r+1)\}-V(r)$ for $r \neq B$, and 
$\lim_{c \rightarrow \infty} \frac{f_B(c\underline{V})}{c}=-V(B)$. Note that $\frac{f_r(c\underline{V})}{c}$ for all 
$r$ and $\lim_{c \rightarrow \infty} \frac{f_r(c\underline{V})}{c}$ all are continuous in 
$\underline{V}$, and $\frac{f_r(c\underline{V})}{c}$ is decreasing in $c$. Hence, 
by Theorem~$7.13$ of \cite{rudin76principles-of-mathematical-analysis}, convergence of $\frac{f_r(c\underline{V})}{c}$ 
over compacts is uniform. Hence, Assumption~4(iii) is satisfied.

Consider the ODE: $\dot{V}_t(r)=\kappa_t(r)(\min\{0,-V_t(1)+V_t(r+1)\}-V_t(r))$ for $r \neq B$ and 
$\dot{V}_t(B)=\kappa_B(t)(-V_t(B))$. Clearly, by the second part of Lemma~\ref{lemma:ode-globally-asymptotically-stable}, 
there is a unique globally asymptotically stable equilibrium 
$\underline{V}=\underline{0}$. Hence, Assumption~4(iv) is satisfied.

\textbf{Checking Assumption~$5$ of \cite{bhatnagar11borkar-meyn-theorem-asynchronous-stochastic-approximation}:} 
It is easy to see that, $\{\underline{M}_k\}_{k \geq 1}$ is a Martingale difference sequence adapted 
to $\mathcal{F}_k$. Hence, Assumption~5(i) is satisfied.

Now,

\footnotesize
\begin{eqnarray*}
 |M_{k+1}(r)| & \leq & 2 \bigg| \bigg(\min\{P_M+\xi_{out}+\xi_{relay},-V^{(k)}(1) \\
&&  +V^{(k)}(r+1)\}-V^{(k)}(r) \bigg) \bigg|
\end{eqnarray*}
\normalsize

and 
\begin{eqnarray*}
 |M_{k+1}(B)| \leq  \bigg| \bigg(P_M+\xi_{out}+\xi_{relay} -V^{(k)}(B) \bigg) \bigg|
\end{eqnarray*}

Hence, $||M_{k+1}|| \leq C_0 (1+||\underline{V}^{(k)}||)$ for some $C_0>0$. 
Hence, Assumption~5(ii) is satisfied. Now, by  
\cite[Theorem~$3$]{bhatnagar11borkar-meyn-theorem-asynchronous-stochastic-approximation}, 
$\underline{V}^{(k)} \rightarrow \underline{V}^*$.\qed

\section{OptAsYouGoAdaptiveLearning with Constraints on Outage Probability and Relay Placement Rate}
\label{appendix:learning-for-pure-as-you-go-deployment-constrained-problem}

\subsection{Proof of Theorem~\ref{theorem:placement_rate_mean_outage_per_step_continuous_in_xio_and_xir}}
\label{subsection:proof_of_placement_rate_mean_outage_per_step_continuous_in_xio_and_xir}

Let us denote by $g(r, \gamma), 
r \in \{1,2,\cdots,B\}, \gamma \in \mathcal{S}$ 
the joint distribution of $(U_k, \Gamma_k)$ under Algorithm~\ref{algorithm:OptAsYouGoLearning}. 
For the time being, let us assume that $g(r,\gamma)$ is continuous in  $(\xi_{out},\xi_{relay})$. 
Then, the mean placement distance 
$\overline{U}^*(\xi_{out},\xi_{relay})= \sum_{r=1}^B \sum_{\gamma \in \mathcal{S}} r g (r,\gamma)$, and  
the mean power per link 
$\overline{\Gamma}^*(\xi_{out},\xi_{relay})=\sum_{r=1}^B\sum_{\gamma \in \mathcal{S}} \gamma g (r,\gamma)$ 
are both continuous in $(\xi_{out},\xi_{relay})$.

Now, by Renewal-Reward Theorem, 

\footnotesize
\begin{eqnarray*}
\lambda^*(\xi_{out}, \xi_{relay}) = \frac{ \overline{\Gamma}^*(\xi_{out},\xi_{relay})+\xi_{out}\overline{Q}_{out}^*(\xi_{out},\xi_{relay})+\xi_{relay}  }{\overline{U}^*(\xi_{out},\xi_{relay})}
\end{eqnarray*}
\normalsize

Since $\lambda^*(\xi_{out}, \xi_{relay})$ is continuous in $(\xi_{out},\xi_{relay})$ 
(by Theorem~\ref{theorem:lambda-increasing-concave-continuous-in-xi}), 
we conclude that $\overline{Q}_{out}^*(\xi_{out},\xi_{relay})$ 
is continuous in $\xi_{out}$ and $\xi_{relay}$. Hence, 
$\frac{\overline{\Gamma}^*(\xi_{out},\xi_{relay})}{\overline{U}^*(\xi_{out},\xi_{relay})}$, 
$\frac{\overline{Q}_{out}^*(\xi_{out},\xi_{relay})}{\overline{U}^*(\xi_{out},\xi_{relay})}$ and 
$\frac{1}{\overline{U}^*(\xi_{out},\xi_{relay})}$ are 
continuous in $(\xi_{out},\xi_{relay})$. \qed

Now, the proof of the theorem is completed by the following lemma.

\begin{lemma}\label{lemma:grgamma_continuous_in_xio_xir}
 Under Assumption~\ref{assumption:shadowing_continuous_random_variable}, 
$g(r,\gamma)$ is continuous in $(\xi_{out},\xi_{relay})$.
\end{lemma}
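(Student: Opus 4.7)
The plan is to decompose $g(r,\gamma)$ into a product of probabilities involving the policy thresholds $c_{th}(r) := V^*(r+1) - V^*(1)$, and then exploit the Lipschitz continuity of $\underline{V}^*(\xi_{out},\xi_{relay})$ from Theorem~\ref{theorem:V-continuous-in-xi} together with the atomless distribution of $W$ (Assumption~\ref{assumption:shadowing_continuous_random_variable}). Writing $M(r,\xi_{out},w) := \min_{\gamma' \in \mathcal{S}}(\gamma' + \xi_{out} Q_{out}(r,\gamma',w))$, and using the fact that the shadowing samples $W_1, W_2, \ldots$ at successive steps after a placed relay are i.i.d.\ (Section~\ref{subsection:channel-model}), one obtains, for $1 \le r < B$,
\begin{equation*}
g(r,\gamma) = \Bigl(\prod_{k=1}^{r-1} p_{\text{no}}(k)\Bigr)\, p_{\text{place}}(r,\gamma),
\end{equation*}
with $p_{\text{no}}(k) := \Pr(M(k,\xi_{out},W) + \xi_{relay} > c_{th}(k))$ and $p_{\text{place}}(r,\gamma) := \Pr(\gamma \in \arg\min_{\gamma' \in \mathcal{S}}(\gamma' + \xi_{out} Q_{out}(r,\gamma',W)),\ M(r,\xi_{out},W) + \xi_{relay} \le c_{th}(r))$; the $r = B$ case is analogous, with the threshold constraint replaced by the whole sample space.

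The key step is to establish joint continuity of $\Pr(M(r,\xi_{out},W) \le t)$ in $(t,\xi_{out})$. The channel model \eqref{eqn:channel_model} renders $w \mapsto Q_{out}(r,\gamma',w)$ continuous and strictly decreasing on $(0,\infty)$, hence, for each $\xi_{out} > 0$, the level set $\{w : \gamma' + \xi_{out} Q_{out}(r,\gamma',w) \le t\}$ is a half-line $[\phi_{\gamma'}(t,\xi_{out}),\infty)$ whose endpoint depends jointly continuously on $(t,\xi_{out})$. Consequently $\Pr(M(r,\xi_{out},W) \le t) = \Pr(W \ge \min_{\gamma' \in \mathcal{S}} \phi_{\gamma'}(t,\xi_{out}))$, and by Assumption~\ref{assumption:shadowing_continuous_random_variable} the CDF of $W$ is continuous, so the right-hand side is continuous in $(t,\xi_{out})$ (the degenerate case $\xi_{out} = 0$ being handled as a one-sided limit). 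Substituting $t = c_{th}(k) - \xi_{relay}$ and invoking Theorem~\ref{theorem:V-continuous-in-xi} gives continuity of each $p_{\text{no}}(k)$ in $(\xi_{out},\xi_{relay})$.

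For $p_{\text{place}}(r,\gamma)$ I would express the event as a finite intersection of inequalities of the form $\gamma' + \xi_{out} Q_{out}(r,\gamma',W) \le \gamma'' + \xi_{out} Q_{out}(r,\gamma'',W)$ together with $\gamma + \xi_{out} Q_{out}(r,\gamma,W) \le c_{th}(r) - \xi_{relay}$; this is a union of intervals in $w$-space whose endpoints vary continuously with $(\xi_{out},\xi_{relay})$, and the atomless distribution of $W$ ensures that the boundary (equality) events contribute zero probability, so $p_{\text{place}}(r,\gamma)$ is continuous. Multiplying the factors then yields continuity of $g(r,\gamma)$. I expect the main obstacle to be the careful handling of $\arg\min$ ties, which are measure-zero for each fixed $\xi_{out} > 0$ (since strict monotonicity of $Q_{out}(r,\gamma',\cdot)$ makes the tie loci in $w$-space isolated points) but whose zero-probability character must be uniform enough in the parameters to prevent jumps; Assumption~\ref{assumption:shadowing_continuous_random_variable} is precisely what is needed to rule this out.
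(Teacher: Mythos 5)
Your proposal is correct and follows essentially the same route as the paper: both decompose the event $\{U_k=r,\Gamma_k=\gamma\}$ into the no-placement events at steps $1,\dots,r-1$ and the placement-with-power-$\gamma$ event at step $r$, and both rely on Theorem~\ref{theorem:V-continuous-in-xi} for continuity of the thresholds together with Assumption~\ref{assumption:shadowing_continuous_random_variable} to ensure the tie/boundary events have probability zero. The only cosmetic difference is that the paper establishes continuity of each factor by taking a sequence $\xi_n\to\xi_{out}$, proving almost-sure convergence of the indicators, and invoking dominated convergence, whereas you argue directly via the continuous CDF of $W$ evaluated at continuously varying interval endpoints; these are interchangeable.
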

\begin{proof}
We will first prove the result for $r \leq B-1$. Let us fix an $r \in \{1,\cdots,B-1\}$ and any $\gamma \in \mathcal{S}$. 
We will only show that $g(r,\gamma)$ is continuous in $\xi_{out}$; the proof for continuity of $g(r,\gamma)$ w.r.t. 
$\xi_{relay}$ will be similar. 

Let us consider a sequence $\{\xi_n\}_{n \geq 1}$ such that $\xi_n \rightarrow \xi_{out}$. Let us 
denote the joint probability distribution of $(U_k, \Gamma_k)$  
by $g_n(r,\gamma)$, if  Algorithm~\ref{algorithm:optimal-policy-structure} is used with $\xi_n$ as the cost 
for unit outage. 
We will show that $\lim_{n \rightarrow \infty} g_n(r,\gamma) \rightarrow g(r,\gamma)$.

Define the sets 
$\mathcal{E}_{r,\gamma'}=\bigg\{w_r: \gamma+\xi_{out}Q_{out}(r,\gamma,w_r) < \gamma'+\xi_{out}Q_{out}(r,\gamma',w_r) \bigg\}$
and $\mathcal{E}_u=\bigg\{w_u: \min_{\gamma \in \mathcal{S}}(\gamma+\xi_{out}Q_{out}(u,\gamma,w_u))
>-\xi_{relay}-V^*(1)+V^*(u+1) \bigg\}$ for all $1 \leq u \leq r$.

Let us  define $\mathcal{E}=\cap_{\gamma' \neq \gamma}\mathcal{E}_{r,\gamma'}  
\cap_{u \leq r-1}  \mathcal{E}_u \cap \overline{\mathcal{E}_r}$, where 
$\overline{\mathcal{E}_r}$ is the set complement of $\mathcal{E}_r$. 

Now, $g(r,\gamma)=\mathbb{P}(\mathcal{E})=\mathbb{E}(\mathbb{I}_{\mathcal{E}})$, where $\mathbb{I}$ denotes 
the indicator function. The expectation is over the joint distribution of $(W_1,W_2, \cdots, W_r)$ 
(shadowing random variables from $r$ locations).

Now, for any $\gamma' \neq \gamma$, 
we have $\mathbb{P}\bigg( \gamma+\xi_{out}Q_{out}(r,\gamma,W_r) = \gamma'+\xi_{out}Q_{out}(r,\gamma',W_r) \bigg)=0$, 
and $\mathbb{P}\bigg( \min_{\gamma \in \mathcal{S}}(\gamma+\xi_{out}Q_{out}(u,\gamma,W_u))
=-\xi_{relay}-V^*(1)+V^*(u+1) \bigg)=0$ for all $u \leq r$; 
these two assertions follow from Assumption~\ref{assumption:shadowing_continuous_random_variable} and 
from the continuity of $Q_{out}(r,\gamma,w)$  in $w$. 
Hence, we can safely assume the following:

\begin{itemize}
 \item $\overline{\mathcal{E}_{r,\gamma'}}$ has the same expression as 
$\mathcal{E}_{r,\gamma'}$ except that the $<$ sign is replaced by $>$ sign.
\item $\overline{\mathcal{E}_u}$ has the same expression as 
$\mathcal{E}_u$ except that the $>$ sign is replaced by $<$ sign.
\end{itemize}

Let $\mathcal{E}_{r,\gamma'}^{(n)}$, $\mathcal{E}_u^{(n)}$ and 
$\mathcal{E}^{(n)}$ be the sets obtained by replacing $\xi_{out}$ by $\xi_n$ in the expressions of the sets 
$\mathcal{E}_{r,\gamma'}$, $\mathcal{E}_u$ and $\mathcal{E}$ respectively (also $\underline{V}^*$ 
has to be replaced by the corresponding optimal $\underline{V}^{(n,*)}$). 
Clearly, we can make similar claims for  $\mathcal{E}_{r,\gamma'}^{(n)}$, $\mathcal{E}_u^{(n)}$.

Now, if we can show that $\mathbb{E}(\mathbb{I}_{\mathcal{E}^{(n)}}) \rightarrow \mathbb{E}(\mathbb{I}_{\mathcal{E}})$, 
the lemma will be proved, because $g(r,\gamma)=\mathbb{P}(\mathcal{E})=\mathbb{E}(\mathbb{I}_{\mathcal{E}})$.

\begin{claim}\label{claim:convergence_of_indicator_function}
 $\lim_{n \rightarrow \infty}\mathbb{I}_{ \mathcal{E}_u^{(n)} } \rightarrow \mathbb{I}_{ \mathcal{E}_u} $,  
and $\lim_{n \rightarrow \infty} \mathbb{I}_{ \mathcal{E}_{r,\gamma'}^{(n)} } \rightarrow \mathbb{I}_{ \mathcal{E}_{r,\gamma'} } $ 
almost surely, for $\gamma' \neq \gamma$.
\end{claim}
\begin{proof}
 Suppose that, for some value of $w_u$, 
$\mathbb{I}_{ \mathcal{E}_u}(w_u)=1$, i.e., 
$\min_{\gamma \in \mathcal{S}}(\gamma+\xi_{out}Q_{out}(u,\gamma,w_u))
>-\xi_{relay}-V^*(1)+V^*(u+1)$. 
Now, 
$V^*(1)$ and $V^*(u+1)$ are  continuous in 
$(\xi_{out},\xi_{relay})$ for all $1 \leq u \leq r$ (see Theorem~$\ref{theorem:V-continuous-in-xi}$). 
Hence, there exists an integer $n_0$ large enough, such that 
for all $n > n_0$, we have $\min_{\gamma \in \mathcal{S}}(\gamma+\xi_n Q_{out}(u,\gamma,w_u))
>-\xi_{relay}-\bigg( V^{(n,*)}(1)+V^{(n,*)}(u+1) \bigg)\bigg|_{\xi_{out}=\xi_n}$, i.e., 
$\mathbb{I}_{ \mathcal{E}_u^{(n)} } (w_u)=1$ for all $n > n_0$. Hence, 
$\mathbb{I}_{ \mathcal{E}_u^{(n)} } (w_u) \rightarrow \mathbb{I}_{ \mathcal{E}_u } 
(w_u) $ if $\mathbb{I}_{ \mathcal{E}_u } 
(w_u)=1$. For the case $\mathbb{I}_{ \mathcal{E}_u } (w_u)=0$, we can have similar arguments. This 
proves the first part of the claim, and second part can be proved by similar arguments.
\end{proof}

Now, $\mathbb{I}_{\mathcal{E}^{(n)}}=\prod_{\gamma' \neq \gamma} \mathbb{I}_{\mathcal{E}_{r,\gamma'}^{(n)}} 
\prod_{u \leq r-1} \mathbb{I}_{\mathcal{E}_u^{(n)}} \times \mathbb{I}_{\overline{\mathcal{E}_r^{(n)}}}$. 
By Claim~\ref{claim:convergence_of_indicator_function}, $\mathbb{I}_{\mathcal{E}^{(n)}} \rightarrow \mathbb{I}_{\mathcal{E}}$ 
almost surely as $n \rightarrow \infty$. Hence, by 
Dominated Convergence Theorem, 
we have $\mathbb{E}(\mathbb{I}_{\mathcal{E}^{(n)}}) \rightarrow \mathbb{E}(\mathbb{I}_{\mathcal{E}})$.

We can prove the same statement for $r=B$ in a similar method; but we need to define 
$\mathcal{E}=\cap_{\gamma' \neq \gamma}\mathcal{E}_{B,\gamma'}  
\cap_{u \leq B-1}  \mathcal{E}_u $.

Hence, the lemma is proved. 
\end{proof}

\begin{figure*}[!t]
\footnotesize
\begin{eqnarray}
 V^{(k)}(r)&=&V^{(k-1)}(r)+ \overline{a}(k) \frac{a(\nu(r,k))}{\overline{a}(k)} \mathbb{I}\{r \in \mathcal{I}_k\} \bigg[ \min \bigg \{ \min_{\gamma}(\gamma+ 
\xi_{out}^{(k-1)} Q_{out}(r,\gamma, w_r)) +\xi_{relay}^{(k-1)}, -V^{(k-1)}(1) 
 +V^{(k-1)}(r+1) \bigg \} -V^{(k-1)}(r) \bigg],\nonumber\\ 
&& \forall 1 \leq r \leq B-1 \nonumber\\
 V^{(k)}(B)&=&V^{(k-1)}(B)+ \overline{a}(k) \frac{a(\nu(r,B))}{\overline{a}(k)} \mathbb{I}\{B \in \mathcal{I}_k\}  \bigg[  \min_{\gamma}(\gamma+ 
 \xi_{out}^{(k-1)} Q_{out}(B,\gamma, w_B))  +\xi_{relay}^{(k-1)}-V^{(k-1)}(B) \bigg] \nonumber\\
\xi_{out}^{(k)}&=& \xi_{out}^{(k-1)}+   \mathbb{I}\{N_k=N_{k-1}+1\} \bigg( b(N_k) \lim_{\beta \downarrow 0} \frac{\Lambda_{[0,A_1]}\bigg(\xi_{out}^{(k-1)}+ \beta (Q_{out}^{(N_k,N_{k-1})}-\overline{q}U_{N_k}) \bigg)-\xi_{out}^{(k-1)}}{\beta} + o(b(N_k)) \bigg)\nonumber\\
&=& \xi_{out}^{(k-1)}+   \mathbb{I}\{N_k=N_{k-1}+1\} \overline{a}(k)  \frac{b(N_k)}{\overline{a}(k)} \bigg( \lim_{\beta \downarrow 0} \frac{\Lambda_{[0,A_1]}\bigg(\xi_{out}^{(k-1)}+ \beta (Q_{out}^{(N_k,N_{k-1})}-\overline{q}U_{N_k}) \bigg)-\xi_{out}^{(k-1)}}{\beta}+ \frac{o(b(N_k))}{b(N_k)} \bigg)   \nonumber\\
\xi_{relay}^{(k)}&=& \xi_{relay}^{(k-1)} +   \mathbb{I}\{N_k=N_{k-1}+1\} \bigg( b(N_k) \lim_{\beta \downarrow 0} \frac{\Lambda_{[0,A_2]}\bigg(\xi_{relay}^{(k-1)} + \beta (1-\overline{N}U_{N_k}) \bigg)-\xi_{relay}^{(k-1)}}{\beta} + o(b(N_k)) \bigg) \nonumber\\
&=& \xi_{relay}^{(k-1)} +   \mathbb{I}\{N_k=N_{k-1}+1\} \overline{a}(k)  \frac{b(N_k)}{\overline{a}(k)} \bigg( \lim_{\beta \downarrow 0} \frac{\Lambda_{[0,A_2]}\bigg(\xi_{relay}^{(k-1)} + \beta (1-\overline{N}U_{N_k}) \bigg)-\xi_{relay}^{(k-1)}}{\beta} + \frac{o(b(N_k))}{b(N_k)} \bigg) 
\label{eqn:convergence_optasyougoadaptivelearning_first_approximation}
\end{eqnarray}
\normalsize
\hrule
\end{figure*}

\subsection{Proof of Theorem~\ref{theorem:convergence_OptAsYouGoAdaptiveLearning}}
\label{subsection:proof_of_convergence_OptAsYouGoAdaptiveLearning}

We denote the shadowing in the link between the potential locations 
 located at distances $i \delta$ and $j \delta$ from the sink node, by the random variable $W_{i,j}$. 
 The sample space $\Omega$  is defined to be the collection of all $\omega$  such that each $\omega$ 
 corresponds to a fixed realization $\{w_{i,j}: i \geq 0, j \geq 0, i>j, 1 \leq i-j \leq B \}$ 
 of   shadowing   that could be encountered in the deployment process over infinite horizon. 
 Let $\mathcal{F}$ be the Borel $\sigma$-algebra on  $\Omega$. 
 We also define a sequence of sub-$\sigma$ fields
 $\mathcal{F}_k:=\sigma \bigg(W_{i,j}: i \geq 0, j \geq 0, k \geq i>j, 1 \leq i-j \leq B \bigg)$;  $\mathcal{F}_k$ is increasing in $k$, and 
captures the history of the deployment process 
up to  $k \delta$ distance.

Let us recall the outline of the proof of 
Theorem~\ref{theorem:convergence_OptAsYouGoAdaptiveLearning} in 
Section~\ref{subsection:OptAsYouGoAdaptiveLearning-algorithm-description-discussion}.

\subsubsection{\textbf{The Faster Time-Scale Iteration of $\underline{V}^{(k)}$}} 
\label{subsubsection:lemma_used_in_convergence_proof_of_optasyougoadaptive_learning}

Let us denote by $\underline{V}^*(\xi_{out}, \xi_{relay})$ the value of 
$\underline{V}^*$, for given $\xi_{out}$ and $\xi_{relay}$. 
Let us also define $\overline{a}(k):=\max_{r \in \mathcal{I}_k} a ( \nu(r,k) )$.

Using the first order Taylor series expansion of the function $\Lambda_{[0,A_1]}(\cdot)$, and 
using the fact that $\Lambda_{[0,A_1]}(\xi_{out}^{(k-1)})=\xi_{out}^{(k-1)}$ (since $\xi_{out}^{(k-1)} \in [0,A_1]$), 
we rewrite the update equation (\ref{eqn:OptAsYouGoAdaptiveLearning_update_part}) 
as (\ref{eqn:convergence_optasyougoadaptivelearning_first_approximation}). 
Now, for the update equation for $\xi_{relay}$ in (\ref{eqn:convergence_optasyougoadaptivelearning_first_approximation}), 
we can write:

\footnotesize
\begin{eqnarray*}
&& \lim_{\beta \downarrow 0} \frac{\Lambda_{[0,A_2]}\bigg(\xi_{relay}^{(k-1)} + \beta (1-\overline{N}U_{N_k}) \bigg)-\xi_{relay}^{(k-1)}}{\beta} \\
&=& (1-\overline{N}U_{N_k}) \mathbb{I} \{0< \xi_{relay}^{(k-1)} <A_2\}\\
&+& (1-\overline{N}U_{N_k})^+ \mathbb{I} \{\xi_{relay}^{(k-1)} =0 \} \\
&-& (1-\overline{N}U_{N_k})^- \mathbb{I} \{\xi_{relay}^{(k-1)} =A_2 \} \\
\end{eqnarray*}
\normalsize

where $y^+=\max\{y,0\}$ and $y^-=-\min \{y,0\}$. 
We can write similar expression for the  
$\xi_{out}^{(k)}$ update. Since outage probabilities and placement distances are bounded quantities, and since 
$N_k \geq \lfloor \frac{k}{B} \rfloor$ and $\lim_{k \rightarrow 0}\frac{b(\lfloor \frac{k}{B} \rfloor)}{\overline{a}(k)}=0$, we have:

\footnotesize
\begin{eqnarray*}
&& \lim_{k \rightarrow \infty}  \bigg(\frac{b(N_k)}{\overline{a}(k)} \bigg( \lim_{\beta \downarrow 0} \bigg(\Lambda_{[0,A_1]}\bigg(\xi_{out}^{(k-1)}+ \beta (Q_{out}^{(N_k,N_{k-1})} \\
&& -\overline{q}U_{N_k}) \bigg) -\xi_{out}^{(k-1)}\bigg)/ \beta  + \frac{o(b(N_k))}{b(N_k)} \bigg) \bigg)=0
\end{eqnarray*}
\normalsize

Similar claim can be made for $\xi_{relay}$ update.

\begin{lemma}\label{lemma:boundedness_V_iteration_optasyougoadaptivelearning}
Under Algorithm~\ref{algorithm:OptAsYouGoAdaptiveLearning}, the faster timescale iterates 
$\{\underline{V}^{(k)}\}_{k \geq 1}$ are almost surely bounded.
\end{lemma}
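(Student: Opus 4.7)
The plan is to adapt the Borkar--Meyn boundedness argument for asynchronous stochastic approximation from \cite{bhatnagar11borkar-meyn-theorem-asynchronous-stochastic-approximation} to the present two-timescale setting, exploiting the fact that the slow iterates $(\xi_{out}^{(k)},\xi_{relay}^{(k)})$ are confined to the compact set $[0,A_1]\times[0,A_2]$ by construction of the projection operators in \eqref{eqn:OptAsYouGoAdaptiveLearning_update_part}.

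First I would rewrite the $\underline{V}$-update in standard asynchronous SA form
\begin{equation*}
V^{(k)}(r) = V^{(k-1)}(r) + a(\nu(r,k))\,\mathbb{I}\{r \in \mathcal{I}_k\}\bigl[\,f_r(\underline{V}^{(k-1)};\xi_{out}^{(k-1)},\xi_{relay}^{(k-1)}) + M_k(r)\,\bigr],
\end{equation*}
where $f_r$ is the conditional-expectation driver analogous to the one in the proof of Theorem~\ref{theorem:OptAsYouGoLearning} but now parameterised by the slow iterates, and $\{\underline{M}_k\}$ is an $\{\mathcal{F}_k\}$-Martingale difference sequence. Since $Q_{out}(\cdot,\cdot,\cdot)\in[0,1]$, $\mathcal{S}$ is bounded with maximum element $P_M$, and $\xi_{out}^{(k-1)}\in[0,A_1]$, $\xi_{relay}^{(k-1)}\in[0,A_2]$, one obtains $|M_k(r)| \le C_0(1+\|\underline{V}^{(k-1)}\|)$ for some deterministic constant $C_0$ depending only on $A_1$, $A_2$ and $P_M$, and $f_r$ is Lipschitz in $\underline{V}$ with a Lipschitz constant that is uniform in $(\xi_{out},\xi_{relay})$ over $[0,A_1]\times[0,A_2]$.

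Next I would analyse the scaling limit required by the Borkar--Meyn machinery. For $c\ge 1$ set $f_r^{(c)}(\underline{V};\xi_{out},\xi_{relay}) := f_r(c\underline{V};\xi_{out},\xi_{relay})/c$. A direct computation shows, uniformly in $(\xi_{out},\xi_{relay}) \in [0,A_1]\times[0,A_2]$ and uniformly on compact sets of $\underline{V}$,
\begin{equation*}
\lim_{c\to\infty} f_r^{(c)}(\underline{V};\xi_{out},\xi_{relay}) = \begin{cases} \min\{0,\,-V(1)+V(r+1)\}-V(r), & 1\le r\le B-1,\\ -V(B), & r=B,\end{cases}
\end{equation*}
with the right-hand side independent of $(\xi_{out},\xi_{relay})$ because the bounded terms $\gamma$, $\xi_{out}Q_{out}$ and $\xi_{relay}$ are washed out by the division by $c$. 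The corresponding asynchronous limit ODE $\dot V_t(r) = \kappa_t(r)\,f_{r,\infty}(\underline{V}_t)$ with $\kappa_t(r)\in(0,1]$ has the origin as its unique globally asymptotically stable equilibrium; this is exactly the second claim of Lemma~\ref{lemma:ode-globally-asymptotically-stable}, already used in verifying Assumption~4(iv) during the proof of Theorem~\ref{theorem:OptAsYouGoLearning}. With these ingredients I would invoke Theorem~3 of \cite{bhatnagar11borkar-meyn-theorem-asynchronous-stochastic-approximation}, treating $(\xi_{out}^{(k)},\xi_{relay}^{(k)})$ as an exogenous, uniformly bounded input: the step-size hypotheses on $\{a(n)\}$ are built into Algorithm~\ref{algorithm:OptAsYouGoAdaptiveLearning}, and the comparison bound $\liminf_{k\to\infty}\nu(r,k)/k \ge 1/B$ holds a.s.\ for the same reason as in Theorem~\ref{theorem:OptAsYouGoLearning} (every coordinate is updated within any window of $B$ consecutive steps). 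These conditions together yield $\sup_k \|\underline{V}^{(k)}\| < \infty$ almost surely.

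The main obstacle is that the standard Borkar--Meyn framework is stated for a fixed driving vector field, whereas here $f_r(\,\cdot\,;\xi_{out}^{(k-1)},\xi_{relay}^{(k-1)})$ varies with $k$ through the slow iterates. The resolution has two parts: (i) because the slow iterates live in the compact rectangle $[0,A_1]\times[0,A_2]$, the family $\{f_r(\,\cdot\,;\xi_{out},\xi_{relay})\}$ is uniformly Lipschitz with a uniform linear-growth bound, so the martingale-plus-driver decomposition meets the usual hypotheses with constants that do not depend on $k$; (ii) the scaling limit $f_{r,\infty}$ is parameter-free, so the limiting ODE used inside the boundedness argument is the same regardless of what $(\xi_{out}^{(k)},\xi_{relay}^{(k)})$ does. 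This is precisely why asynchronous boundedness can be carried out separately from (and prior to) the two-timescale convergence analysis, and it is the reason the lemma is singled out here rather than being subsumed into the standard quasi-static argument of \cite[Theorem~$12$]{chattopadhyay-etal15measurement-based-impromptu-deployment-arxiv-v1}.
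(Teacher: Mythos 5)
Your proposal is correct, but it takes a genuinely different route from the paper. The paper proves this lemma by invoking the two-timescale stability framework of Lakshminarayanan and Bhatnagar: it folds the fast and slow updates into a single recursion with step size $\overline{a}(k):=\max_{r\in\mathcal{I}_k}a(\nu(r,k))$, verifies their Assumptions~2.1--2.4 (in particular, it computes the joint scaling limit $\lim_{c\to\infty}f_r(c\underline{V},c\xi_{out},c\xi_{relay})/c$, scaling the slow variables as well), and concludes the intermediate bound $\|\underline{V}^{(k)}\|\le C^*(1+\xi_{out}^{(k)}+\xi_{relay}^{(k)})$, which yields boundedness only after appealing to the projection of the slow iterates onto $[0,A_1]\times[0,A_2]$. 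You instead inject the compactness of the slow iterates at the outset, treat $(\xi_{out}^{(k)},\xi_{relay}^{(k)})$ as a bounded measurable input, and run the single-timescale asynchronous Borkar--Meyn argument with the scaling limit taken in $\underline{V}$ alone; your observation that this limit, $\min\{0,-V(1)+V(r+1)\}-V(r)$ (resp.\ $-V(B)$), is parameter-free and attained uniformly over $[0,A_1]\times[0,A_2]$ is exactly the structural fact that makes the argument close, and it matches the second claim of Lemma~\ref{lemma:ode-globally-asymptotically-stable} that the paper itself uses for Theorem~\ref{theorem:OptAsYouGoLearning}. The trade-off: the paper's route leans on a published two-timescale stability theorem but still must adapt it to the asynchronous setting and produces a conditional bound that is then specialized; your route is more direct and avoids the joint scaling altogether, but requires the (standard, though not off-the-shelf) extension of the Borkar--Meyn boundedness proof to a vector field varying through a uniformly Lipschitz, compactly parameterized family with a common scaling limit --- you correctly identify this as the one point needing care, and your two-part resolution is sound. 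Both arguments also rely on $\liminf_k\nu(r,k)/k\ge 1/B$ a.s., which you justify the same way the paper does.
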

\begin{proof}
  Note that, (\ref{eqn:convergence_optasyougoadaptivelearning_first_approximation}) combines the faster 
  and slower timescale iterations in a single timescale where the step size is $\overline{a}(n)$. 
  We will now use the theory from \cite[Section~$3$]{lakshminarayanan-bhatnagar14stability-two-timescale} 
  to prove this lemma. 
  
  Note that, the R.H.S. of the faster timescale iteration in 
  (\ref{eqn:convergence_optasyougoadaptivelearning_first_approximation}) is Lipschitz continuous in both faster and slower 
  timescale iterates. Hence, the first part of \cite[Assumption~$2.1$]{lakshminarayanan-bhatnagar14stability-two-timescale} 
  is satisfied.

  \cite[Assumption~$2.2$]{lakshminarayanan-bhatnagar14stability-two-timescale} 
  can be checked, using similar arguments as in checking 
  \cite[Assumption~$5(ii)$]{bhatnagar11borkar-meyn-theorem-asynchronous-stochastic-approximation} 
  in the proof of Theorem~\ref{theorem:OptAsYouGoLearning}.

  Also, $\sum_{n=1}^{\infty}\overline{a}(n) \geq \sum_{n=1}^{\infty}a(n)=\infty$ and 
  $\sum_{n=1}^{\infty}\overline{a}^2(n)\leq \sum_{n=1}^{\infty} a^2( \lfloor \frac{n}{B} \rfloor) <\infty$, which satisfies 
  \cite[Assumption~$2.3$]{lakshminarayanan-bhatnagar14stability-two-timescale}.
  
  {\em Checking \cite[Assumption~$2.4$]{lakshminarayanan-bhatnagar14stability-two-timescale}:} Let us consider 
  the following set of o.d.e. (similar to what we considered in the proof of Theorem~\ref{theorem:OptAsYouGoLearning}):
  $ \dot{V}_t (r) = \kappa_t(r) f_r(\underline{V}_t,\xi_{out}(t),\xi_{relay}(t))$ for $r \in \{1,2,\cdots,B\}$, 
  $\dot{\xi}_{out}(t)=0$ and $\dot{\xi}_{relay}(t)=0$ (recall the interpretation of $\kappa_t(r)$ 
  from Appendix~\ref{appendix:learning-for-pure-as-you-go-deployment-given-xi}). Note that, 
$\lim_{c \rightarrow \infty} \frac{f_r(c\underline{V}, c \xi_{out}, c \xi_{relay})}{c}=\mathbb{E}_{W}\min\{\xi_{out}Q_{out}(r,\gamma,W)+\xi_{relay},-V(1)+V(r+1)\}-V(r)$ for $r \neq B$, and 
$\lim_{c \rightarrow \infty} \frac{f_B(c\underline{V},  c \xi_{out}, c \xi_{relay})}{c}=\xi_{out} \mathbb{E}_{W}Q_{out}(B,\gamma,W)+\xi_{relay}-V(B)$. Note that $\frac{f_r(c\underline{V})}{c}$ for all 
$r$ and $\lim_{c \rightarrow \infty} \frac{f_r(c\underline{V}, c \xi_{out}, c \xi_{relay})}{c}$ all are continuous in 
$(\underline{V}, \xi_{out},\xi_{relay})$, and $\frac{f_r(c\underline{V},  c \xi_{out}, c \xi_{relay})}{c}$ 
is decreasing in $c$. Hence, 
by Theorem~$7.13$ of \cite{rudin76principles-of-mathematical-analysis}, convergence of $\frac{f_r(c\underline{V}, c \xi_{out}, c \xi_{relay})}{c}$ 
over compacts is uniform. Hence, one part of \cite[Assumption~$2.4$]{lakshminarayanan-bhatnagar14stability-two-timescale} 
is proved. 
Next, by similar analysis done while checking 
\cite[Assumption~4]{bhatnagar11borkar-meyn-theorem-asynchronous-stochastic-approximation}
in the proof of Theorem~\ref{theorem:OptAsYouGoLearning} (using Lemma~\ref{lemma:ode-globally-asymptotically-stable}), 
we can verify the second part of 
\cite[Assumption~$2.4$]{lakshminarayanan-bhatnagar14stability-two-timescale}.

Hence, using similar analysis as in 
\cite[Section~$3$, Theorem~$11$]{lakshminarayanan-bhatnagar14stability-two-timescale} (adapted to the case of asynchronous 
stochastic approximation), we can claim that 
$||\underline{V}^{(k)}|| \leq C^* (1+\xi_{out}^{(k)}+\xi_{relay}^{(k)})$ for all $k \geq 1$, for some 
$C^*>0$. Now, since the slower timescale iterates are bounded in our problem, the faster timescale iterates are 
also bounded. This completes the proof of Lemma~\ref{lemma:boundedness_V_iteration_optasyougoadaptivelearning}.
\end{proof}

\begin{lemma}\label{lemma:adaptive_learning_faster_timescale_convergence}
\label{lemma:lemma_used_in_convergence_proof_of_optasyougoadaptive_learning}
 For Algorithm~\ref{algorithm:OptAsYouGoAdaptiveLearning}, we have 
 $(\underline{V}^{(k)},\xi_{out}^{(k)},\xi_{relay}^{(k)}) \rightarrow \{(\underline{V}^*(\xi_{out},\xi_{relay}),\xi_{out},\xi_{relay}): 
(\xi_{out},\xi_{relay}) \in [0,A_1] \times [0,A_2] \}$ almost surely, i.e.,  
$\lim_{k \rightarrow \infty}||\underline{V}^{(k)}-\underline{V}^*(\xi_{out}^{(k)},\xi_{relay}^{(k)})||=0$ almost surely.
\end{lemma}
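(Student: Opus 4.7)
The plan is to apply the theory of two-timescale asynchronous stochastic approximation. Because $\lim_{n\to\infty} b(\lfloor n/B\rfloor)/a(n)=0$, the $(\xi_{out},\xi_{relay})$ iterates evolve on a strictly slower timescale than the $\underline{V}$ iterates; hence, on the natural timescale of the $\underline{V}$-update the slower iterates look quasi-static, and one expects $\underline{V}^{(k)}$ to track the unique zero $\underline{V}^*(\xi_{out}^{(k)},\xi_{relay}^{(k)})$ of the frozen-parameter system~\eqref{eqn:average_cost_optimality_equation_no_backtracking-V-equation-no-lambda}.

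First I would rewrite the $V$-update in \eqref{eqn:OptAsYouGoAdaptiveLearning_update_part} in the standard asynchronous SA form
\begin{equation*}
V^{(k)}(r) = V^{(k-1)}(r) + a(\nu(r,k))\,\mathbb{I}\{r\in\mathcal{I}_k\}\,\bigl[f_r(\underline{V}^{(k-1)};\xi_{out}^{(k-1)},\xi_{relay}^{(k-1)}) + M_k(r)\bigr],
\end{equation*}
where $f_r(\cdot)$ is the drift introduced in the proof of Theorem~\ref{theorem:OptAsYouGoLearning} but now carrying $(\xi_{out}^{(k-1)},\xi_{relay}^{(k-1)})$ as parameters, and $\underline{M}_k$ is a Martingale-difference sequence adapted to $\{\mathcal{F}_k\}$. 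Then I would verify the hypotheses of the asynchronous two-timescale SA theorem: (i) the asynchronicity condition $\liminf_k \nu(r,k)/k \geq 1/B > 0$ a.s., since every coordinate is refreshed at least once in $B$ consecutive steps; (ii) the step-size conditions on $\{a(n)\}$ and $\{b(n)\}$ together with the separation $b(\lfloor n/B\rfloor)/a(n)\to 0$; (iii) Lipschitz continuity of $f_r$ jointly in $(\underline{V},\xi_{out},\xi_{relay})$; (iv) the noise bound $\|\underline{M}_{k+1}\|\le C_0(1+\|\underline{V}^{(k)}\|)$, which follows as in the proof of Theorem~\ref{theorem:OptAsYouGoLearning} because $(\xi_{out}^{(k)},\xi_{relay}^{(k)})$ lies in the bounded rectangle $[0,A_1]\times[0,A_2]$; (v) a.s.\ boundedness of $\{\underline{V}^{(k)}\}$, which is exactly Lemma~\ref{lemma:boundedness_V_iteration_optasyougoadaptivelearning}; and (vi) global asymptotic stability of $\underline{V}^*(\xi_{out},\xi_{relay})$ under the frozen-parameter ODE $\dot V_t(r)=\kappa_t(r)\,f_r(\underline{V}_t;\xi_{out},\xi_{relay})$ for every $(\xi_{out},\xi_{relay})\in[0,A_1]\times[0,A_2]$, which is exactly Lemma~\ref{lemma:ode-globally-asymptotically-stable}.

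With these in hand, the two-timescale analysis proceeds by considering the linearly interpolated trajectory of $\underline{V}^{(k)}$ on the clock $t_k:=\sum_{j\le k}\overline{a}(j)$. The slower iterates shift by $O(b(N_k))=O(b(\lfloor k/B\rfloor))$ per step, while the faster iterates shift by $O(\overline{a}(k))$; the separation condition therefore makes the slower iterates arbitrarily close to a constant over any finite window of the faster clock. The interpolated trajectory is bounded (by Lemma~\ref{lemma:boundedness_V_iteration_optasyougoadaptivelearning}) and equicontinuous, hence relatively compact; along any convergent subsequence its limit solves the frozen ODE with parameter equal to the subsequential limit of the slower iterates. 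By Lemma~\ref{lemma:ode-globally-asymptotically-stable} every such solution is attracted to the corresponding $\underline{V}^*$, and Theorem~\ref{theorem:V-continuous-in-xi} (Lipschitz continuity of $\underline{V}^*$ in $(\xi_{out},\xi_{relay})$) transfers the convergence back to the moving target, yielding $\|\underline{V}^{(k)}-\underline{V}^*(\xi_{out}^{(k)},\xi_{relay}^{(k)})\|\to 0$ almost surely.

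The main obstacle is that classical two-timescale SA results (e.g., Borkar Ch.~6) are stated for synchronous updates, whereas our faster iteration is genuinely asynchronous, since only coordinates $r\in\mathcal{I}_k$ are refreshed at step $k$. The remedy is to splice the Bhatnagar--Borkar--Meyn asynchronous single-timescale framework into the two-timescale argument: one works with the local clocks $\nu(r,k)$, observes that they are all comparable up to a factor of $B$, and uses this comparability together with $b(\lfloor n/B\rfloor)/a(n)\to 0$ to guarantee that the slower iterates remain quasi-static on every local clock simultaneously. A related technical point is that one must use the Lipschitz bound of Theorem~\ref{theorem:V-continuous-in-xi} together with the bounded drift of the slower iterates to control the residual $\underline{V}^*(\xi_{out}^{(k)},\xi_{relay}^{(k)})-\underline{V}^*(\xi_{out}^{(k')},\xi_{relay}^{(k')})$ uniformly across the tracking window, so that the conclusion survives the time-varying parameter.
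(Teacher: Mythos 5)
Your proposal is correct and follows essentially the same route as the paper: both cast the $\underline{V}$-update as an asynchronous stochastic approximation with the slow iterates quasi-static, invoke the a.s.\ boundedness lemma and the global asymptotic stability of $\underline{V}^*(\xi_{out},\xi_{relay})$ for the frozen-parameter ODE, and then apply the two-timescale tracking argument (the paper cites \cite[Lemma~$1$, Chapter~$6$]{borkar08stochastic-approximation-book} adapted via the asynchronous framework of Theorem~\ref{theorem:OptAsYouGoLearning}, which is exactly the argument you unpack), finishing with the Lipschitz continuity of $\underline{V}^*$ from Theorem~\ref{theorem:V-continuous-in-xi} to transfer convergence to the moving target.
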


\begin{proof}
Note that, the functions  
$f_r(\underline{V},\xi_{out},\xi_{relay})=
\mathbb{E}_W \bigg[ \min \bigg \{ \min_{\gamma}(\gamma+ \xi_{out} Q_{out}(r,\gamma, W))+\xi_{relay}, 
-V(1)+V(r+1) \bigg \} -V(r) \bigg]$ and  
$f_B(\underline{V},\xi_{out},\xi_{relay})=
\mathbb{E}_W \bigg[ \min_{\gamma}(\gamma+ \xi_{out} Q_{out}(B,\gamma, W))+\xi_{relay}-V(B)  \bigg]$ 
are Lipschitz continuous in all arguments (by Theorem~$\ref{theorem:V-continuous-in-xi}$), and the 
collection of 
o.d.e. $\dot{\underline{V}}_r(t)=\kappa_t(r) f_r(\underline{V}(t),\xi_{out},\xi_{relay})$ for all 
$r \in \{1,2,\cdots,B\}$ 
(see \cite[Theorem~$2$, Chapter~$7$]{borkar08stochastic-approximation-book} and 
the proof of Theorem~\ref{theorem:OptAsYouGoLearning} for an interpretation of 
$\kappa_t(r)$) has a unique 
globally asymptotically stable equilibrium $\underline{V}^*(\xi_{out},\xi_{relay})$ for any 
$\xi_{out} \geq 0$, $\xi_{relay} \geq 0$ (see Lemma~\ref{lemma:ode-globally-asymptotically-stable} in the proof of 
Theorem~\ref{theorem:OptAsYouGoLearning}). Also, by Theorem~\ref{theorem:V-continuous-in-xi}, 
$\underline{V}^*(\xi_{out},\xi_{relay})$ 
is Lipschitz continuous in $\xi_{out}$ and $\xi_{relay}$. On the other hand, by 
Lemma~\ref{lemma:boundedness_V_iteration_optasyougoadaptivelearning} and the 
projection in the slower timescale, the iterates are almost surely bounded.

Hence, by a similar argument as in the proof  \cite[Lemma~$1$, Chapter~$6$]{borkar08stochastic-approximation-book}, 
and by Theorem~\ref{theorem:OptAsYouGoLearning}, 
$(\underline{V}^{(k)},\xi_{out}^{(k)},\xi_{relay}^{(k)})$ 
converges to the internally chain transitive invariant sets of the collection of o.d.e. given by 
$\dot{V}_r(t)=\kappa_t(r) f_r(\underline{V}(t),\xi_{out},\xi_{relay})$ for all $r \in \{1,2,\cdots,B\}$, 
$\dot{\xi}_{out}(t)=0$, $\dot{\xi}_{relay}(t)=0$ (where $\underline{V}(t):=\{V_1(t),V_2(t),\cdots,V_B(t) \}$). Hence, 
$(\underline{V}^{(k)},\xi_{out}^{(k)},\xi_{relay}^{(k)}) \rightarrow \{(\underline{V}^*(\xi_{out},\xi_{relay}),\xi_{out},\xi_{relay}): 
(\xi_{out},\xi_{relay}) \in [0,A_1] \times [0,A_2] \}$ and 
$\lim_{k \rightarrow \infty}||\underline{V}^{(k)}-\underline{V}^*(\xi_{out}^{(k)},\xi_{relay}^{(k)})||=0$. 
\end{proof}

{\em Remark:} Lemma~\ref{lemma:adaptive_learning_faster_timescale_convergence} does not guarantee the convergence 
of the slower timescale iterates.

\begin{figure*}[!t]
\begin{footnotesize}
\begin{eqnarray}
 \xi_{out}^{(k)} &=& \Lambda_{\mathcal{G}} \bigg( \xi_{out}^{(k-1)}+b(k) \bigg( Q_{out}(U_k,\Gamma_k,W_{U_k})-\overline{q}U_k \bigg) \bigg) \nonumber\\
 &=& \Lambda_{\mathcal{G}} \bigg( \xi_{out}^{(k-1)}+b(k) \bigg( \underbrace{\overline{Q}_{out}^*(\xi_{out}^{(k-1)}, \xi_{relay}^{(k-1)})-\overline{q}\overline{U}^*(\xi_{out}^{(k-1)}, \xi_{relay}^{(k-1)})}_{:=f_1(\xi_{out}^{(k-1)},\xi_{relay}^{(k-1)})}        \nonumber\\
 && +   \underbrace{\overline{Q}_{out}(\underline{V}^{(k-1)},\xi_{out}^{(k-1)}, \xi_{relay}^{(k-1)})-\overline{q}\overline{U}(\underline{V}^{(k-1)},\xi_{out}^{(k-1)}, \xi_{relay}^{(k-1)})-f_1(\xi_{out}^{(k-1)},\xi_{relay}^{(k-1)})}_{:=g_1(\underline{V}^{(k-1)},\xi_{out}^{(k-1)},\xi_{relay}^{(k-1)})}    \nonumber\\
 && +   \underbrace{\overline{Q}_{out}'(\underline{V}^{(k-1)},\xi_{out}^{(k-1)}, \xi_{relay}^{(k-1)})-\overline{q}\overline{U}'(\underline{V}^{(k-1)},\xi_{out}^{(k-1)}, \xi_{relay}^{(k-1)})- \bigg( \overline{Q}_{out}(\underline{V}^{(k-1)},\xi_{out}^{(k-1)}, \xi_{relay}^{(k-1)})-\overline{q}\overline{U}(\underline{V}^{(k-1)},\xi_{out}^{(k-1)}, \xi_{relay}^{(k-1)}) \bigg)}_{:=l_1(\underline{V}^{(k-1)},\xi_{out}^{(k-1)},\xi_{relay}^{(k-1)})}    \nonumber\\
 &&  +  \underbrace{Q_{out}(U_k,\Gamma_k,W_{U_k})-\overline{q}U_k-\bigg( \overline{Q}_{out}'(\underline{V}^{(k-1)},\xi_{out}^{(k-1)}, \xi_{relay}^{(k-1)})-\overline{q}\overline{U}'(\underline{V}^{(k-1)},\xi_{out}^{(k-1)}, \xi_{relay}^{(k-1)}) \bigg) }_{:=M_1^{(k)}}   \bigg)      \bigg) \nonumber\\
 &=& \Lambda_{\mathcal{G}} \bigg( \xi_{out}^{(k-1)}+b(k) \bigg( f_1(\xi_{out}^{(k-1)},\xi_{relay}^{(k-1)})+ g_1(\underline{V}^{(k-1)},\xi_{out}^{(k-1)},\xi_{relay}^{(k-1)})+l_1(\underline{V}^{(k-1)},\xi_{out}^{(k-1)},\xi_{relay}^{(k-1)})+ M_{1}^{(k)} \bigg) \bigg) \nonumber\\
\xi_{relay}^{(k)}&=& \Lambda_{\mathcal{G}} \bigg( \xi_{out}^{(k-1)}+b(k) \bigg( 1-\overline{N}U_k \bigg) \bigg) \nonumber\\   
&=& \Lambda_{\mathcal{G}} \bigg( \xi_{relay}^{(k-1)}+b(k) \bigg( \underbrace{1-\overline{N}\overline{U}^*(\xi_{out}^{(k-1)}, \xi_{relay}^{(k-1)})}_{:=f_2(\xi_{out}^{(k-1)},\xi_{relay}^{(k-1)})}        \nonumber\\
 && +   \underbrace{1-\overline{N}\overline{U}(\underline{V}^{(k-1)},\xi_{out}^{(k-1)}, \xi_{relay}^{(k-1)})-f_2(\xi_{out}^{(k-1)},\xi_{relay}^{(k-1)})}_{:=g_2(\underline{V}^{(k-1)},\xi_{out}^{(k-1)},\xi_{relay}^{(k-1)})}    \nonumber\\
  && +   \underbrace{1-\overline{N}\overline{U}'(\underline{V}^{(k-1)},\xi_{out}^{(k-1)}, \xi_{relay}^{(k-1)})- \bigg( 1-\overline{N}\overline{U}(\underline{V}^{(k-1)},\xi_{out}^{(k-1)}, \xi_{relay}^{(k-1)}) \bigg)}_{:=l_2(\underline{V}^{(k-1)},\xi_{out}^{(k-1)},\xi_{relay}^{(k-1)})}    \nonumber\\
 &&  +  \underbrace{ 1-\overline{N}U_k-\bigg( 1-\overline{N}\overline{U}'(\underline{V}^{(k-1)},\xi_{out}^{(k-1)}, \xi_{relay}^{(k-1)})  \bigg) }_{:=M_2^{(k)}}     \bigg)    \bigg) \nonumber\\
&=& \Lambda_{\mathcal{G}} \bigg( \xi_{relay}^{(k-1)}+b(k) \bigg( f_2(\xi_{relay}^{(k-1)},\xi_{relay}^{(k-1)})+ g_2(\underline{V}^{(k-1)},\xi_{relay}^{(k-1)},\xi_{relay}^{(k-1)})+l_2(\underline{V}^{(k-1)},\xi_{out}^{(k-1)},\xi_{relay}^{(k-1)})+ M_{2}^{(k)} \bigg) \bigg)
\label{eqn:optasyougoadaptivelearning_slower_timescale_in_kushner_form}
\end{eqnarray}
\end{footnotesize}
\hrule
\end{figure*}

\subsubsection{\textbf{The slower timescale iteration}}
\label{subsubsection:posing_slower_timescale_as_projected_stochastic_approximation_by_kushner}

We will pose the slower timescale update  as a 
projected stochastic approximation (see \cite[Equation~$5.3.1$]{kushner-clark78SA-constrained-unconstrained}). 
{\bf In order to do that and to avoid complicated notation, for the rest of this appendix we will denote 
by $\underline{V}^{(k)}$, $\xi_{out}^{(k)}$ and $\xi_{relay}^{(k)}$ the values of the corresponding variable 
after placing the $k$-th relay and performing the update (earlier they were defined to be the iterates after a decision is made at the $k$-th step).} 
Let us also recall the definition of the functions $\overline{Q}_{out}(\cdot,\cdot,\cdot)$, 
$\overline{Q}_{out}^*(\cdot,\cdot)$, $\overline{U}(\cdot,\cdot,\cdot)$, $\overline{U}^*(\cdot,\cdot)$. 
Let us define the functions $\overline{Q}_{out}'(\underline{V}^{(k-1)}, \xi_{out}^{(k-1)},\xi_{relay}^{(k-1)})$ and 
$\overline{U}'(\underline{V}^{(k-1)}, \xi_{out}^{(k-1)},\xi_{relay}^{(k-1)})$ to be the mean link outage and mean length 
of the $k$-th link that is created by 
Algorithm~\ref{algorithm:OptAsYouGoAdaptiveLearning} (using the two-timescale update) 
starting with $\underline{V}^{(k-1)}$, $\xi_{out}^{(k-1)}$ and $\xi_{relay}^{(k-1)}$ (which are obtained by the algorithm 
after placing the $(k-1)$-st relay and and doing the learning/update operation; 
note that, these quantities are obtained after placing $(k-1)$ nodes and not at the $(k-1)$-th step). 

{\em The difference between $\overline{U}'(\underline{V}^{(k-1)}, \xi_{out}^{(k-1)},\xi_{relay}^{(k-1)})$ and 
$\overline{U}(\underline{V}^{(k-1)}, \xi_{out}^{(k-1)},\xi_{relay}^{(k-1)})$ can be explained as follows. 
$\overline{U}(\underline{V}^{(k-1)}, \xi_{out}^{(k-1)},\xi_{relay}^{(k-1)})$ is the mean length of the $k$-th link where no quantity 
is updated in the process of measurements made to create the $k$-th link; hence, $\overline{U}(\underline{V}^{(k-1)}, \xi_{out}^{(k-1)},\xi_{relay}^{(k-1)})$ 
is the mean placement distance of a {\em stationary} policy which is similar to Algorithm~\ref{algorithm:optimal-policy-structure} 
except that $\xi_{out}$, $\xi_{relay}$ and $\underline{V}^*$ are replaced by 
 $\xi_{out}^{(k-1)}$, $\xi_{relay}^{(k-1)}$ and $\underline{V}^{(k-1)}$ respectively. On the other hand, 
$\overline{U}'(\underline{V}^{(k-1)}, \xi_{out}^{(k-1)},\xi_{relay}^{(k-1)})$ 
is the mean length of the $k$-th link created under Algorithm~\ref{algorithm:OptAsYouGoAdaptiveLearning} (with 
$(\underline{V}^{(k-1)}, \xi_{out}^{(k-1)},\xi_{relay}^{(k-1)})$ as starting parameters), 
where the iterates are updated at each step  between placement of the $(k-1)$-th node and 
the $k$-th node.}

Let us denote by $\mathcal{G}$ the set $[0,A_1] \times [0,A_2]$, defined by the following constraints:
\begin{eqnarray}\label{eqn:constraint_equations_in_the_slower_timescale}
 -\xi_{out} \leq 0, \xi_{out} \leq A_1, -\xi_{relay} \leq 0, \xi_{relay} \leq A_2
\end{eqnarray}

{\em Clearly, projection onto the set 
$\mathcal{G}$ is nothing but coordinate wise projection.}

We rewrite the slower timescale iteration in (\ref{eqn:OptAsYouGoAdaptiveLearning_update_part}) 
as (\ref{eqn:optasyougoadaptivelearning_slower_timescale_in_kushner_form}) (note the 
definitions of the functions $f_1(\xi_{out},\xi_{relay})$, $f_2(\xi_{out},\xi_{relay})$, $g_1(\underline{V},\xi_{out},\xi_{relay})$,  
$g_2(\underline{V},\xi_{out},\xi_{relay})$, $l_1(\underline{V},\xi_{out},\xi_{relay})$ 
and $l_2(\underline{V},\xi_{out},\xi_{relay})$  in 
(\ref{eqn:optasyougoadaptivelearning_slower_timescale_in_kushner_form})). 
The random variables $M_{1}^{(k)}$ and $M_{2}^{(k)}$ are two zero mean Martingale difference noise sequences w.r.t.  
$\mathcal{F}_{k-1}$ (information available up to the $(k-1)$-st placement instant); this happens due to i.i.d. shadowing 
across links.

{\em (\ref{eqn:optasyougoadaptivelearning_slower_timescale_in_kushner_form}) has the form of a 
projected stochastic approximation 
(see \cite[Equation~$5.3.1$]{kushner-clark78SA-constrained-unconstrained}). In order to show the desired convergence of 
the iterates in (\ref{eqn:optasyougoadaptivelearning_slower_timescale_in_kushner_form}), 
we will  use   \cite[Theorem~$5.3.1$]{kushner-clark78SA-constrained-unconstrained}; 
this requires us to check five conditions 
from \cite{kushner-clark78SA-constrained-unconstrained}, which is done in the next subsection.} 
\qed

\subsubsection{\bf{\em Checking the five conditions from \cite{kushner-clark78SA-constrained-unconstrained}}}
\label{subsubsection:checking_five_conditions_kushner_optexplorelimadaptivelearning}
We will first present a lemma that will be useful for checking one condition.
\begin{lemma}\label{lemma:placement_rate_mean_outage_per_step_continuous_in_V_xio_and_xir}
Under Assumption~\ref{assumption:shadowing_continuous_random_variable}, 
the quantities $\overline{\Gamma}(\underline{V},\xi_{out},\xi_{relay})$, $\overline{Q}_{out}(\underline{V},\xi_{out},\xi_{relay})$ 
and $\overline{U}(\underline{V},\xi_{out},\xi_{relay})$ 
are continuous in $\underline{V}$, $\xi_{out}$ and $\xi_{relay}$.
\end{lemma}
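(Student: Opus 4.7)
\textbf{Proof plan for Lemma~\ref{lemma:placement_rate_mean_outage_per_step_continuous_in_V_xio_and_xir}:} The strategy is to mimic the proof of Lemma~\ref{lemma:grgamma_continuous_in_xio_xir}, but now treating $\underline{V}$ as an additional free parameter in place of the fixed point $\underline{V}^*$. First, I would reduce all three quantities to expectations of bounded measurable functions of the i.i.d. shadowing random variables $W_1,\ldots,W_B$ seen after the previous placed relay. Under the stationary policy parameterized by $(\underline{V},\xi_{out},\xi_{relay})$, the event $\{U_k=r,\Gamma_k=\gamma\}$ is measurable with respect to $W_1,\ldots,W_r$, and can be written as the intersection of three types of events: (a) ``do not place at step $u$'' for $1\le u\le r-1$, namely $\min_{\gamma'}(\gamma'+\xi_{out}Q_{out}(u,\gamma',W_u))+\xi_{relay}>-V(1)+V(u+1)$; (b) ``place at step $r$'' (relevant only for $r<B$), namely the reverse inequality at $r$; and (c) ``the minimizing power at $(r,W_r)$ equals $\gamma$'', namely $\gamma+\xi_{out}Q_{out}(r,\gamma,W_r)<\gamma'+\xi_{out}Q_{out}(r,\gamma',W_r)$ for all $\gamma'\neq\gamma$. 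Call this joint probability $g(r,\gamma;\underline{V},\xi_{out},\xi_{relay})$.

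Next, I would express the three quantities of interest as
\begin{eqnarray*}
\overline{\Gamma}(\underline{V},\xi_{out},\xi_{relay}) &=& \sum_{r=1}^{B}\sum_{\gamma\in\mathcal{S}}\gamma\, g(r,\gamma;\underline{V},\xi_{out},\xi_{relay}),\\
\overline{U}(\underline{V},\xi_{out},\xi_{relay}) &=& \sum_{r=1}^{B}\sum_{\gamma\in\mathcal{S}} r\, g(r,\gamma;\underline{V},\xi_{out},\xi_{relay}),\\
\overline{Q}_{out}(\underline{V},\xi_{out},\xi_{relay}) &=& \sum_{r=1}^{B}\sum_{\gamma\in\mathcal{S}} \mathbb{E}\big[Q_{out}(r,\gamma,W_r)\,\mathbb{I}_{\mathcal{E}(r,\gamma)}\big],
\end{eqnarray*}
where $\mathcal{E}(r,\gamma)$ is the event described above. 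Since each sum is finite and each integrand is bounded (by $P_M$, $B$, and $1$ respectively), it suffices to prove continuity of $g(r,\gamma;\cdot)$ and of each $\mathbb{E}[Q_{out}(r,\gamma,W_r)\mathbb{I}_{\mathcal{E}(r,\gamma)}]$.

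The core of the argument, as in Lemma~\ref{lemma:grgamma_continuous_in_xio_xir}, is that each defining inequality of $\mathcal{E}(r,\gamma)$ is of the form $\Phi(\underline{V},\xi_{out},\xi_{relay},W_u)\gtrless 0$ where $\Phi$ is jointly continuous in all arguments (continuity in $W_u$ follows from continuity of $Q_{out}(u,\gamma,\cdot)$ in $w$; continuity in $(\underline{V},\xi_{out},\xi_{relay})$ is immediate since the thresholds involve only finite minima and affine combinations of these parameters). Under Assumption~\ref{assumption:shadowing_continuous_random_variable}, the marginal law of each $W_u$ has no atoms, so $\mathbb{P}(\Phi(\underline{V},\xi_{out},\xi_{relay},W_u)=0)=0$ for every fixed $(\underline{V},\xi_{out},\xi_{relay})$. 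Hence, for a sequence $(\underline{V}_n,\xi_{out,n},\xi_{relay,n})\to(\underline{V},\xi_{out},\xi_{relay})$, the indicator $\mathbb{I}_{\mathcal{E}^{(n)}(r,\gamma)}$ converges almost surely to $\mathbb{I}_{\mathcal{E}(r,\gamma)}$ on the complement of the boundary set (which has probability zero).

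Finally, invoking the Dominated Convergence Theorem (the indicators are dominated by $1$ and $Q_{out}$ is dominated by $1$), I obtain $g(r,\gamma;\underline{V}_n,\xi_{out,n},\xi_{relay,n})\to g(r,\gamma;\underline{V},\xi_{out},\xi_{relay})$ and the analogous convergence for the outage integral, establishing the joint continuity of $\overline{\Gamma}$, $\overline{U}$, and $\overline{Q}_{out}$. The main (mild) obstacle is simply the careful bookkeeping to verify that the ``ties'' in the $\min$ and $\arg\min$ defining the policy all correspond to zero-probability boundaries; this is handled uniformly because both the inner minimum (over $\gamma\in\mathcal{S}$) and the equality $\gamma+\xi_{out}Q_{out}(r,\gamma,w)=\gamma'+\xi_{out}Q_{out}(r,\gamma',w)$ only fail on a finite union of sets of the form $\{w:\Phi(w)=0\}$ with $\Phi$ continuous and non-degenerate, and the same argument as in Claim~\ref{claim:convergence_of_indicator_function} applies verbatim.
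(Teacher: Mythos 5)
Your proposal is correct and follows essentially the same route as the paper: the paper's own proof is a one-line deferral to the proof of Theorem~\ref{theorem:placement_rate_mean_outage_per_step_continuous_in_xio_and_xir}, which in turn rests on the machinery of Lemma~\ref{lemma:grgamma_continuous_in_xio_xir} (decompose $\{U_k=r,\Gamma_k=\gamma\}$ into strict-inequality events, use Assumption~\ref{assumption:shadowing_continuous_random_variable} to show ties have probability zero, conclude almost-sure convergence of indicators, and finish with dominated convergence), exactly as you do with $\underline{V}$ promoted to a free parameter. The one place you genuinely depart from the paper is the treatment of $\overline{Q}_{out}$: Theorem~\ref{theorem:placement_rate_mean_outage_per_step_continuous_in_xio_and_xir} obtains continuity of the optimal mean outage \emph{indirectly}, by combining the renewal-reward identity with the continuity of $\lambda^*(\xi_{out},\xi_{relay})$ from Theorem~\ref{theorem:lambda-increasing-concave-continuous-in-xi}, whereas you write $\overline{Q}_{out}$ directly as $\sum_{r,\gamma}\mathbb{E}[Q_{out}(r,\gamma,W_r)\mathbb{I}_{\mathcal{E}(r,\gamma)}]$ and apply dominated convergence. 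Your direct route is actually the more robust one for this lemma: the paper's detour relies on $\lambda^*$ being an infimum of affine functions of the multipliers, a property of the \emph{optimal} policy that has no immediate analogue when $\underline{V}$ is an arbitrary parameter rather than the fixed point $\underline{V}^*$, so the "similar to Theorem~7" deferral is slightly glib and your explicit argument fills that gap cleanly. No errors or missing steps.
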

\begin{proof}
  The proof is similar to that of Theorem~\ref{theorem:placement_rate_mean_outage_per_step_continuous_in_xio_and_xir}.
\end{proof}

Now, we will check   conditions $A5.1.3$, $A5.1.4$, $A5.1.5$, $A5.3.1.$ and $A5.3.2$ from 
\cite{kushner-clark78SA-constrained-unconstrained}.

{\em Checking Condition~$A5.1.3$:} We need $f_1(\cdot,\cdot)$ and $f_2(\cdot,\cdot)$ to be continuous functions; 
this holds by Theorem~\ref{theorem:placement_rate_mean_outage_per_step_continuous_in_xio_and_xir}.\qed

{\em Checking Condition~$A5.1.4$:} This condition is satisfied 
by the choice of the sequence $\{b(k)\}_{k \geq 1}$.\qed

{\em Checking Condition~$A5.1.5$:} This 
condition requires that $\lim_{k \rightarrow \infty}g_1(\underline{V}^{(k-1)},\xi_{out}^{(k-1)},\xi_{relay}^{(k-1)})=0$,  
$\lim_{k \rightarrow \infty}g_2(\underline{V}^{(k-1)},\xi_{out}^{(k-1)},\xi_{relay}^{(k-1)})=0$, 
$\lim_{k \rightarrow \infty}l_1(\underline{V}^{(k-1)},\xi_{out}^{(k-1)},\xi_{relay}^{(k-1)})=0$ and 
$\lim_{k \rightarrow \infty}l_2(\underline{V}^{(k-1)},\xi_{out}^{(k-1)},\xi_{relay}^{(k-1)})=0$ almost surely.

We can find a probability $1$ subset of the sample space $\Omega$, 
such that for any sample path in this subset the conclusions of 
Lemma~\ref{lemma:boundedness_V_iteration_optasyougoadaptivelearning} and 
Lemma~\ref{lemma:lemma_used_in_convergence_proof_of_optasyougoadaptive_learning} hold. Take one such sample 
path $\omega$. By Lemma~\ref{lemma:boundedness_V_iteration_optasyougoadaptivelearning}, for this sample path 
$\omega$,  we can find a compact subset $\mathcal{C} \subset \mathbb{R}^B$
such that $(\underline{V}^{(k)}, \xi_{out}^{(k)}, \xi_{relay}^{(k)})$ lies inside the compact set 
$\mathcal{C} \times[0,A_1]\times[0,A_2]$ for 
all $k \geq 1$ along this sample path.

By Lemma~\ref{lemma:placement_rate_mean_outage_per_step_continuous_in_V_xio_and_xir} and the fact 
that continuous functions are uniformly continuous over compact sets, we can say that  
$\overline{Q}_{out}(\underline{V}, \xi_{out}, \xi_{relay})$, 
$\overline{\Gamma}(\underline{V}, \xi_{out}, \xi_{relay})$ and 
$\overline{U}(\underline{V}, \xi_{out}, \xi_{relay})$ are uniformly continuous over the 
compact set $\mathcal{C} \times[0,A_1]\times[0,A_2]$. 
Now,  the Euclidean distance between  
$(\underline{V}^{(k)}, \xi_{out}^{(k)}, \xi_{relay}^{(k)})$ and 
$(\underline{V}^*(\xi_{out}^{(k)}, \xi_{relay}^{(k)}), \xi_{out}^{(k)}, \xi_{relay}^{(k)})$ converges to $0$ 
along the sample path $\omega$. Hence, by uniform continuity, we can say that $\lim_{k \rightarrow \infty}|\overline{Q}_{out}(\underline{V}^{(k)}, \xi_{out}^{(k)}, \xi_{relay}^{(k)})-
\overline{Q}_{out}(\underline{V}^*(\xi_{out}^{(k)}, \xi_{relay}^{(k)}), \xi_{out}^{(k)}, \xi_{relay}^{(k)})|=0$ 
and $\lim_{k \rightarrow \infty}|\overline{U}(\underline{V}^{(k)}, \xi_{out}^{(k)}, \xi_{relay}^{(k)})-
\overline{U}(\underline{V}^*(\xi_{out}^{(k)}, \xi_{relay}^{(k)}), \xi_{out}^{(k)}, \xi_{relay}^{(k)})|=0$ along this sample path 
$\omega$. Hence, $\lim_{k \rightarrow \infty}g_1(\underline{V}^{(k-1)},\xi_{out}^{(k-1)},\xi_{relay}^{(k-1)})=0$ and 
$\lim_{k \rightarrow \infty}g_2(\underline{V}^{(k-1)},\xi_{out}^{(k-1)},\xi_{relay}^{(k-1)})=0$ almost surely. 

On the other hand, since $\mathcal{C}$ is bounded, we can say that 
$\{ \underline{V}^{(k)} \}_{k \geq 1}$ is bounded for the chosen $\omega$. 
In a similar way as in the proof of Theorem~\ref{theorem:placement_rate_mean_outage_per_step_continuous_in_xio_and_xir}, 
in case of Lemma~\ref{lemma:placement_rate_mean_outage_per_step_continuous_in_V_xio_and_xir} we can show 
that $g(r,\gamma)$ is continuous in $\overline{V}$, $\xi_{out}$ and $\xi_{relay}$. Now, between the placement of 
the $(k-1)$-st relay and $k$-th relay, at each step, $g(r,\gamma)$ for all $r \in \{1,2,\cdots,B\}, \gamma \in \mathcal{S}$ 
can change at most by an amount $K^* a(k-1-B)$ (for a suitable 
constant $K^*>0$), and hence we can claim that 
$\lim_{k \rightarrow \infty}|\overline{U}'(\underline{V}^{(k-1)}, \xi_{out}^{(k-1)}, \xi_{relay}^{(k-1)})-\overline{U}(\underline{V}^{(k-1)}, \xi_{out}^{(k-1)}, \xi_{relay}^{(k-1)})|=0$, 
$\lim_{k \rightarrow \infty}|\overline{Q}_{out}'(\underline{V}^{(k-1)}, \xi_{out}^{(k-1)}, \xi_{relay}^{(k-1)})-\overline{Q}_{out}(\underline{V}^{(k-1)}, \xi_{out}^{(k-1)}, \xi_{relay}^{(k-1)})|=0$. 
Hence, we obtain that $\lim_{k \rightarrow \infty}l_1(\underline{V}^{(k-1)}, \xi_{out}^{(k-1)}, \xi_{relay}^{(k-1)})=0$ 
and $\lim_{k \rightarrow \infty}l_2(\underline{V}^{(k-1)}, \xi_{out}^{(k-1)}, \xi_{relay}^{(k-1)})=0$.

Also, $g_1(\underline{V}^{(k)},\xi_{out}^{(k)},\xi_{relay}^{(k)})$,  
$g_2(\underline{V}^{(k)},\xi_{out}^{(k)},\xi_{relay}^{(k)})$, 
$l_1(\underline{V}^{(k)},\xi_{out}^{(k)},\xi_{relay}^{(k)})$ and 
$l_2(\underline{V}^{(k)},\xi_{out}^{(k)},\xi_{relay}^{(k)})$ are uniformly bounded across $k \geq 1$, since the 
outage probabilities and placement distances are bounded quantities.

Hence, this condition is satisfied.

{\em Checking Condition~$A5.3.1$:} This condition is easy to check, and done in 
\cite[Appendix~E, Section~C4]{chattopadhyay-etal15measurement-based-impromptu-deployment-arxiv-v1}.\qed

{\em Checking Condition~$A5.3.2$:}
This condition is easy to check, and done in 
\cite[Appendix~E, Section~C4]{chattopadhyay-etal15measurement-based-impromptu-deployment-arxiv-v1}.\qed

\subsubsection{{\bf Finishing the Proof of 
Theorem~\ref{theorem:convergence_OptAsYouGoAdaptiveLearning}}}
\label{subsubsection:finishing_two_timescale_adaptive_learning_with_outage_convergence_proof}

Consider the function 
$h(\xi_{out},\xi_{relay}):=\bigg(\frac{f_1(\xi_{out},\xi_{relay})}{\overline{U}^*(\xi_{out},\xi_{relay})},\frac{f_2(\xi_{out},\xi_{relay})}{\overline{U}^*(\xi_{out},\xi_{relay})} \bigg) 
=\bigg(\frac{\overline{Q}_{out}^*(\xi_{out},\xi_{relay})}{\overline{U}^*(\xi_{out},\xi_{relay})}-\overline{q},\frac{1}{\overline{U}^*(\xi_{out},\xi_{relay})}-\overline{N} \bigg)$ and 
the map:

\footnotesize
\begin{eqnarray}
&& \overline{\Lambda}_{\mathcal{G}}(h(\xi_{out},\xi_{relay})) \nonumber\\
&=& \lim_{0<\beta \rightarrow 0} \frac{\Lambda_{\mathcal{G}}\bigg((\xi_{out},\xi_{relay})+\beta h(\xi_{out},\xi_{relay}))\bigg)-(\xi_{out},\xi_{relay})}{\beta} \nonumber\\
\label{eqn:definition_of_Lambda_G}
\end{eqnarray}
\normalsize

\begin{lemma}\label{lemma:stationary_point_is_optimal_point_adaptive_learning_proof}
If $(\xi_{out},\xi_{relay}) \in [0,A_1] \times [0,A_2]$ is a zero of 
$\overline{\Lambda}_{\mathcal{G}}\bigg(\frac{f_1(\xi_{out},\xi_{relay})}{\overline{U}^*(\xi_{out},\xi_{relay})},\frac{f_2(\xi_{out},\xi_{relay})}{\overline{U}^*(\xi_{out},\xi_{relay})} \bigg)$, 
then $(\underline{V}^*(\xi_{out},\xi_{relay}),\xi_{out},\xi_{relay}) \in \mathcal{K}(\overline{q},\overline{N})$, 
provided that $A_1$ and $A_2$ are chosen using the procedure described in Section~\ref{section:learning-for-pure-as-you-go-deployment-constrained-problem}. 
\end{lemma}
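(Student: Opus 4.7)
My plan is to exploit the coordinate-wise structure of the projection $\Lambda_{\mathcal{G}}$ onto the rectangle $\mathcal{G} = [0,A_1] \times [0,A_2]$, which decomposes the stationarity condition $\overline{\Lambda}_{\mathcal{G}}(h) = 0$ into one-dimensional sign conditions on each component. A case analysis on where $(\xi_{out},\xi_{relay})$ sits in $\mathcal{G}$, together with the prescribed choices of $A_1$ and $A_2$ used to rule out the outward-pointing boundary faces, will establish both constraints of \eqref{eqn:constrained_problem_average_cost_with_outage_cost} together with complementary slackness. A standard Lagrangian argument will then deliver the power optimality $\overline{\Gamma}^*/\overline{U}^* = \gamma^*$ and hence membership in $\mathcal{K}(\overline{q},\overline{N})$.

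Write $h_1(\xi_{out},\xi_{relay}) := \overline{Q}_{out}^*/\overline{U}^* - \overline{q}$ and $h_2(\xi_{out},\xi_{relay}) := 1/\overline{U}^* - \overline{N}$, so $h=(h_1,h_2)$. Computing the one-sided limit in \eqref{eqn:definition_of_Lambda_G} face by face, $\overline{\Lambda}_{\mathcal{G}}(h)=0$ is equivalent to: $\xi_{out}=0$ with $h_1\leq 0$, or $\xi_{out}\in(0,A_1)$ with $h_1=0$, or $\xi_{out}=A_1$ with $h_1\geq 0$; and analogously for $\xi_{relay}$ with $h_2$. The prescription on $A_2$, namely $\overline{U}^*(\xi_{out},A_2)>1/\overline{N}$ for every $\xi_{out}\in[0,A_1]$, yields $h_2(\xi_{out},A_2)<0$, which contradicts $h_2\geq 0$; hence $\xi_{relay}\in[0,A_2)$, and in particular $1/\overline{U}^*\leq\overline{N}$. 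On the face $\xi_{out}=A_1$, the prescription forces the per-link transmit-power action to equal $P_M$ with probability exceeding $1-\kappa$ uniformly in the link length, so the per-link outage is bounded by $\mathbb{E}_W Q_{out}(u,P_M,W)+\kappa$ for every $u\in\{1,\ldots,B\}$; combining this uniform smallness with the explicit inequality $\overline{Q}_{out}^*(A_1,0)/\overline{U}^*(A_1,0)\leq\overline{q}$ and the monotonicity supplied by Theorem~\ref{theorem:outage_decreasing_with_xio_placement_rate_decreasing_with_xir} yields $h_1(A_1,\xi_{relay})\leq 0$, so $h_1=0$ on this face. In every remaining case the outage constraint is satisfied and $\xi_{out}h_1=\xi_{relay}h_2=0$.

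For the power optimality, let $\pi^\sharp$ be any policy realizing $\gamma^*$ under the constraints of \eqref{eqn:constrained_problem_average_cost_with_outage_cost} (which exists by Assumption~\ref{assumption:existence_of_xio_xir}). Testing the Renewal-Reward expression of $\lambda^*(\xi_{out},\xi_{relay})$ at $\pi^\sharp$, the optimality of $\pi^*(\xi_{out},\xi_{relay})$ for the unconstrained problem \eqref{eqn:unconstrained_problem_average_cost_with_outage_cost} gives
\[
\frac{\overline{\Gamma}^*}{\overline{U}^*} + \xi_{out}\frac{\overline{Q}_{out}^*}{\overline{U}^*} + \xi_{relay}\frac{1}{\overline{U}^*} \;\leq\; \gamma^* + \xi_{out}\overline{q} + \xi_{relay}\overline{N},
\]
and the complementary slackness established above cancels the cross terms, yielding $\overline{\Gamma}^*/\overline{U}^*\leq\gamma^*$. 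The reverse inequality is immediate since $\pi^*(\xi_{out},\xi_{relay})$ is itself feasible for \eqref{eqn:constrained_problem_average_cost_with_outage_cost} (both constraints having been verified). Combining these gives $(\underline{V}^*(\xi_{out},\xi_{relay}),\xi_{out},\xi_{relay})\in\mathcal{K}(\overline{q},\overline{N})$.

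The delicate part is the analysis on the face $\xi_{out}=A_1$: the $A_1$-condition in the paper is explicit only at $\xi_{relay}=0$, so extending the outage bound uniformly across $\xi_{relay}\in[0,A_2)$ requires a careful uniform-in-link-length argument built from the high-probability-$P_M$ guarantee together with the trivial bound $Q_{out}\leq 1$ on the low-probability event, so that the bound on per-step outage remains below $\overline{q}$ regardless of the deployment rule induced by $\xi_{relay}$.
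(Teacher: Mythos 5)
Your overall architecture is the standard one and matches what the paper intends (the paper itself only defers to the proof of \cite[Lemma~$9$]{chattopadhyay-etal15measurement-based-impromptu-deployment-arxiv-v1}): decompose the stationarity of the projected dynamics coordinate-wise over the rectangle $[0,A_1]\times[0,A_2]$ into KKT-type sign conditions, use the prescriptions on $A_1,A_2$ to exclude the outward-pointing boundary cases, and then run the renewal-reward Lagrangian argument with complementary slackness to conclude $\overline{\Gamma}^*/\overline{U}^*=\gamma^*$. The interior and lower-boundary cases, the exclusion of the face $\xi_{relay}=A_2$ via $\overline{U}^*(\xi_{out},A_2)>1/\overline{N}$, and the final duality computation (testing $\lambda^*(\xi_{out},\xi_{relay})$ at a feasible optimal policy supplied by Assumption~\ref{assumption:existence_of_xio_xir} and cancelling the cross terms) are all correct.

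The gap is exactly where you flag it, and your proposed repair does not close it. On the face $\xi_{out}=A_1$ with $\xi_{relay}>0$ you need $h_1(A_1,\xi_{relay})\le 0$. Theorem~\ref{theorem:outage_decreasing_with_xio_placement_rate_decreasing_with_xir} gives monotonicity of the per-step outage in $\xi_{out}$ for \emph{fixed} $\xi_{relay}$; it says nothing about how $\overline{Q}_{out}^*/\overline{U}^*$ varies with $\xi_{relay}$ at fixed $\xi_{out}=A_1$, which is the comparison you need in order to propagate the hypothesis $\overline{Q}_{out}^*(A_1,0)/\overline{U}^*(A_1,0)\le\overline{q}$ away from $\xi_{relay}=0$. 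In fact the dependence plausibly goes the wrong way: increasing $\xi_{relay}$ lengthens links, and since $Q_{out}(r,P_M,w)$ grows superlinearly in $r$ (path-loss exponent $\eta>1$), the per-step outage at $\xi_{out}=A_1$ should \emph{increase} with $\xi_{relay}$. Your fallback bound, $\mathbb{E}_W Q_{out}(u,P_M,W)+\kappa$ uniformly over $u\in\{1,\dots,B\}$, is a constant with no reason to lie below $\overline{q}$ (take $u=B$). So, as written, a stationary point on the face $\xi_{out}=A_1$, $\xi_{relay}\in(0,A_2)$ with $h_1>0$ is not excluded, and such a point would violate the outage constraint and hence not belong to $\mathcal{K}(\overline{q},\overline{N})$. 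Closing this requires either an additional argument combining $h_2=0$ on that face (so $1/\overline{U}^*=\overline{N}$ exactly) with the feasibility in Assumption~\ref{assumption:existence_of_xio_xir}, or reading the condition on $A_1$ in the stronger, uniform form $\sup_{\xi_{relay}\in[0,A_2]}\overline{Q}_{out}^*(A_1,\xi_{relay})/\overline{U}^*(A_1,\xi_{relay})\le\overline{q}$, which is what the argument actually consumes.
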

\begin{proof}
 The proof is similar to the proof of 
\cite[Lemma~$9$, Appendix~E, Section~C5]{chattopadhyay-etal15measurement-based-impromptu-deployment-arxiv-v1}.
\end{proof}

Now, by using similar arguments as in  
\cite[Appendix~E, Section~C5]{chattopadhyay-etal15measurement-based-impromptu-deployment-arxiv-v1} and using 
\cite[Theorem~$5.3.1$]{kushner-clark78SA-constrained-unconstrained}, 
We can show that the iterates $(\xi_{out}^{(k)}, \xi_{relay}^{(k)})$ will converge almost surely to 
the set of stationary points of the o.d.e.  
$(\dot{\xi}_{out}(t), \dot{\xi}_{relay}(t))=\overline{\Lambda}_{\mathcal{G}}\bigg(\frac{f_1(\xi_{out}(t),\xi_{relay}(t))}{\overline{U}^*(\xi_{out}(t),\xi_{relay}(t))},\frac{f_2(\xi_{out}(t),\xi_{relay}(t))}{\overline{U}^*(\xi_{out}(t),\xi_{relay}(t))} \bigg)$. 

Using this result and using 
Lemma~\ref{lemma:adaptive_learning_faster_timescale_convergence} and 
Lemma~\ref{lemma:stationary_point_is_optimal_point_adaptive_learning_proof}, we obtain that 
$(\underline{V}^{(k)},\xi_{out}^{(k)},\xi_{relay}^{(k)}) \rightarrow \mathcal{K}(\overline{q},\overline{N})$ 
almost surely, where $k\delta$ can be the distance from the sink or $k$ can be the index of a placed relay node 
(the result holds for both interpretations of $k$). This completes the proof of 
Theorem~\ref{theorem:convergence_OptAsYouGoAdaptiveLearning}. \qed

\end{document}